\newcommand{{\HC}}{{{Push \& Pull}}}
\begin{document}

\title{Push \& Pull: autonomous deployment of mobile sensors for a complete coverage
\thanks{Animations and the complete code of the proposed algorithm are available for download at the address
http://www.dsi.uniroma1.it/\small{$\sim$}novella/mobile\_sensors/}\\
\bigskip \emph{TECHNICAL REPORT}}

\date{}

\author{Novella Bartolini         \and
        Tiziana Calamoneri \and
       Emanuele Guido Fusco \and
    Annalisa Massini \and
      Simone Silvestri}
\institute{Department of Computer Science\\
"Sapienza" University of Rome, Italy\\
              \email{\normalsize \{bartolini, calamo, fusco, massini, simone.silvestri\}@di.uniroma1.it}}
 \maketitle

\pagestyle{plain} \pagenumbering{arabic}

\begin{abstract}
Mobile sensor networks are important for several strategic applications
devoted to monitoring critical areas. In such hostile scenarios, sensors cannot be deployed manually and are either sent from a safe location
or dropped from an aircraft.
Mobile devices permit a dynamic deployment reconfiguration that improves the coverage in terms of completeness and uniformity.

In this paper we propose a distributed algorithm for the autonomous deployment
of mobile sensors called {\HC}. According to our proposal, movement decisions are made by each
sensor on the basis of
locally available information
and do not require any prior knowledge of the operating conditions
or any manual tuning of key parameters.

We formally prove that, when a sufficient number of sensors are available, our approach guarantees
a complete and uniform
coverage.
Furthermore, we
demonstrate that the algorithm execution always
terminates preventing movement oscillations.

Numerous simulations show that our
algorithm reaches a complete coverage
within reasonable time with moderate energy consumption, even when the
target area has irregular shapes.
Performance comparisons between {\HC} and one of the most acknowledged algorithms
show how the former one can efficiently reach a more uniform and complete coverage
under a wide range of working scenarios.
\end{abstract}

\section{Introduction}

Research in the field of mobile wireless sensor networks is motivated by
the need to monitor hostile environments such as
wild fires, disaster areas, toxic regions or battlefields, where
static sensor deployment cannot be performed manually.

In these  working settings, sensors may
be dropped from an aircraft or
sent from a safe location.
Mobile sensors can dynamically adjust their position to improve the coverage with
respect to their initial deployment.

This paper addresses the problem of coordinating sensor movements to reach a more
satisfactory deployment in terms of coverage extension and uniformity.

Centralized solutions to this problem are  inefficient because they  require either a
prior assignment of sensors to positions, or
a starting topology that ensures the connectivity
of all sensors (for global coordination purposes).
On the one hand, a prior assignment is
inapplicable because it  requires an excessive
amount of movements
to deploy sensors independently  of their initial position.
On the other hand, connectivity  cannot be guaranteed in
any starting scenario.
Therefore, feasible and scalable solutions should employ a distributed
scheme according to which sensors make local
decisions to meet global objectives.

When designing solutions to the deployment problem, energy consumption is an important issue.
Indeed, due to the limited power available, each sensor should coordinate with others
with very few messages and should reach its position traversing small distances.
Energy consumption should also be controlled by uniformly placing redundant sensors when available.
In fact, a uniformly redundant coverage of the AoI
allows to prolong the network lifetime,
for example by allowing an alternative activation of sensors without any loss of coverage.
A redundant sensor placement has also several benefits
as it allows a better target sensing,
stronger environmental monitoring, and fault tolerance capabilities.

The main contribution of this paper is an original fully distributed algorithm for mobile sensor
deployment called {\HC}, which is radically different from any previous one.
Most of the existing approaches fall into one of two main categories,
as they are  either inspired by molecular physics \cite{Howard2002,Zou2003,Heo2005,Chen03,Suckme2004,Pac2006,Kerr2004,Chiasserini07} or
by computational geometry \cite{LaPorta06,Yang2007,LaPorta04,Tan08,LaPorta_Relocation}.
In general, they aim at reaching a final deployment which is similar to the one
targeted by our algorithm. Nevertheless,  the solutions inspired by physical models usually tend to
non-stable deployment, due to the dynamicity of the equilibrium that characterizes molecular systems.
Hence such solutions necessitate proper countermeasures to ensure a gradual decrease of movements.
On the other hand the approaches inspired by computational geometry are often unable to handle concave AoIs and lead to non-uniform deployments.

The design of our solution follows the grassroots approach \cite{Babaoglu2005} to autonomic computing.
Self-organization emerges without the need of external coordination or  human intervention
as the sensors autonomously adapt their position on the basis of a
local view of the surrounding scenario. This way our algorithm
shows the basic self-* properties of autonomic computing,
i.e. self-configuration, self-adaptation and self-healing.

This algorithm produces a hexagonal tiling by spreading
sensors out of high density regions and attracting them towards coverage holes.
Decisions regarding the behavior of each sensor are
based on locally available information and do not
require any prior knowledge of the operative
scenario or any manual tuning of key parameters.
Location awareness is only necessary in the case
of sensor deployment over a specific target area, whereas this capability is not required when sensors
are to be
deployed in an open environment.

We formally prove that our algorithm terminates and provides a complete coverage regardless of the particular shape of the AoI;
moreover, we propose a variant that exploits redundant sensors to produce a $k$-coverage, where $k$ depends on
the number of the available sensors and on the shape and extension of the AoI.

We ran numerous simulations to evaluate the performance of our algorithm
and compare it to existing solutions. Experimental results show that our algorithm
reaches a complete and stable coverage
within reasonable time with moderate energy consumption, even when the
target area has an irregular shape.
It turns out that our proposal provides better performance than one of the most acknowledged and
cited algorithms \cite{LaPorta06}.
Furthermore, our solution also outperforms  previous approaches producing a redundant coverage with guaranteed uniformity.

This paper is organized as follows.
In Section \ref{sec:algorithm} we
describe the \HC\ algorithm.
We devote Section \ref{sec:discussion} to a discussion on the implications of coverage uniformity on
fault tolerance and network lifetime. In this section we also propose an algorithm variant which privileges uniformity over other performance requirements.
In Section \ref{sec:properties} we formally prove some important properties of the final deployment, namely
termination, coverage completeness and uniformity.
The simulation analysis is shown in Section \ref{sec:exp_results}.
Section \ref{sec:related_work}
describes the state of the art,  while
Section
\ref{sec:conclusions} concludes the paper.

\section{The Push \& Pull algorithm} \label{sec:algorithm}

In order to make the exposition clearer, we  outline
the algorithm, before giving deeper details.
\subsection{The idea}
\label{sec:idea}

Sensors aim at realizing a complete and uniform coverage of the AoI by means of a hexagonal tiling.
Notice that the hexagonal tiling corresponds to a
triangular lattice arrangement, that is the one that guarantees
optimal coverage and density, as discussed in \cite{Brass2007}, and connectivity, as we detail in section \ref{sec:lattice}.
The algorithm starts with the concurrent creation of several tiling portions.
Every sensor not yet involved in the creation of a tiling portion gives start to
its own portion in an instant which is randomly selected in a given time interval.

In the following, when we talk about $s_\texttt{init}$
we refer, more in general, to any starter.

The algorithm mandates that four main activities are carried out in
an interleaved manner. The combination of the described activities expands the
tiling and, at the same time, does its best to
uniformly distribute redundant sensors over the
tiled area, preventing oscillations.

\paragraph{Snap activity.}

The sensor $s_\texttt{init}$  elects its position as the center of the first hexagon
of its tiling portion.
It selects at most six sensors
among those located within its {\em transmission
radius} $R_\texttt{tx}$ and makes them snap to the
center of adjacent hexagons.
Such deployed sensors, in turn, give start
to their own selection and snap activity,
thus expanding the boundary of the current tiling portion.
The sensors that are positioned in the center of a hexagon according to the snap activity,
are hereafter referred to as {\em snapped sensors}.
This activity continues until no other snaps are
possible, because either the whole AoI is covered,
or  the boundary tiles do not contain any unsnapped sensors.

\paragraph{Push activity.}
After the completion of their snapping activity, snapped sensors
may still be surrounded by non-snapped sensors located inside their hexagon, hereafter referred to as
their {\em slaves}.
In this case, they proactively push such
slaves towards lower density areas
located
within their transmission range.
Consequently, slaves being in overcrowded
areas migrate to low density zones, thus
accelerating the coverage process and enhancing its uniformity.
A snapped sensor stops the push activity when the maximum detected density difference does not
exceed one sensor.

\begin{figure*}[t]
\begin{center}
\begin{tabular}{c c c c c c}
{\scalebox{0.30}{
\includegraphics[]{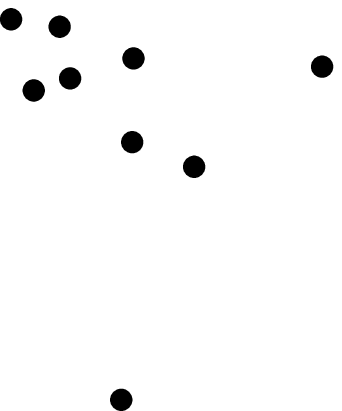}}}
&
{\scalebox{0.30}{
\includegraphics[]{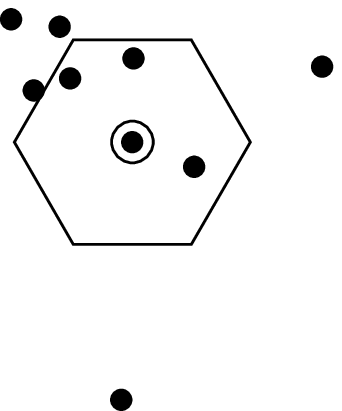}}}
&
{\scalebox{0.30}{
\includegraphics[]{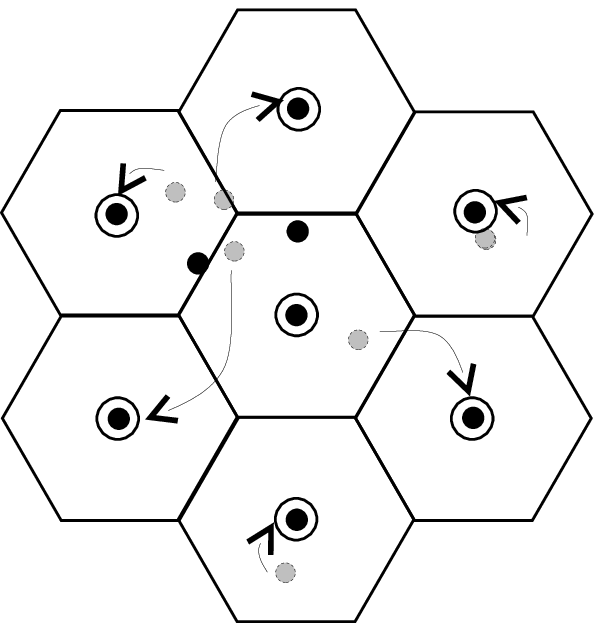}}}
&
{\scalebox{0.30}{
\includegraphics[]{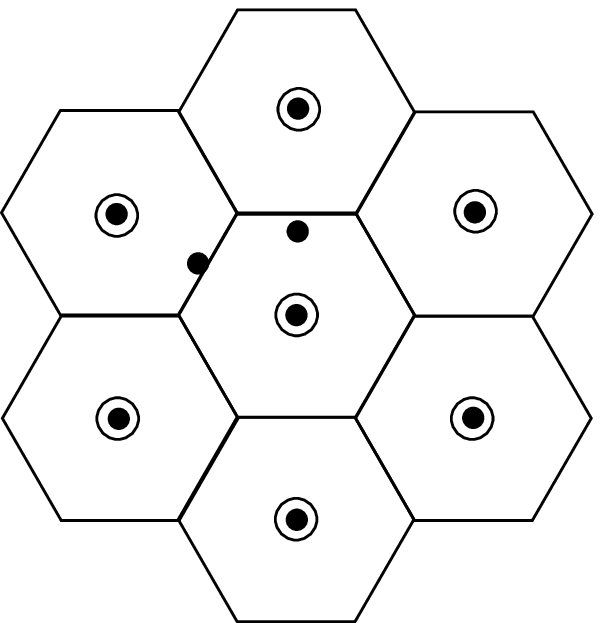}}}
&
{\scalebox{0.30}{
\includegraphics[]{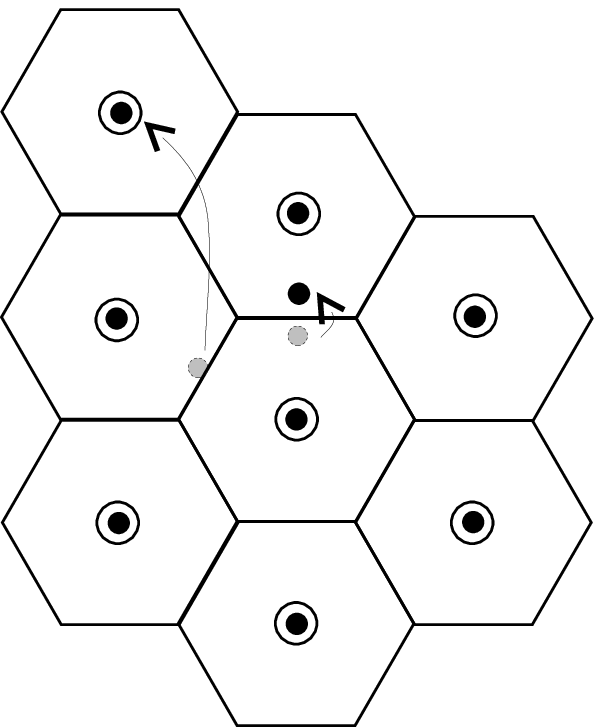}}}
&
{\scalebox{0.30}{
\includegraphics[]{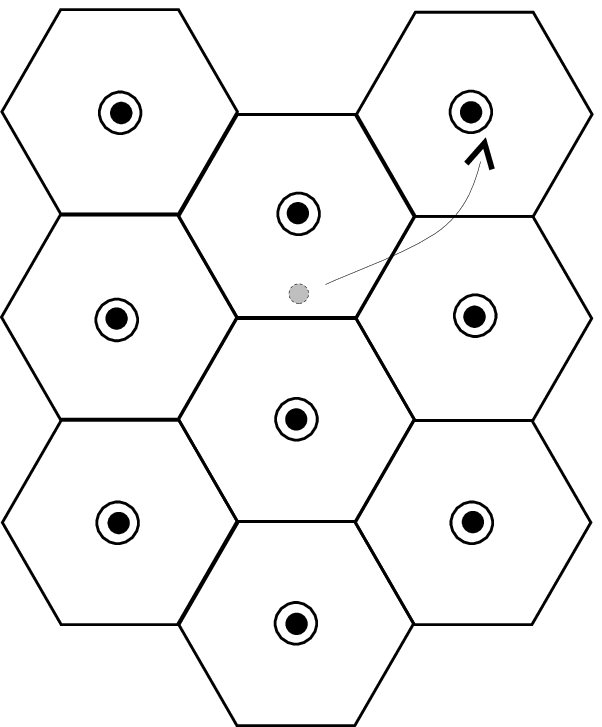}}}\\
~&
\\
(a)&(b)&(c)&(d)&(e)&(f)
\end{tabular}
\end{center}
\caption{An example of snap and push activities:
(a) starting configuration; (b) $s_\texttt{init}$ snaps itself at the center of the first tile; (c) $s_\texttt{init}$ selects six sensors to make them snap in the adjacent hexagons; (d) configuration after the snap activity of $s_\texttt{init}$; (e) $s_\texttt{init}$ pushes a sensor to a nearby hexagon, while a just deployed sensor gives rise to a new snap activity; (f) a snapped sensor causes the snap of the sensor that it has just received from the starter.
} \label{fig:example}
\end{figure*}

\paragraph{Pull activity.}
Snapped sensors may detect a coverage hole adjacent to their hexagon
and may not find available sensors to make them snap. In this
case, they send hole trigger messages,
and reactively attract non-snapped sensors and make them fill the hole.
Such sensors keep on advertising the presence of a hole until either the holesis filled or a
timeout occurs.

\paragraph{Tiling merge activity.}
The possibility that many sensors act as starters can give rise to
several tiling portions with different orientations.
In order to characterize and distinguish each
tiling portion, the time-stamp of each starter is
included in the header of all messages.
As a result, messages coming from sensors located in
different tiling portions will be characterized by
different starter time-stamps.
Our algorithm provides a mechanism to merge all
these tiling portions into a unique regular
and uniformly oriented tiling.
When the boundaries of two
tiling portions come in radio proximity with each
other, the one with older starter time-stamp
absorbs the other one by making its snapped
sensors move into more appropriate snapping
positions.

\medskip
Figure \ref{fig:example} shows an example of the
execution of the first two activities.
Namely, Figure \ref{fig:example}(a) depicts the starting
configuration, with nine randomly placed sensors
and Figure \ref{fig:example}(b) highlights
$s_\texttt{init}$ starting
the hexagonal tiling.
In Figure \ref{fig:example}(c)  the
starter sensor $s_\texttt{init}$ selects six sensors to make them snap in adjacent hexagons,
according to the minimum distance criterion.
Figure \ref{fig:example}(d) shows the
configuration after the snap activity of $s_\texttt{init}$.
In Figure
\ref{fig:example}(e), a just deployed sensor starts a
new snap activity while $s_\texttt{init}$ starts
the push activity sending a non-snapped sensor to a
lower density hexagon.
In Figure \ref{fig:example}(f) one of the deployed
sensors causes the snap of the sensor just received from the
starter,
thus leading to the final configuration.
Figure \ref{fig:grid_merge} shows an example of
the execution of the tiling merge activity.
In particular, Figure \ref{fig:grid_merge}(a) shows two tiling portions meeting each other.
The portion on the left has the oldest time-stamp, hence it absorbs the other one.
Two nodes of the right portion
detect the presence of an older tiling and abandon
 their original position (Figure \ref{fig:grid_merge}(b)) to honor snap commands
coming from a sensor of the left portion (Figure \ref{fig:grid_merge}(c)).
These just snapped sensors, now belonging to the older portion,  detect the presence of three
nodes belonging to the right portion (Figure \ref{fig:grid_merge}(d)) and make them snap as soon as they
leave their original tiling portion
(Figures \ref{fig:grid_merge}(e)-(f)).
\begin{figure}[h]
\centering
 \vspace{-0.6cm} \hspace{-.6cm}
\begin{tabular}{c c c}
\subfigure[]{\scalebox{0.17}{
\includegraphics[]{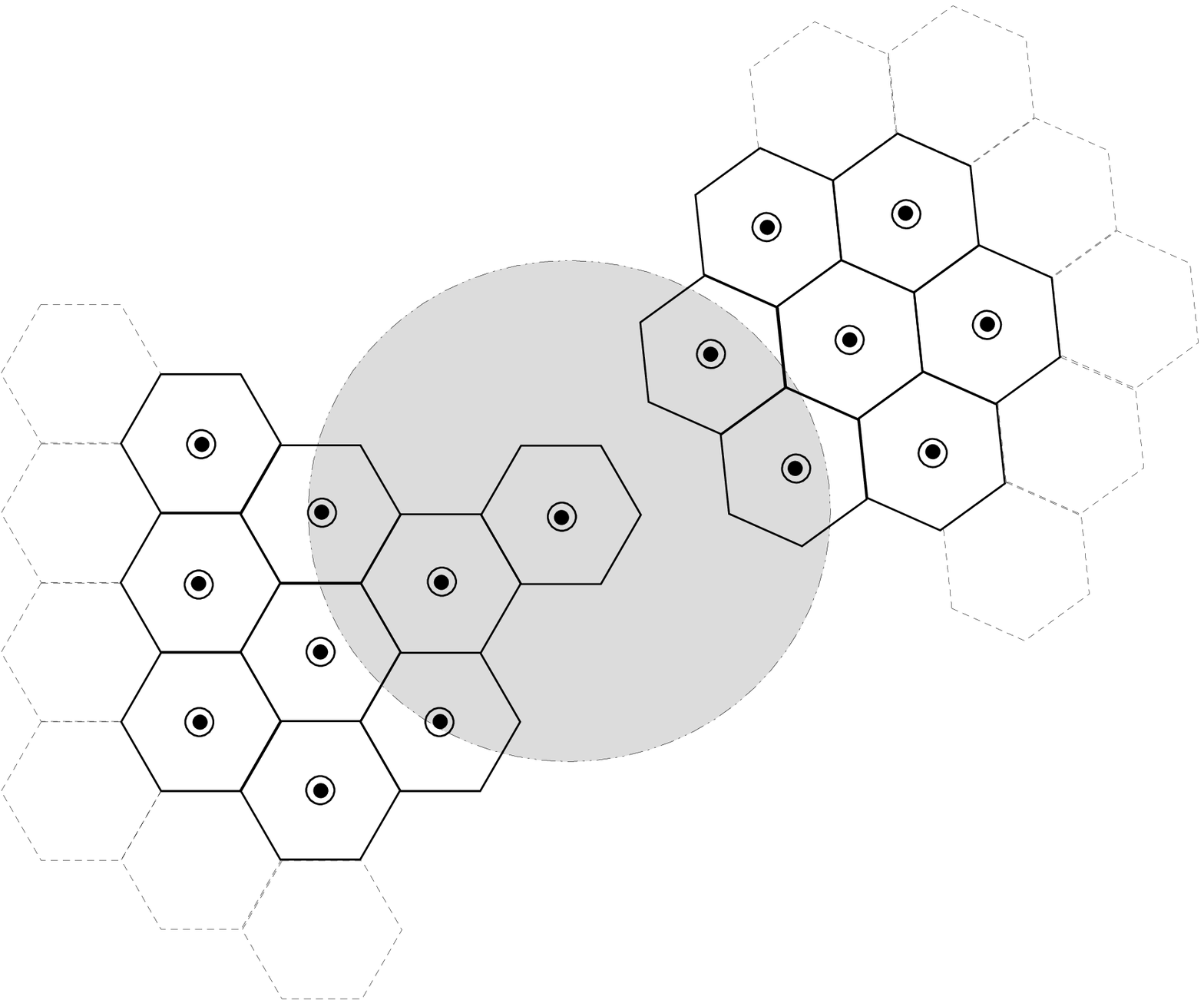}}}
& \subfigure[]{\scalebox{0.17}{
\includegraphics[]{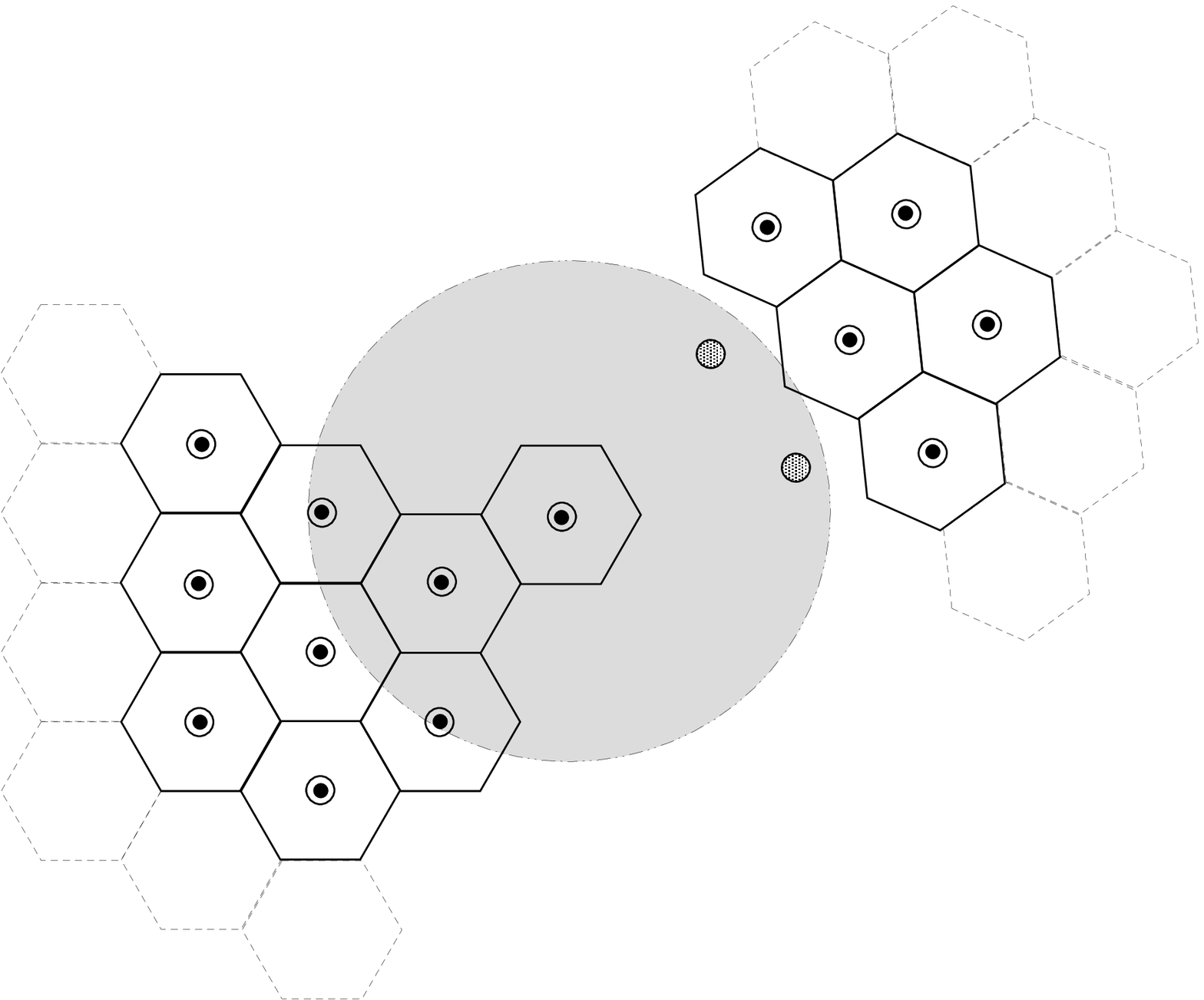}}}
& \subfigure[]{\scalebox{0.17} {\includegraphics[]{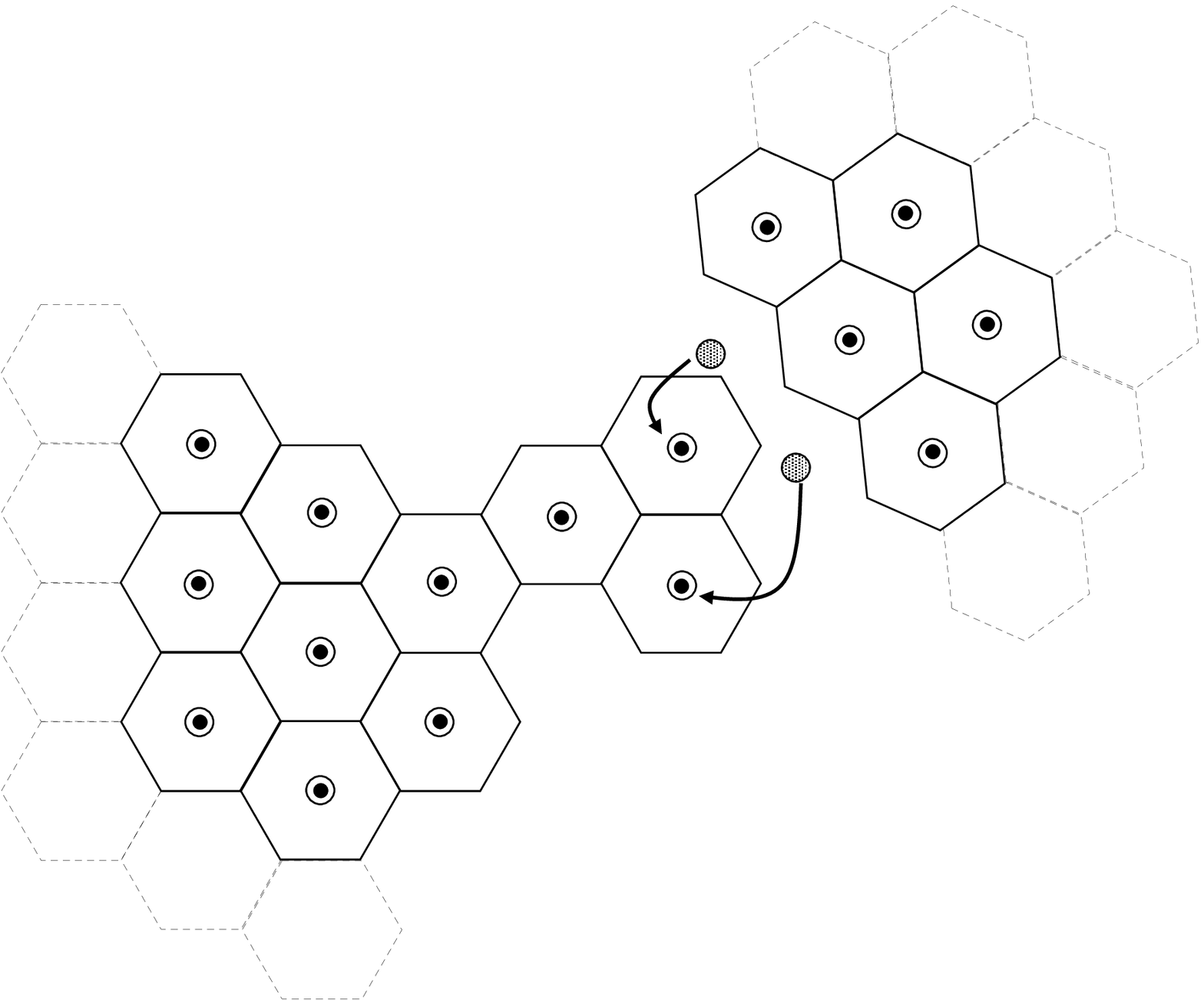}}}
\vspace{-0.8cm}
\\
\subfigure[]{\scalebox{0.17}{
\includegraphics[]{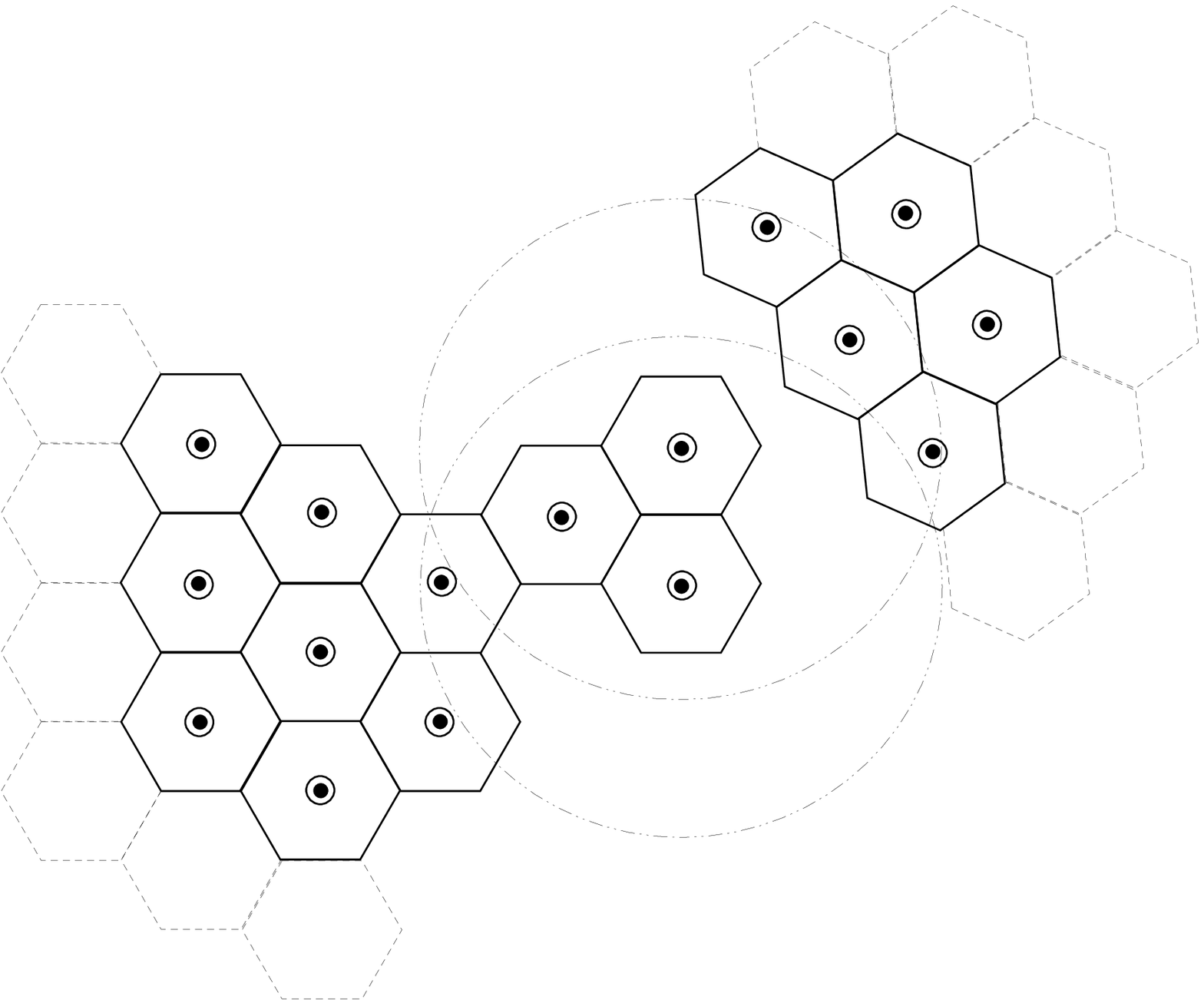}}}
& \subfigure[]{\scalebox{0.17}{
\includegraphics[]{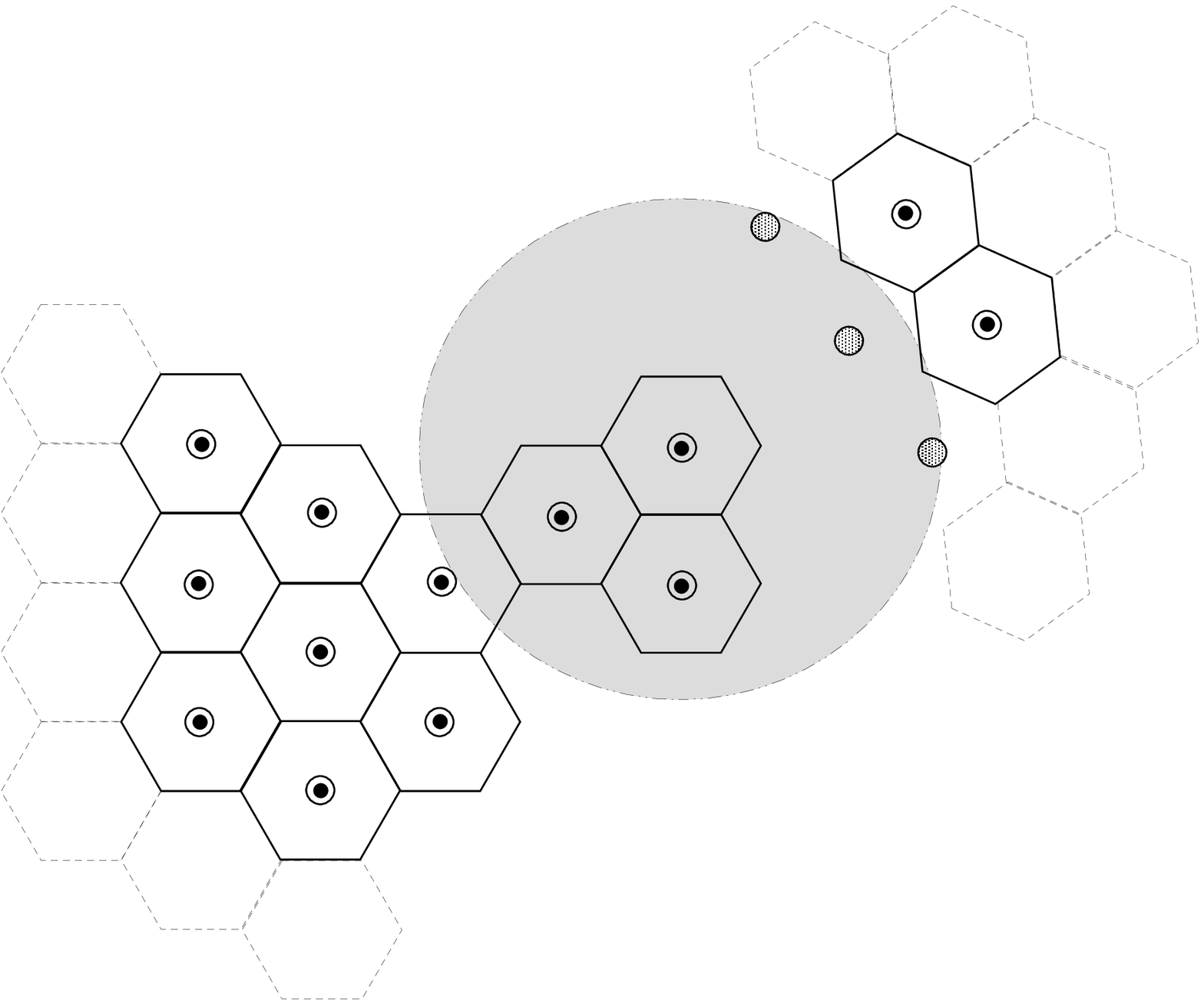}}}
&\subfigure[]{\scalebox{0.17}{
\includegraphics[]{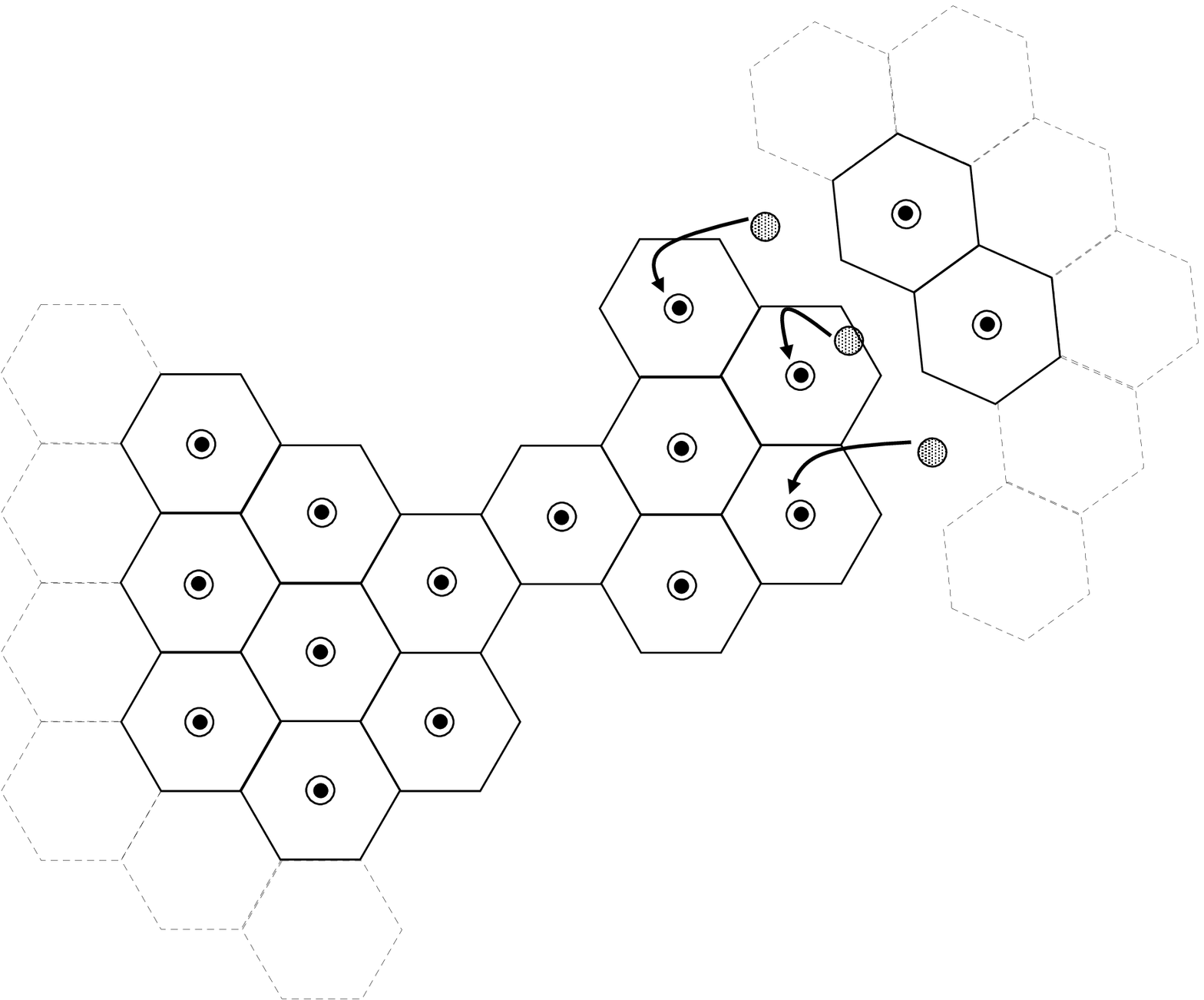}}}
\end{tabular}
\caption{An example of tiling merge activity: (a) two tiling portions meet each other (the one on the left has the oldest time-stamp); (b) two nodes of the right portion
detect the presence of the older portion;  (c) the two nodes abandon
 their original portion and are snapped to new positions in the older portion; (d) these just snapped sensors detect the presence of three
nodes belonging to the right portion and (e-f) make them snap.} \label{fig:grid_merge}
\end{figure}

We defer the introduction of the example
regarding the pull activity to the next section
 when more details will be
available to clarify the explanation.

\subsection{Details of {\HC}} \label{sec:details}
In order to describe the algorithm in more detail, we give some definitions and specify the
operative setting.

Let $V$ be a set of equally equipped sensors able to determine their own location,
endowed with boolean sensing capabilities.
We adopt an isotropic communication model and  assume that sensors are
in active mode for all the deployment phase.
We set the hexagon side length $l_\texttt{h}$ to
the {\em sensing radius} $R_\texttt{s}$.
This setting guarantees both coverage and connectivity
when $R_\texttt{tx} \geq \sqrt{3}R_\texttt{s}$. This requirement is  not restrictive as
most wireless devices can adjust their transmission range by properly setting their transmission power.

All sensors that are neither snapped nor slaves are called {\em free}.
Given a sensor $x$, snapped to the center of a
hexagon, we denote by $S(x)$
the set of slaves of $x$
and by $Hex(x)$  the hexagonal region whose center is covered by $x$.
We define $L(x)$ the set composed by the
sensors located in radio proximity from  $x$ (i.e. the free sensors in radio proximity from $x$
the slaves $S(x)$). We also refer to $VP(x)$ (vacant positions)  as to
the set of positions detected by the sensor $x$ at the center of the hexagons adjacent to $Hex(x)$ that are not yet occupied by any snapped sensor.

We now give additional details on the activities sketched in Section \ref{sec:idea}.
\paragraph{Snap activity.}

At the beginning of the deployment process, each sensor may act as starter of a snap activity from
its initial location at an instant randomly chosen over a given time interval.
In order to propagate a tiling portion, a snapped
sensor $x$ performs a {\em neighbor
discovery},
that allows $x$ to gather
information regarding $S(x)$ and all the free and
snapped sensors located in
radio proximity from $x$ and the positions belonging to $VP(x)$.
To give start to new snap activities, $x$ selects
the sensor in  $L(x)$ which is the closest to
each uncovered position and snap it there.
A snapped sensor leads the snapping of as many
adjacent hexagons as possible and gives start to
the push activity,
 as described in Figure
\ref{fig:snap_flow}.

If some of the positions in $VP(x)$ cannot be covered
because  $L(x)$ does not contain enough sensors,
$x$ starts the pull activity.
If otherwise all the hexagons adjacent to $Hex(x)$ have
been covered and $VP(x)=\emptyset$, $x$ stops any further snapping,
 and uses the available slaves (if any) to give start to the
the push activity.

Figure \ref{fig:flow_chart_extended} shows a detailed flow chart of the Snap, Push and Pull activities, in agreement with the underlying coordination protocol which is described
in \cite{ICNP2008}. In this figure, for clarity, we denote with $G(x)$ the set of snapped sensors
located in hexagons adjacent to $Hex(x)$.

\begin{figure}[t]
\begin{center}
\scalebox{0.35}{\includegraphics[]{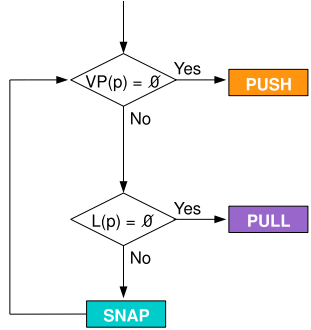}}
\end{center}
\caption{Behavior of the snapped sensor $p$.} \label{fig:snap_flow}
\end{figure}


\begin{figure}[t]
\centering
 \hspace{-0.3cm}
\scalebox{0.45}{\includegraphics[]{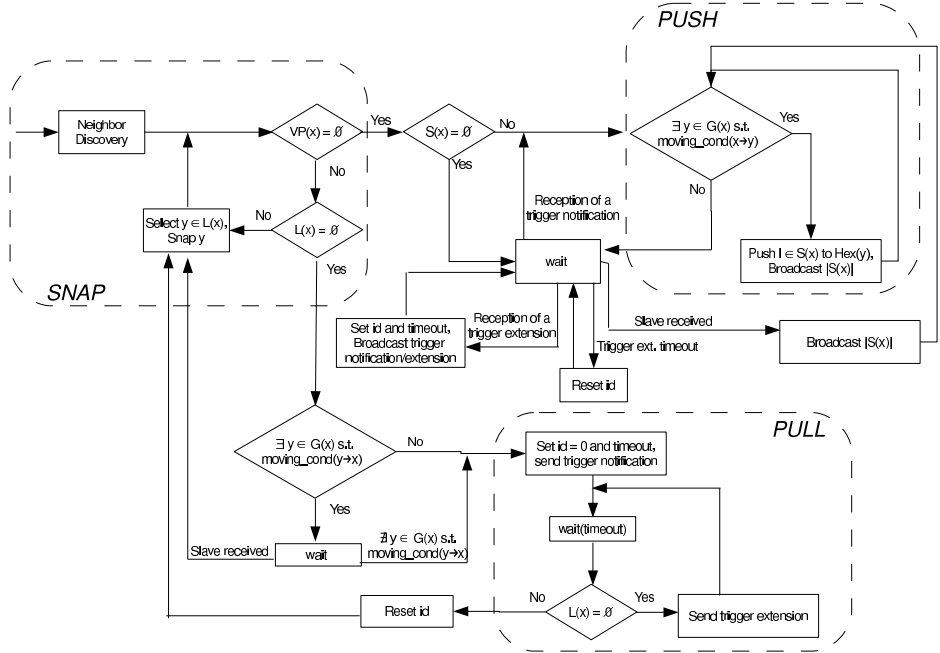}}
\caption{A detailed flow chart of the Snap, Push and Pull activities.} \label{fig:flow_chart_extended}
\end{figure}

\begin{figure*}[t]
\begin{center}
\begin{tabular}{c c c}
\subfigure[]{\scalebox{0.3}{
\includegraphics[]{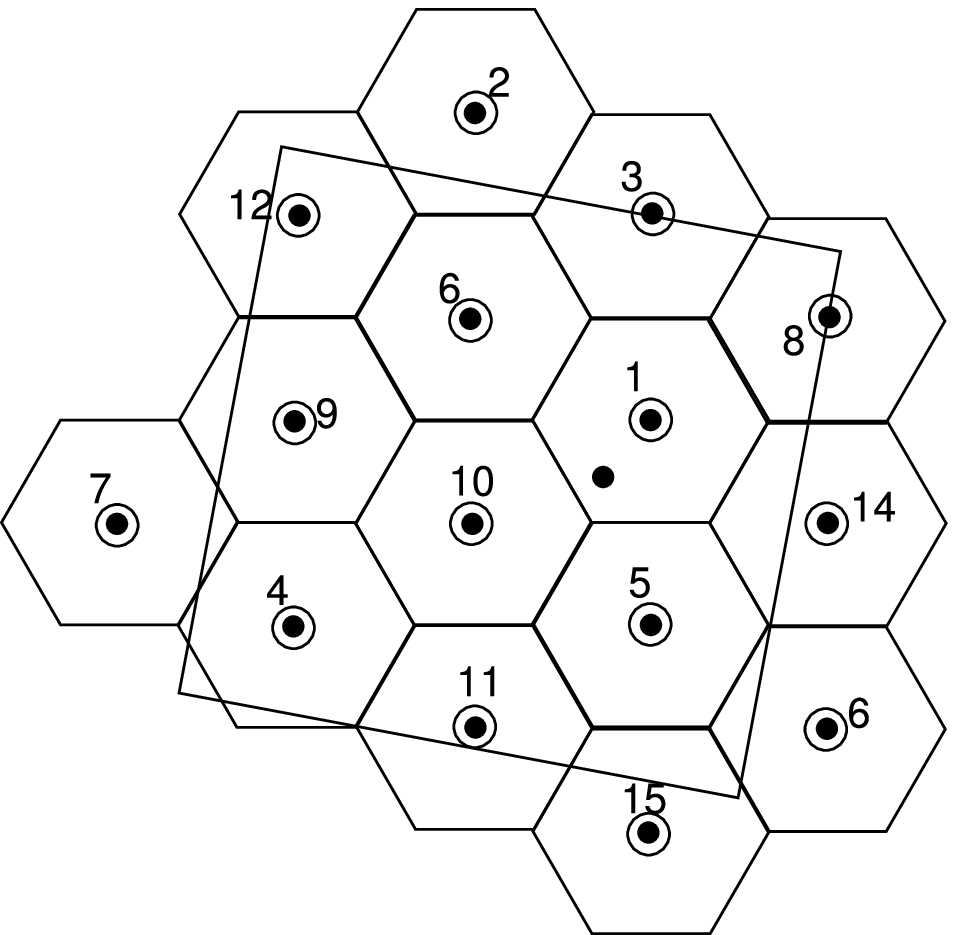}}}
&
\subfigure[]{\scalebox{0.3}{
\includegraphics[]{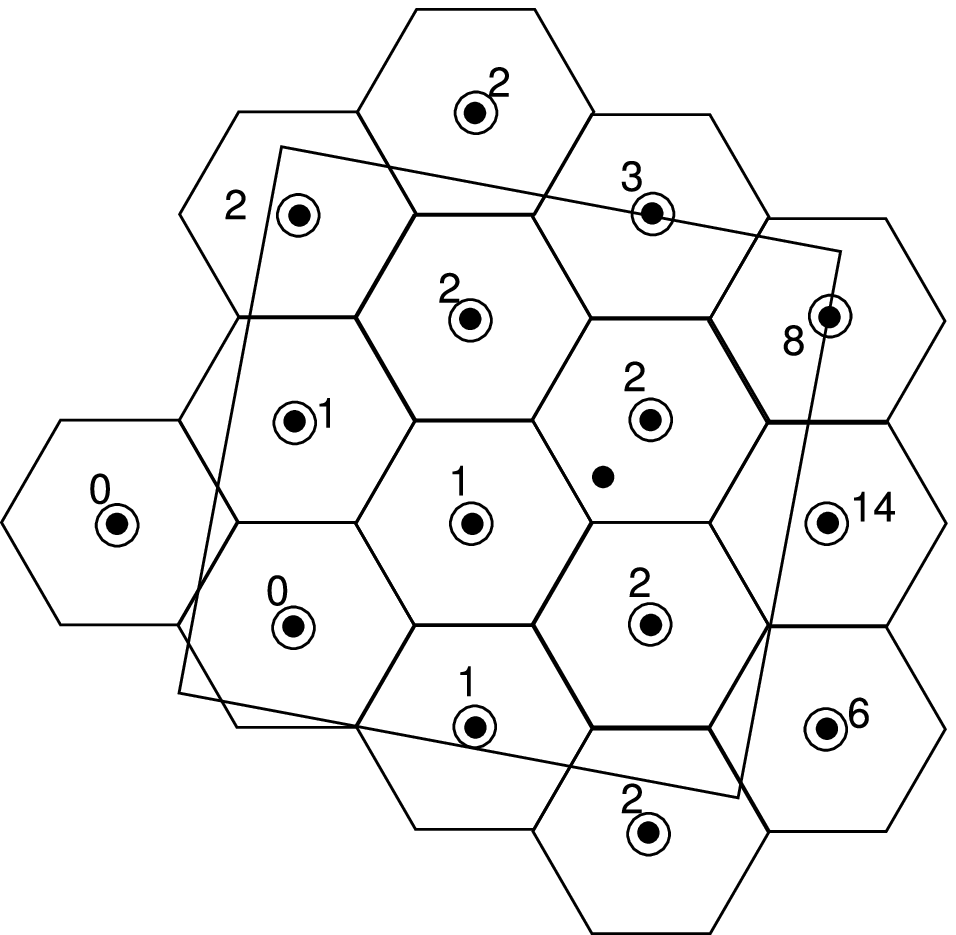}}}
&
\subfigure[]{\scalebox{0.3}{
\includegraphics[]{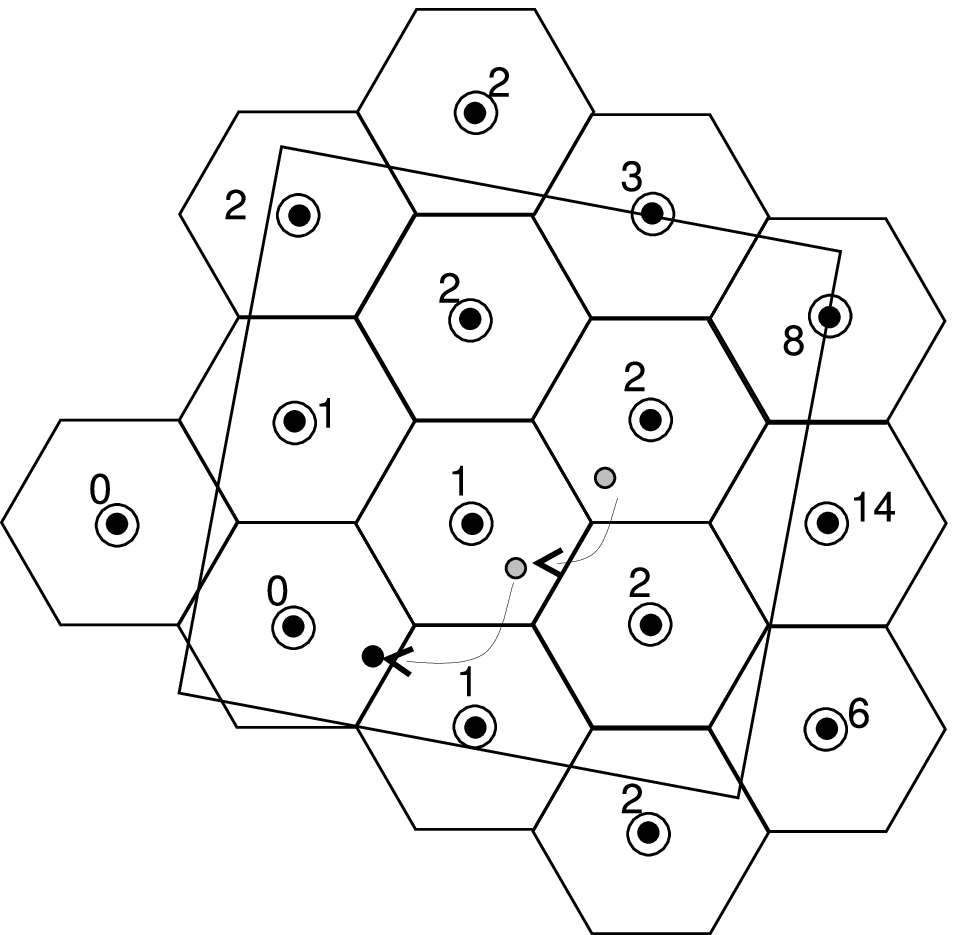}}}
\end{tabular}
\begin{tabular}{c c}
\subfigure[]{
\scalebox{0.3}{\includegraphics[]{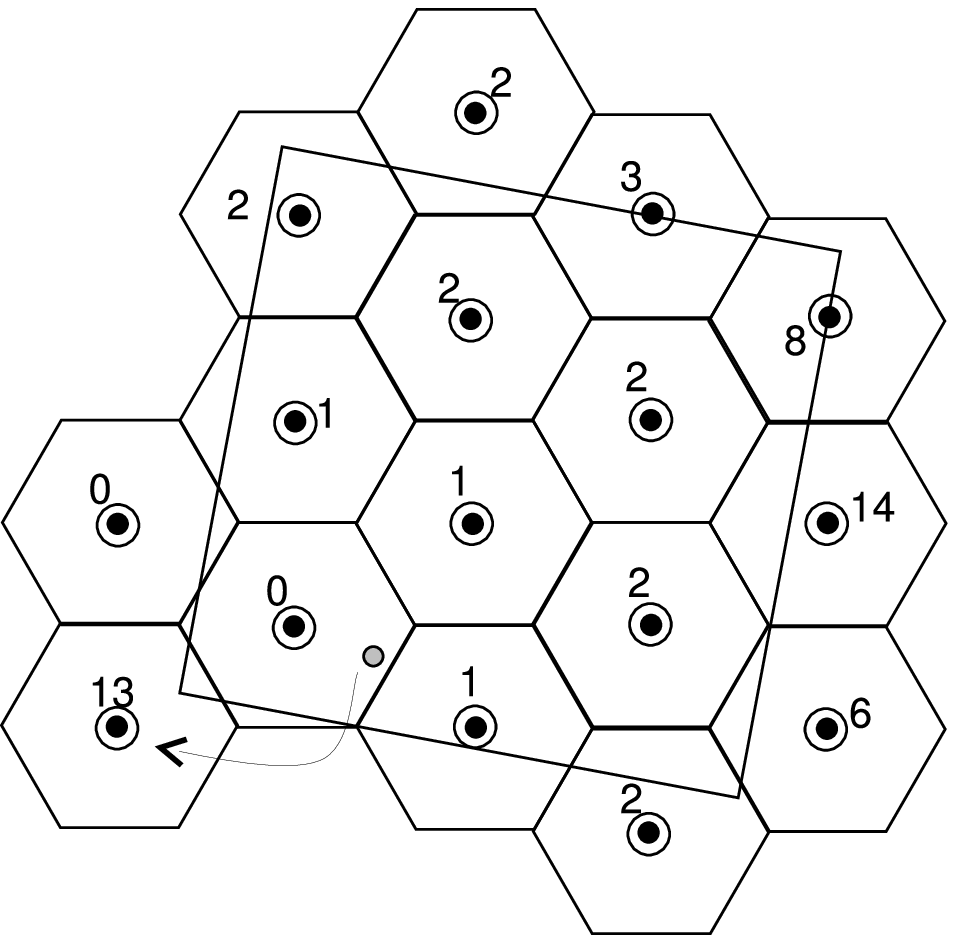}}}
&
\subfigure[]{\scalebox{0.3}{
\includegraphics[]{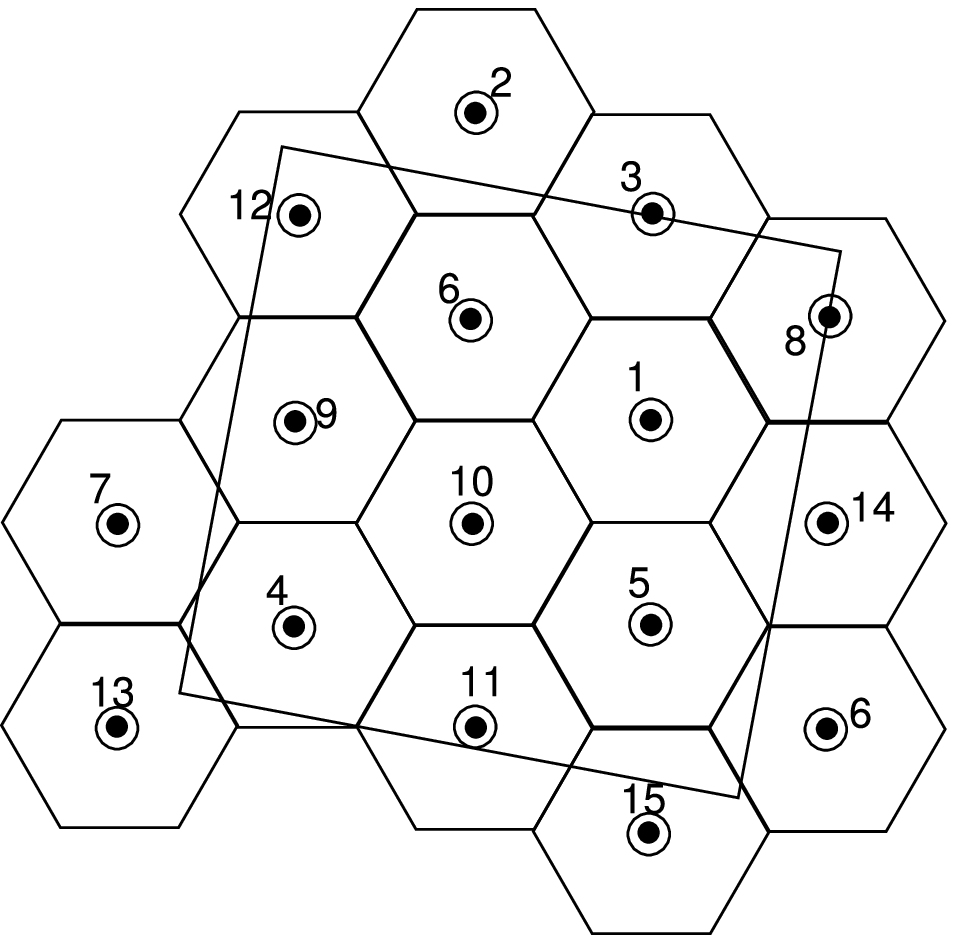}}}
\end{tabular}
\end{center}
\caption{An example of the pull activity: (a) a tiled AoI with a coverage hole in the bottom left corner;
(b) the two snapped nodes detecting the hole set their $id$ to 0 and send a trigger notification message that is propagated by their neighbors which  modify their $id$;
(c) the closest available slave moves towards the hole;
(d) the hole is covered and (e)
all $id$s are set back to the previous values.} \label{fig:pull_action}
\end{figure*}

\paragraph{Push activity.}

Given two snapped sensors $x$ and $y$ located in
radio proximity to each other,
$x$ may offer one of its slaves to $y$ and push
it inside its hexagon if $|S(x)| \geq
|S(y)|+1$.
Note that, if $|S(x)| = |S(y)|+1$,
the flow of a sensor from $Hex(x)$ to $Hex(y)$
leads to a symmetric situation in which  $|S(y)|
= |S(x)|+1$
possibly causing  endless cycles.
In such cases we restrict the push activity to
only one direction: $x$ pushes its slave to $y$
only if
$id(y)<id(x)$, where $id(\cdot)$ is a function
initially set to the unique identity code of the
sensor radio device (notice that this is not the only possibility, $id(\cdot)$
could be set for example to a random non negative value).
We formalize these observations by defining the
following {\em Moving Condition}, that enables
the movement of a sensor from $Hex(x)$ to
$Hex(y)$:
$$\{|S(x)| > |S(y)|+1 \} \textrm{ }\vee \textrm{ }
 \{ |S(x)| = |S(y)|+1  \textrm{ } \wedge \textrm{
 } id(x)>id(y) \}.
 $$

The snapped sensor $x$ executes a push action by sending one of its slaves $s$
towards the hexagon of a snapped sensor
$y$.

The destination hexagon $Hex(y)$ is selected such that $x$ verifies the moving condition with respect to $y$.
In particular, as destination of the push action, $x$ selects the closest hexagon among those with the lowest number of slaves.
Among the sensors which can be pushed to the destination, $x$ selects the closest to $Hex(y)$.

If a snapped sensor receives a neighbor discovery
request while involved in a push activity, it
replies as
if the ongoing movements were already concluded.
Indeed, if a snapped sensor communicated its own
number of
slaves without keeping into account the ongoing
movements, it could cause inconsistencies
(for example either too many sensors could move
to the same hexagon or the same sensor could be
offered to several snapped sensors).
The snapped sensors involved in a push activity
always alert their neighborhood of the
changed number of slaves.

\paragraph{Pull activity.}

The sole snap and push activities are not sufficient to
ensure the maximum expansion of the tiling.
This may happen when there exists a direction in
which the density decreases of at most one
sensor at a time, and the Moving Condition is false due to the order relationship induced by function
$id(\cdot)$.
The same problem may cause also non-uniform coverage.
For this reason, we introduce the pull activity
that makes use of a {\em trigger mechanism} when
some holes occur.
Namely, let $x$ be a snapped node detecting a hole in an adjacent hexagon, with $S(x)=\emptyset$.
If $x$ has not the possibility to receive any
slave from its neighbor hexagons, i.e. the
Moving Condition is not verified for any of them,
then it activates the following trigger mechanism.
Sensor $x$ temporarily alters the value of its $id$ function
to 0 and notifies its neighbors of this change by
means of a {\em trigger notification message}.
This could be sufficient to make the Moving
Condition true with at least a snapped neighbor,
so $x$ waits until either a new slave comes into
its hexagon or a timeout occurs.
If a new slave enters  $Hex(x)$, $x$ sets back
its  $id$ value and snaps the new sensor,
thus filling the hole.
If the timeout expires and the hole has not been
covered yet, $x$
extends the trigger to its adjacent hexagons by sending them a
a {\em trigger extension message}. As a consequence, the snapped neighbors of $x$ set
their  $id$ value to 1 and send the
related trigger notification message.
This mechanism is iterated by $x$ over snapped
sensors at larger and larger distance in the tiling
until the hole is covered.
Each snapped sensor
involved in the trigger extension mechanism sets
its  $id$ to a value that is
proportional to the distance from $x$. All the
timeouts related to each new extension are set
proportionally to the maximum distance reached by
the trigger mechanism.
At this point, as a consequence of timeouts, each
involved node sets back its  $id$ to the
original value.
In order to better detail the trigger mechanism, we show the following example.
Figure \ref{fig:pull_action}(a) shows a tiled AoI with a coverage hole in the bottom left corner.
Snapped nodes detecting the hole set their $id$ to 0 and send a trigger notification message.
As their neighbors do not have slaves, they need to send a trigger extension message,
provoking a propagation of the $id$ modification  (see Figure \ref{fig:pull_action}(b)).
As soon as the unique snapped sensor with a slave alters its $id$ to honor the trigger mechanism,
the Moving Condition is satisfied and therefore the slave is pushed towards a snapped sensor
that is closer  to the hole, as shown in Figure \ref{fig:pull_action}(c).
In Figure \ref{fig:pull_action}(d) the hole coverage is highlighted and, after the timeouts expire,
all $id$s are set back to the previous values (Figure \ref{fig:pull_action}(e)).

It should be noted that more snapped nodes adjacent to the
same hole may independently activate the trigger
mechanism, possibly at different times.
In this case, if a node receives a trigger
extension message from two or more nodes, it
honors only the one with the lowest  $id$.
The detection of several holes may cause the same
node $y$ to receive several trigger
extension messages. These are stored in a
pre-emptive priority queue, privileging
the messages related to the closest hole.

\paragraph{Tiling merge activity.}

If several sensors act as starters, several tiling portions  can be generated with different orientations.
By contrast, our algorithm aims to cover the AoI with
a perfectly regular tiling thus minimizing overlaps of the sensing disks and
enabling a complete and uniform coverage.
Hence, we design a merge mechanism according to
which as soon as a sensor $x$ receives a neighbor
discovery message from another  tiling portion it joins the
oldest one (it discriminates this situation by evaluating the
time-stamp of the starter action).
It should be noted that the detection of the sole neighbor
discovery messages is sufficient to ignite the
tiling merge activity because such messages are
sent after any tiling expansion and, if two tiling
portions come in radio proximity, at least one of
them is increasing its extension.
In the following, we call $G_\texttt{old}$ and
$G_\texttt{new}$ the tiling portions with lower and
higher time-stamp, respectively.
We distinguish three possible cases:\\
1) $x$ belongs to $G_\texttt{new}$:
if $x$ is a slave, sensor $x$ switches its state
to free and communicates its new state to the
neighborhood. From now on $x$ will honor only the
messages coming from $G_\texttt{old}$ and will ignore
those from $G_\texttt{new}$.
This proactive communication is needed to
advertise the presence of $G_\texttt{new}$ when
there is no message exchange within
$G_\texttt{new}$ perceivable by the sensors in
$G_\texttt{old}$.
This way, the snapped sensor to which $x$ belonged,
can properly update its slave set.
If $x$ is instead a snapped sensor, it cannot
immediately switch its state to free because
of its leading role inside $G_\texttt{new}$ (e.g.
it leads the slave sensors in $S(x)$ and performs
push and pull activities).
Hence, $x$ temporarily assumes a hybrid role: it declares itself
as free to the nodes of $G_\texttt{old}$ and, at
the same time, acts as a
snapped sensor in $G_\texttt{new}$ until it
receives a snap command coming from
$G_\texttt{old}$.
After the reception of such a snap command, $x$
moves to the new snap position and elects one of
its slave as a substitute. If no slave is
available, $x$ advertises  its departure to its neighbors in
$G_\texttt{new}$.\\
2) $x$ belongs to $G_\texttt{old}$:
if $x$ is a slave, it ignores all messages from $G_\texttt{new}$.
If $x$ is snapped, it performs a neighbor discovery,
ignores all messages coming from $G_\texttt{new}$ (apart from the neighbor discovery replies) and honors only messages
from $G_\texttt{old}$.
Observe that the neighbor discovery is necessary to ignite the merge mechanism.
The neighbor discovery allows each snapped sensor
in  $G_\texttt{old}$ to collect complete
information about nearby sensors that previously
belonged to  $G_\texttt{new}$.\\
3) $x$ is free:
the sensor $x$ honors only messages from
$G_\texttt{old}$ and ignores those from
$G_\texttt{new}$.

\subsection{Balancing energy consumption}

According to the previous description of {\HC},
slaves consume more energy than snapped sensors,
because they are
involved in a larger number of message exchanges and movements.
We introduce a {\em mechanism to balance the energy
consumption} over the set of available sensors
making them exchange
their roles. This mechanism is similar to the technique of
{\em cascaded movements} introduced in \cite{LaPorta_Relocation}.
Namely, any time a slave has to make a movement across a
hexagon as a consequence of either push or pull
activities, it evaluates the opportunity to
substitute itself
with the snapped sensor of the hexagon it is traversing.
The criterion at the basis of this mechanism is that
two sensors exchange their role whenever the energy imbalance is reduced.
As a result, the energy balance is significantly enhanced, though
the role exchange has a small cost for both the slave
and the snapped sensor involved in the
substitution.
Indeed, the slave sensor has to reach the center of the current
hexagon and perform a {\em profile packet} exchange
with the snapped sensor that has to move towards
the destination of the slave. A profile packet
contains the key information needed by a sensor
to perform
its new role after a substitution.

\subsection{The sensor coordination protocol}
The implementation of our algorithm requires the definition of a protocol for
the coordination of activities among locally communicating sensors.
The coordination protocol provides the rules to
solve
contentions that may happen in several cases.
 For
example, two or more snapped sensors can decide
to issue
a snap command to more than one sensor towards
the same hexagon tile or a low density hexagon
can be selected
by several snapped sensors as candidate for receiving redundant slaves.
These contentions are solved by properly
scheduling actions according to message
time-stamps and by advertising
related decisions as soon as they are made.
This  protocol is designed to minimize energy
consumption in terms of  message
exchanges,
which is possible because the algorithm decisions
are only based on a small amount of local
information.
Furthermore, we assume that the  protocol of \HC\ is implemented over a
communication protocol stack which handles possible errors and losses that may occur
on the radio channels by means of timeout and retransmission mechanisms.
We do not give any
further detail on the protocol underlying {\HC} as it is beyond the scope of this paper. The interested
reader can refer to \cite{ICNP2008}.

\begin{figure}[h]
\centering
\scalebox{0.20}{
\includegraphics[]{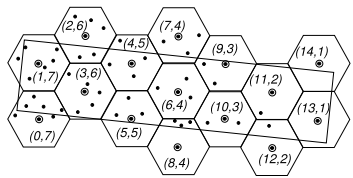}}
\caption{Local formation of a stairwise density distribution.} \label{fig:scaletta}
\end{figure}

\section{A discussion on uniformity implications} \label{sec:discussion}

The execution of \HC\ guarantees coverage uniformity only if the number of available sensors is
exactly the minimum for the given orientation of the final grid.
If redundant sensors are available, their movements are regulated by the moving condition, that
precludes the flow of redundant sensors from high density to low density hexagons if the difference
between the local densities is only of one sensor.
This may cause local formation of a stairwise density distribution when the order function is monotonically
increasing in the precluded flow direction. An example of such situations is depicted in Figure \ref{fig:scaletta}, where for each
hexagon the elements of the pairs represent the order value and the number of sensors, respectively.

The length of these formations is usually very short due to the random distribution of the order value over the set of sensors.
The worst case, albeit improbable, happens when a stairwise distribution is as long as the diameter of the AoI.

In order to guarantee the uniformity of the sensor deployment even in the presence of redundant sensors,
we introduce a {\em shrinked grid mode} as a variant of the \HC\ algorithm. From now on,
we will refer to the basic version with the name PP1 and to the shrinked grid mode with the name PP2.
In Section \ref{sec:properties}
we prove that PP2  enables a uniformly redundant coverage, and we provide metrics and related formulas
to calculate the guaranteed redundancy level.

In order to formally describe such mode we introduce
the following definition: the {\em tight number of sensors} is
the maximum number of
hexagons of side length $l_\texttt{h}$ necessary to cover the AoI for each
possible initial position of the sensor set
and each possible tiling orientation.
This number represents the maximum number of sensors that can be necessary
to cover the AoI with a hexagonal tiling of side $l_\texttt{h}$, regardless of the position and orientation of the grid.

We denote this number by $N_\texttt{tight}(l_\texttt{h},\textrm{AoI})$, for short
$N_\texttt{tight}(l_\texttt{h})$, as the AoI is clear from the context.
An upper bound on this number can be
calculated by increasing the AoI
with a border whose width is the maximal diameter
of the tiling hexagon that is $2l_\texttt{h}$ and dividing the
area of such a region (call it AoI'$(l_\texttt{h})$) by the hexagon area.
Formally:
\begin{equation}
N_\texttt{tight}(l_\texttt{h}) \leq \left\lceil
\frac{\textrm{Area(AoI'}(l_\texttt{h}) \textrm{)}}{(3 \sqrt{3}/2)l_\texttt{h}^2}
\right\rceil
\label{eq:n_tight}
\end{equation}

It should be noted that this upper bound is valid in the general case but its calculation  can be improved
if the AoI has a particularly regular shape.

PP2 is executed with a shorter hexagon
side length. Namely, $l_\texttt{h}$ is set to reduce as much as possible the number of slave sensors in the
whole deployment, and is calculated as the value that makes the number of sensors equal to the tight
number for that side length, and therefore is the inverse function of $N_\texttt{tight}(\cdot)$, calculated in $N$, where $N$ is the number of sensors.
More formally, $l_\texttt{h}=N_\texttt{tight}^{-1}(N)$.
Since function $N_\texttt{tight}(\cdot)$ is not known, we calculate an upper bound on $l_\texttt{h}$ as the inverse of the upper bound on $N_\texttt{tight}(l_\texttt{h})$, because $N_\texttt{tight}(\cdot)$ is
a decreasing function of $l_\texttt{h}$.

PP1 and PP2 produce sensor deployments
with different performance in terms of energy consumption and fault tolerance.
The choice between them depends on the particular application requirements, as discussed below here.

In terms of energy consumption, PP2 performs
worse than PP1, as we will show and motivate in
Section \ref{sec:exp_results}. Nevertheless, as it guarantees a
uniformly redundant coverage, it makes possible the use of topology
control  algorithms \cite{Poduri2007} that permit selective sensor
activation saving energy during the operative phase, which, in turn, follows
the deployment phase of the network. Moreover, this mode is beneficial
when the application requires enhanced environmental
monitoring  and strong fault-tolerance capabilities. From the fault
tolerance point of view, PP1 may be endowed with a
periodic polling scheme to detect new possible coverage holes
determined by sensor failures. This way, the detection of new holes
causes the restart of the algorithm and the execution of the pull
activity that attracts redundant sensors possibly located far from
the coverage hole.
Hence, PP1 presents
self-healing properties which are not found in previous solutions.
 An example of such a mechanism is shown in Figure
\ref{fig:bomba_pp1}. Figure \ref{fig:bomba_pp1} (a) shows the deployment achieved
after the application of PP1. Figure \ref{fig:bomba_pp1} (b) shows a subsequent situation in which
a certain number of sensors failed, creating a coverage hole.
The presence of such a coverage hole is detected by the nearby sensors, which give start to the pull activity,
attracting some redundant sensors located in higher density areas. Figure \ref{fig:bomba_pp1} (c) shows an
intermediate situation, before the redundant sensors succeed in covering the hole, as shown in Figure \ref{fig:bomba_pp1} (d).

Instead, PP2 can tolerate several failures, even closely located, in a number which is proportional to the redundancy level. By contrast, PP2 is not able to fill newly detected holes, because (almost) all sensors
are snapped and do not take part in the movements determined by the pull activity.
For this reason we do not introduce any polling mechanism in PP2, as there are too few slaves available and it would produce
an inefficient pull activity in the case of a hole detection.
On the other hand such version guarantees a uniform redundant coverage, if a sufficient number of sensors are
available, tolerating even numerous sensor failures, as shown in Figure \ref{fig:fault_pp2}, which depicts the
occurrence of several co-located sensor failures without any loss of coverage.

\begin{figure}
\centering
\vspace{1cm} \subfigure[]{\scalebox{0.027}{ \centering
\includegraphics[]{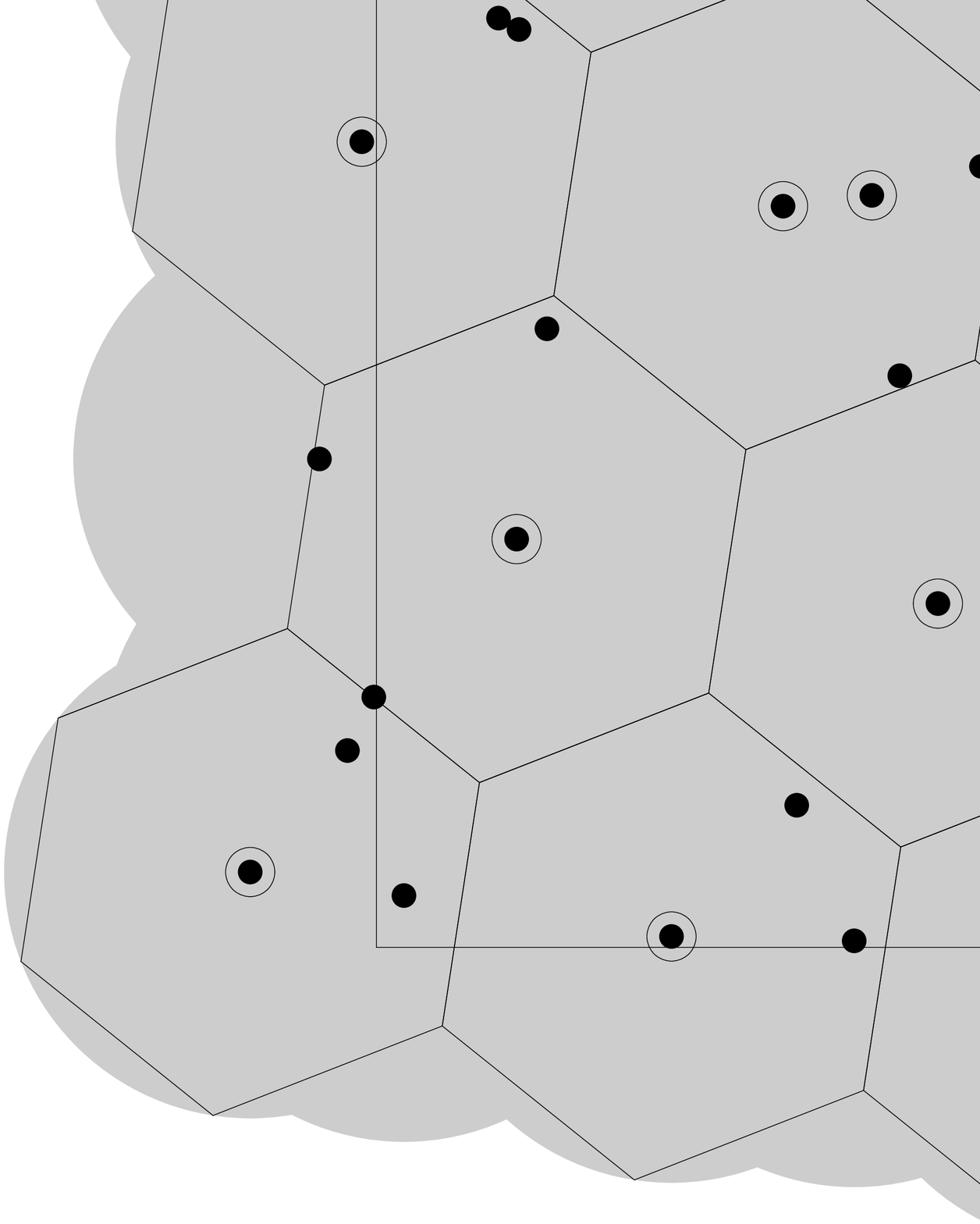}}}
\hspace{2cm} \subfigure[]{\scalebox{0.027}{ \centering
\includegraphics[]{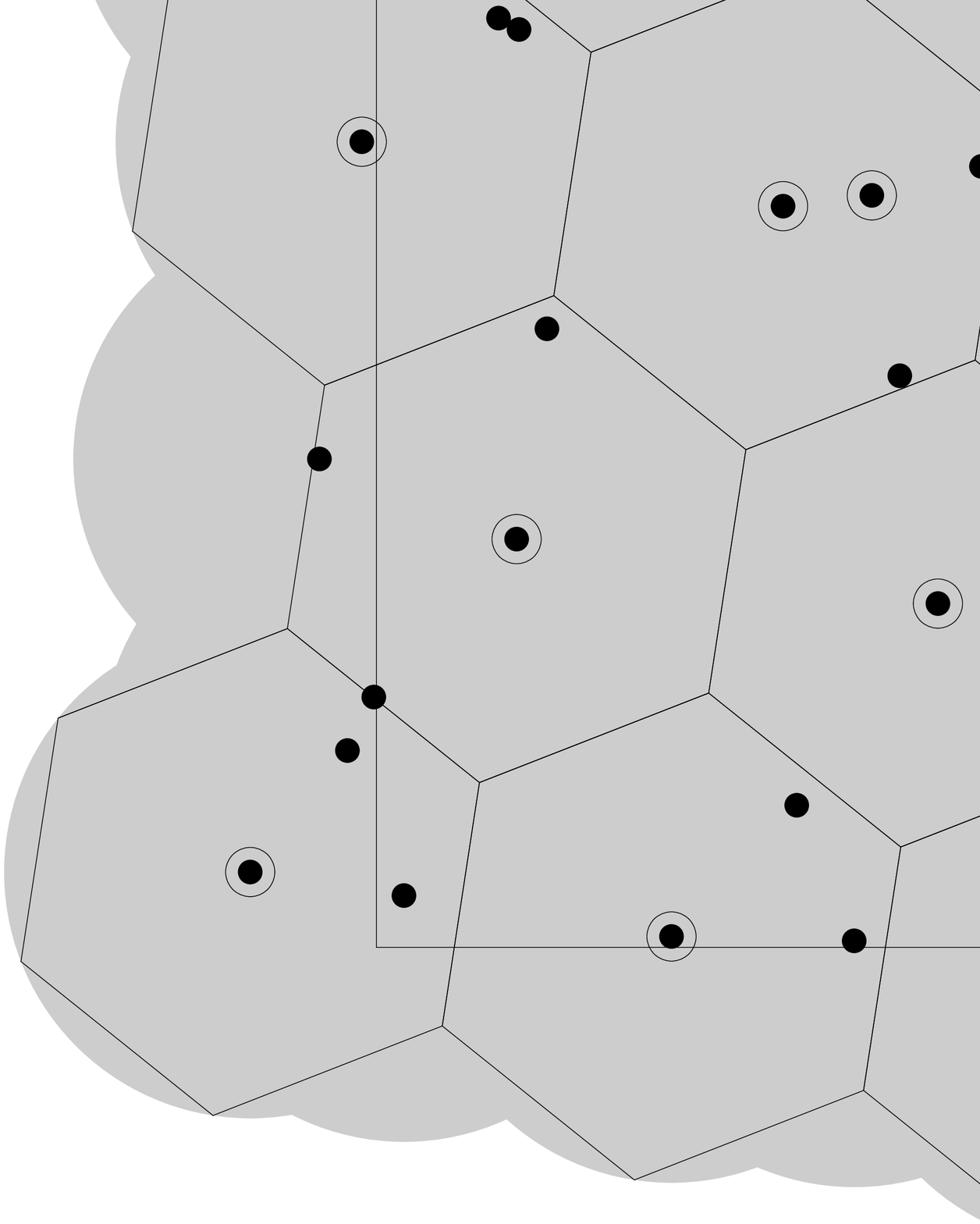}}}
\hspace{2cm} \subfigure[]{\scalebox{0.027}{\centering
\includegraphics[]{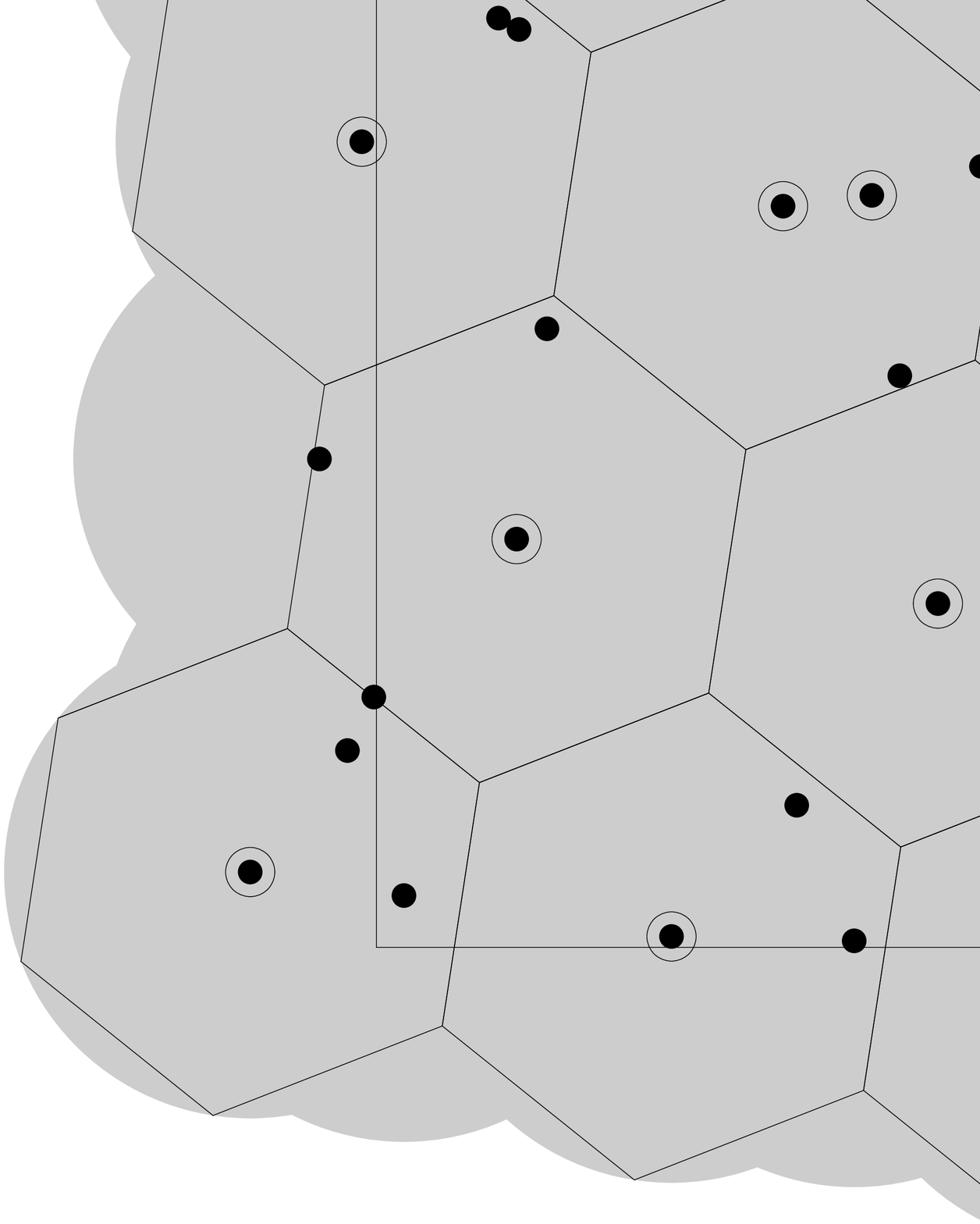}}}
\hspace{2cm} \subfigure[]{\scalebox{0.027}{\centering
\includegraphics[]{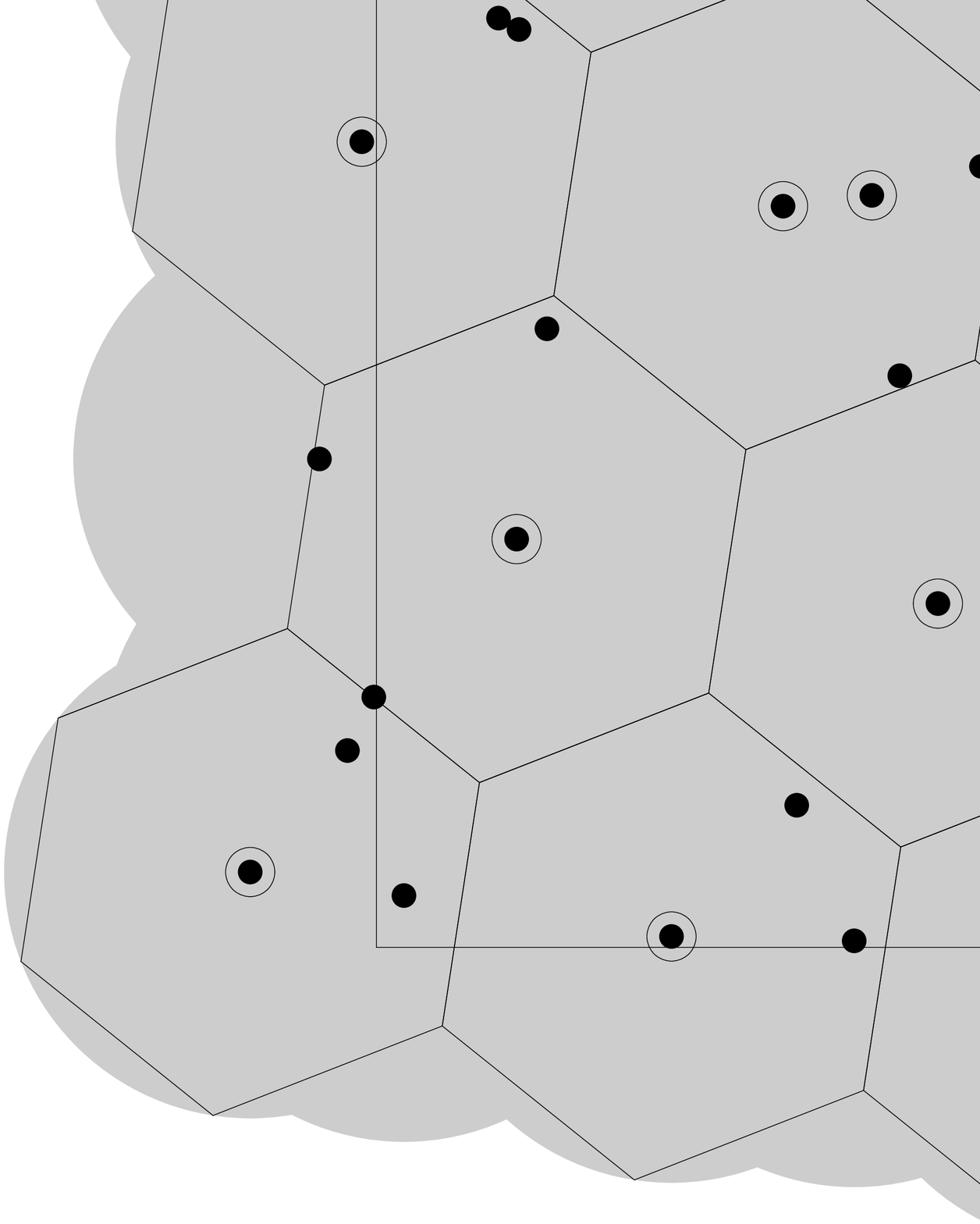}}}

\caption{Self-healing capability of  PP1: (a) sensor deployment after the execution of PP1; (b) failure of  several closely located sensors; (c) an intermediate step in the execution of the pull activity; (d) the coverage hole is filled.} \label{fig:bomba_pp1}
\end{figure}

\begin{figure}
\vspace{1cm}
\begin{center}
\scalebox{0.035}{
\includegraphics[]{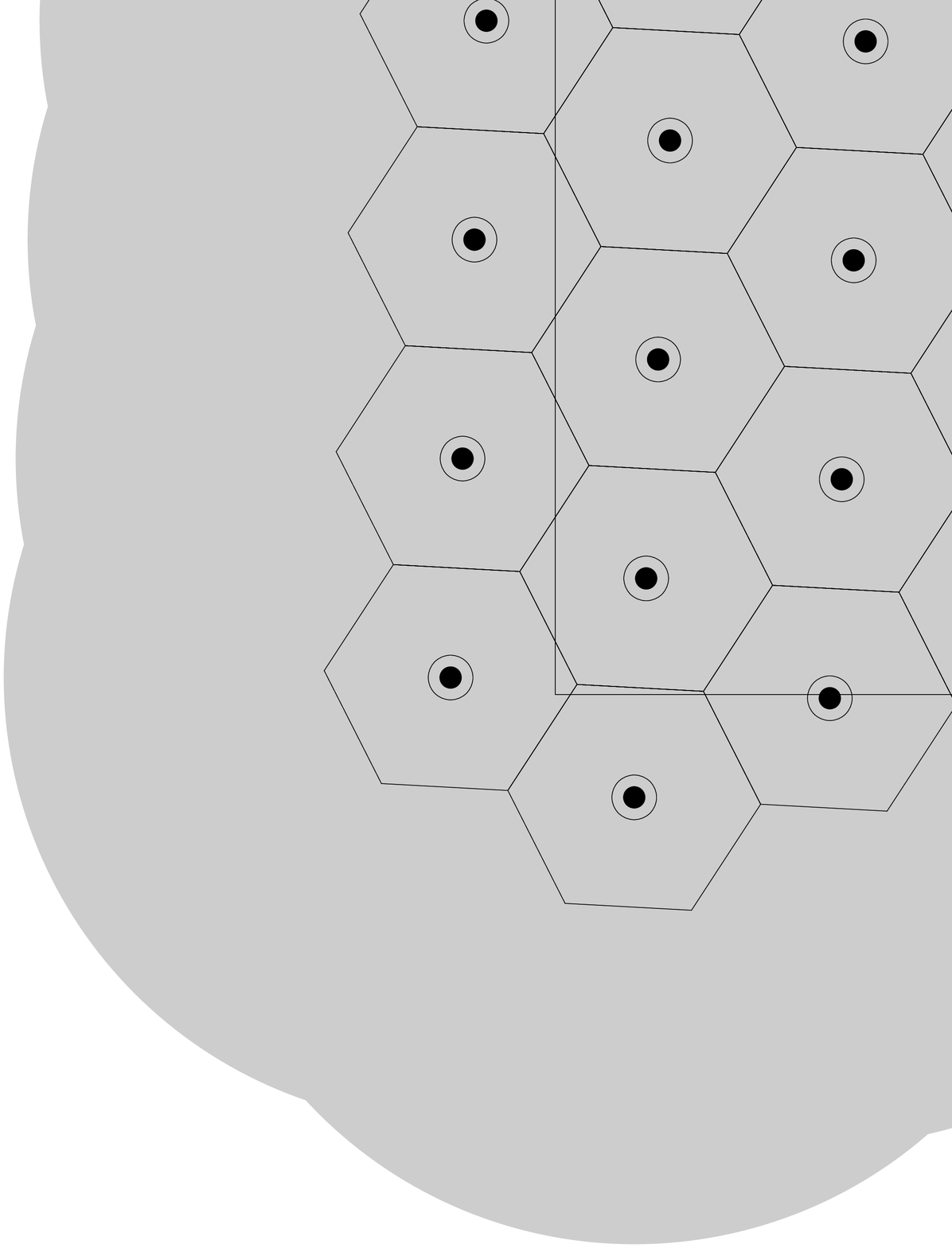}
}
\end{center}
\caption{Fault tolerance of PP2 to the failure of several closely located sensors} \label{fig:fault_pp2}
\end{figure}


\section{Algorithm properties} \label{sec:properties}
In this section we prove some key properties of the
{\HC}\ algorithm: coverage, connectivity
and termination.

\subsection{Coverage Completeness}\label{sec:completeness}

In this subsection we prove that \HC\ (both modes) guarantees the complete coverage.

\begin{theorem}\label{th:coverage}
Algorithm {\HC}\ guarantees the complete coverage,
provided that at least the tight number of sensors $N_\texttt{tight}(l_\texttt{h})$
are available.
\end{theorem}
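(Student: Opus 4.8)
The plan is to argue by contradiction, assuming that the algorithm has terminated while some point of the AoI remains uncovered, and to derive a contradiction with the hypothesis $N \geq N_\texttt{tight}(l_\texttt{h})$, where $N$ is the number of available sensors. First I would invoke the tiling merge activity to reduce to the case of a \emph{single} tiling portion: at termination no two portions are in radio proximity, and since the AoI is connected every sensor belongs to the one surviving tiling, being either snapped or a slave. Every snapped sensor occupies the centre of a distinct hexagon of this tiling, and, thanks to the location awareness used in target-area deployment, only hexagons meeting the AoI are ever chosen as snap destinations; hence the occupied hexagons form a subset of the family $\mathcal{C}$ of tiling hexagons that intersect the AoI.

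Next I would locate the uncovered region. Let $p$ be an uncovered point of the AoI; it lies in a hexagon $H^\ast \in \mathcal{C}$ carrying no snapped sensor. Because the tiled region is connected and does not cover $H^\ast$, there is a snapped sensor $x$ for which an uncovered AoI hexagon is adjacent to $Hex(x)$, i.e. $VP(x) \neq \emptyset$ persists. The core of the argument is to show that the coexistence of such an $x$ with any non-snapped sensor is impossible at termination. If $L(x)$ contained a free sensor or a slave, the snap activity of $x$ would not have halted; hence $L(x)$ holds no movable sensor. More strongly, since $x$ faces an unfillable adjacent hole it has triggered the pull activity, so I would argue that the trigger, propagating outward with $id$ values increasing proportionally to the tiling distance from $x$, eventually reaches every snapped sensor of the finite connected tiling. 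Along the resulting $id$-gradient the Moving Condition becomes satisfiable on a descending path towards $x$, so \emph{any} slave present anywhere in the tiling would be routed hexagon by hexagon into $H^\ast$ before global quiescence could be declared. Consequently, termination with $H^\ast$ uncovered forces every snapped sensor to have an empty slave set, i.e. \emph{all} $N$ sensors are snapped.

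The proof then closes with a counting step. All $N$ sensors being snapped, they occupy $N$ distinct hexagons of $\mathcal{C}$; but $H^\ast \in \mathcal{C}$ is uncovered, so at most $|\mathcal{C}|-1$ hexagons carry a sensor, whence $N \leq |\mathcal{C}| - 1$. By the very definition of the tight number, the number $|\mathcal{C}|$ of hexagons needed to cover the AoI for the particular orientation and offset realised by our tiling satisfies $|\mathcal{C}| \leq N_\texttt{tight}(l_\texttt{h})$. Combining the two inequalities yields $N \leq N_\texttt{tight}(l_\texttt{h}) - 1 < N_\texttt{tight}(l_\texttt{h}) \leq N$, a contradiction; since PP2 merely runs PP1 with a reduced $l_\texttt{h}$, the same reasoning covers both modes.

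I expect the delicate point to be the pull step of the middle paragraph: one must verify that the monotone $id$-labelling built by successive trigger-extension messages really does create an unbroken descending chain of satisfied Moving Conditions from an arbitrary slave-bearing hexagon to the hole, so that no slave can be \emph{trapped} by the very order relation $id(\cdot)$ that motivated introducing the pull in the first place. Establishing this, together with the auxiliary fact (presumably supplied by the termination analysis) that the algorithm cannot become quiescent while a trigger is still live and a slave still exists, is the real work; the merge reduction and the final counting are comparatively routine.
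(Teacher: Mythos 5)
Your overall strategy is essentially the paper's: the pull trigger creates a descending $id$-gradient along which the Moving Condition routes any slave, hexagon by hexagon, into the hole, and the hypothesis $N \geq N_\texttt{tight}(l_\texttt{h})$ enters through a counting step guaranteeing that a non-snapped (redundant) sensor must exist while a hole does. Your within-component analysis and the explicit counting are in fact \emph{more} careful than the paper's, which simply asserts that ``by the hypothesis on the number of sensors, it certainly exists a hexagon with at least one redundant slave.'' However, your reduction to a single tiling portion is a genuine gap, and it is wrong as stated. The tiling merge activity fires only when two portions come within \emph{radio proximity}; two portions separated by an uncovered gap wider than $R_\texttt{tx}$ never exchange neighbor discovery messages and never merge. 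So in the hypothetical terminal state you are analyzing, several tiling portions with different orientations and offsets can perfectly well coexist, each bordered by the very uncovered region you are trying to rule out --- connectivity of the AoI does not force a unique surviving tiling, because the obstruction to merging is exactly the hole. The paper devotes the second case of its proof to precisely this configuration, and handles it \emph{dynamically}: the area surrounding each connected component is itself a hole detected at that component's boundary, so any component containing redundant slaves keeps expanding, and the hypothesis on $N$ guarantees that some slave-bearing component eventually grows into radio proximity of $C_x$, triggering a merge and reducing to the single-component case. Your quiescence-based argument has no substitute for this progress step.

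The gap is not cosmetic, because your closing inequality genuinely needs the single tiling. With several portions, each snapped sensor occupies the center of a hexagon of \emph{its own portion's} lattice; hexagons of distinct portions may overlap spatially (two snapped sensors of different portions can sit at distance just above $R_\texttt{tx} = \sqrt{3}R_\texttt{s} < 2R_\texttt{s}$, outside radio proximity yet with overlapping hexagons), so the $N$ occupied hexagons are not $N$ distinct cells of one family $\mathcal{C}$, and $N \leq |\mathcal{C}| - 1 \leq N_\texttt{tight}(l_\texttt{h}) - 1$ does not transfer: a fragmented, mutually overlapping packing could in principle accommodate $N_\texttt{tight}(l_\texttt{h})$ or more all-snapped sensors while a gap persists between portions. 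To repair the argument you must either prove a packing lemma showing that an all-snapped multi-portion terminal state with a hole still violates the count, or abandon pure contradiction-at-quiescence for the paper's route: show that as long as a hole and a non-snapped sensor coexist the algorithm cannot be quiescent --- via your $id$-gradient chain when they lie in the same component, and via expansion followed by forced merge when they do not. Everything else in your proposal, including your identification of the delicate point in the pull mechanism, matches the paper's case~1 and stands.
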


\begin{proof}
Let us assume that a coverage hole exists.
As our algorithm is designed, this hole will
eventually be detected by a sensor $x$.
Furthermore, by the hypothesis on the number of sensors, it certainly exists a
hexagon with at least one redundant slave.
Let us call $C_x$ the connected component containing sensor $x$.
Two different cases may occur depending on the position of
  the redundant slaves with respect to $C_x$.

\noindent 1) A redundant slave exists in $C_x$: the
snapped sensor $x$ starts the trigger mechanism
that eventually reaches a redundant slave so that
it is pushed towards $x$ and consequently it fills the hole.

\noindent 2) All redundant slaves are located in
connected components different from $C_x$:
the area surrounding each connected component
is in fact a coverage hole that will eventually
be detected by a snapped node located at the
boundary. According to what stated for the case 1), all the separated connected components containing
redundant slaves will expand themselves to fill
as many coverage holes as possible.
Since, by hypothesis, the number of sensors is at
least $N_\texttt{tight}(l_\texttt{h})$, it certainly exists a
component containing redundant slaves that
will eventually merge in $C_x$, leading to the
situation described in the case 1), thus
proving the theorem.
\end{proof}

Notice that, having $N_\texttt{tight}(l_\texttt{h})$ sensors is a sufficient condition to
guarantee the coverage completeness, but this number is not also necessary.
Indeed, $N_\texttt{tight}(l_\texttt{h})$ is calculated as the maximum among all the minimum numbers of
sensors necessary to cover the AoI, for each orientation of the final grid with side length $l_\texttt{h}$. So it is
possible that $N_\texttt{tight}(l_\texttt{h})$ is larger than the number of sensor strictly necessary
for a fixed orientation and position of the oldest starter.

\subsection{Coverage uniformity}

We consider two different coverage redundancy metrics.
The first metric evaluates the coverage only in correspondence to the hexagonal grid points.
This metric, named {\em grid coverage level}, is of interest for the applications that do not require a continuous sensing of the
area of interest but rely on interpolation of local measurements.
On the contrary, the second metric, named {\em continuous coverage level}, is more restrictive and is introduced to evaluate the coverage redundancy at each point
of the area of interest.
\begin{definition}
The {\em grid coverage level} is the minimum number of sensors covering each point of a regular
 grid.
\end{definition}

\begin{definition}
The {\em continuous coverage level} is the minimum number of sensors covering any point of the area of interest.
\end{definition}

In order to compute such metrics for PP1 and PP2,
we introduce the following lemma.

\begin{lemma}
Given a triangular lattice of side $l_{\texttt{h}}$, any circle of radius $R$ and centered in a point of the lattice contains
{\small$$
 n(R)=\sum_{i=-\left\lfloor \frac{R}{\sqrt{3}l_\texttt{h}} \right\rfloor}^{ \left\lfloor \frac{R}{\sqrt{3}l_\texttt{h}} \right\rfloor } \left( 1+2 \left\lfloor
\frac{\sqrt{
R^2-3l_\texttt{h}^2 i^2}}{3l_\texttt{h}}    \right\rfloor     \right)   +  4\sum_{i=0}^{ \left\lfloor \frac{R}{\sqrt{3}l_\texttt{h}}-\frac{1}{2} \right\rfloor }
\left(1+\left\lfloor
\frac{\sqrt{R^2 - 3 l_\texttt{h}^2 \left( i+ \frac{1}{2}  \right)^2 } }    {3l_\texttt{h}} - \frac{1}{2}
\right\rfloor \right)
$$}
points of the lattice.
\end{lemma}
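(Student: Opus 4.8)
The plan is to count lattice points of a triangular lattice inside a circle by slicing the lattice into parallel rows and counting how many lattice points of each row fall inside the circle. First I would fix convenient coordinates. A triangular lattice of side $l_\texttt{h}$ can be split into two interleaved families of horizontal rows: the ``even'' rows at vertical heights $y = \sqrt{3}\,l_\texttt{h}\, i$ for $i \in \mathbb{Z}$, whose points sit at horizontal positions that are integer multiples of $l_\texttt{h}$ shifted appropriately, and the ``odd'' (offset) rows at heights $y = \sqrt{3}\,l_\texttt{h}\,(i + \tfrac12)$, whose points are horizontally shifted by half a step. I place the circle's center at the origin, which is a lattice point, so the even-row family is symmetric about $y=0$ and the odd-row family comes in symmetric $\pm$ pairs. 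The key observation is that the two sums in the statement correspond exactly to these two families: the first sum ranges over even rows indexed by $i$ with $|y| = \sqrt{3}\,l_\texttt{h}|i| \le R$, i.e. $|i| \le \lfloor R/(\sqrt{3}\,l_\texttt{h}) \rfloor$, and the second sum (with its factor $4$ and the half-integer shift $i+\tfrac12$) ranges over the offset rows.

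Next I would count the points in a single row at height $y$. A horizontal row of the lattice has points spaced $l_\texttt{h}$ apart; within the disk the admissible horizontal coordinates $x$ satisfy $x^2 \le R^2 - y^2$. The number of lattice points of that row inside the circle is therefore determined by how many multiples of the row spacing fit in the interval $[-\sqrt{R^2 - y^2}, \sqrt{R^2 - y^2}]$. For an even row centered so that a lattice point lies at $x=0$, the count is $1 + 2\lfloor \sqrt{R^2 - y^2}/(\text{spacing}) \rfloor$; substituting $y^2 = 3 l_\texttt{h}^2 i^2$ and using that the in-row horizontal spacing works out to $3 l_\texttt{h}$ (since consecutive points of the same horizontal row in a triangular lattice are separated by $\sqrt{3}\,l_\texttt{h}$ in one natural description, but here the relevant projected spacing must be reconciled with the $3l_\texttt{h}$ appearing in the denominator) gives precisely the summand $1 + 2\lfloor \sqrt{R^2 - 3 l_\texttt{h}^2 i^2}/(3 l_\texttt{h}) \rfloor$ of the first sum. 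For an offset row the nearest lattice point to $x=0$ is at horizontal distance $\tfrac12 \cdot(\text{spacing})$ from the axis, so the in-row count becomes $2\bigl(1 + \lfloor \sqrt{R^2 - y^2}/(\text{spacing}) - \tfrac12 \rfloor\bigr)$ with $y^2 = 3 l_\texttt{h}^2 (i+\tfrac12)^2$; the factor $4 = 2 \times 2$ in the second sum then accounts both for the two symmetric signs of $y$ (rows above and below the center) and for the doubling already inside the bracket.

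Summing the per-row counts over all rows that meet the disk yields the claimed formula. The remaining bookkeeping is to verify the index ranges: the even rows require $\sqrt{3}\,l_\texttt{h}|i| \le R$, giving $|i| \le \lfloor R/(\sqrt{3}\,l_\texttt{h})\rfloor$, which is exactly the symmetric range of the first sum; the offset rows require $\sqrt{3}\,l_\texttt{h}(i+\tfrac12) \le R$ for $i \ge 0$, i.e. $i \le \lfloor R/(\sqrt{3}\,l_\texttt{h}) - \tfrac12 \rfloor$, matching the upper limit of the second sum. I would then check the boundary/degenerate cases: rows exactly tangent to the circle contribute a single point (handled correctly by the floor of $0$), and one must confirm no double counting occurs between the even and offset families, which holds because they occupy disjoint heights.

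The main obstacle I anticipate is getting the in-row geometry exactly right, in particular reconciling the two different length scales that appear: the vertical row-to-row spacing $\sqrt{3}\,l_\texttt{h}$ that governs the outer index ranges, versus the horizontal factor $3 l_\texttt{h}$ that appears in the denominators inside the floors. The cleanest way to resolve this is to commit to one explicit coordinate description of the triangular lattice at the outset (for instance, writing every lattice point as an integer combination of two fixed basis vectors), compute both the vertical heights of the rows and the horizontal in-row spacing from that single description, and only then read off the floors; once the coordinates are pinned down the floor manipulations are routine. A secondary subtlety is the half-integer offset handling for the odd rows: one must be careful that $\lfloor t - \tfrac12 \rfloor$ correctly counts the multiples of the spacing offset by half a step, which is a standard but error-prone floor identity worth verifying on a small example.
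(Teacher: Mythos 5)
Your proposal follows essentially the same route as the paper's proof: both slice the lattice into horizontal rows of two interleaved families (rows at heights $\sqrt{3}\,l_\texttt{h}\,i$ containing a point on the $y$-axis, and offset rows at heights $\sqrt{3}\,l_\texttt{h}\,(i+\tfrac{1}{2})$ whose nearest point to the axis sits at $x=\tfrac{3}{2}l_\texttt{h}$), count each row with the same floor expressions, and assemble the index ranges and the factor $4$ from the $\pm y$ symmetry exactly as you describe, in the spirit of the Gauss circle problem. The spacing reconciliation you flag resolves just as you anticipate: the lattice in question is the one formed by the hexagon centers of a tiling with hexagon side $l_\texttt{h}$ (nearest-neighbor distance $\sqrt{3}\,l_\texttt{h}$), and slicing it along the appropriate direction yields in-row spacing $3l_\texttt{h}$ with same-family rows $\sqrt{3}\,l_\texttt{h}$ apart, which is precisely what the paper asserts without further derivation.
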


\begin{proof}
We observe that the points inside the circle of radius $R$ and centered in any point of the lattice always lie
in the same position with respect to the center of the circle (see Figure \ref{fig:grid_calculus}), then we can slightly modify the reasoning for the well known Gauss' circle problem, dealing with squared grids.

\begin{figure}
\begin{center}
\scalebox{0.18}{
\includegraphics[]{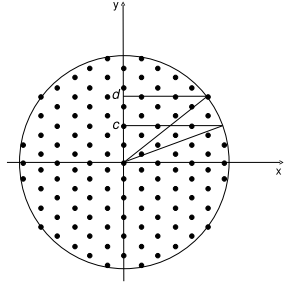}
}
\end{center}
\caption{Calculus of the grid coverage level.} \label{fig:grid_calculus}
\end{figure}

Let the center of the circle be the origin of a Cartesian plane with axis aligned with the grid.

We count the points inside the circle considering them as arranged by horizontal rows.

The number of points in interval $( 0, R ]$ of the $x$ axis is
$\left\lfloor \frac{R}{3l_{\texttt{h}}}  \right\rfloor$
and similarly in interval $[-R, 0)$.
So, counting the origin, there are $1+2\left\lfloor \frac{R}{3l_{\texttt{h}}}  \right\rfloor$
points in interval $[-R, R]$.

Now we count the number of sensors lying on the rows having a sensor on the $y$ axis.

Let us consider one of these rows lying on the line $y=c$, it contains $1+2 \left\lfloor \frac{\sqrt{R^2-c^2}}{3 l_{\texttt{h}}} \right\rfloor$ points.
As two such consecutive rows in the same semiplane are $\sqrt{3} l_{\texttt{h}}$ far from each other,
it follows that the whole number of sensors on all the rows having a sensor on the $y$ axis and lying on the positive semiplane is
$$
\sum_{i=1}^{ \left\lfloor \frac{R}{\sqrt{3}l_\texttt{h}} \right\rfloor } \left( 1+2 \left\lfloor
\frac{\sqrt{
R^2-3l_\texttt{h}^2 i^2}}{3l_\texttt{h}}    \right\rfloor     \right).
$$

Finally, we count the number of sensor lying on the rows not having a sensor on the $y$ axis.

Let us consider one such row lying on the line $y=d$; the sensor closest to the $y$ axis has $x$-coordinate $\frac{3}{2} l_{\texttt{h}}$, so we consider the interval $\sqrt{R^2-d^2}-\frac{3}{2}l_{\texttt{h}}$ long.
Hence, the number of sensors on this row is $2 \left( 1+ \left\lfloor \frac{\sqrt{R^2-d^2}-\frac{3}{2}l_{\texttt{h}}}{3 l_{\texttt{h}}} \right\rfloor \right)$.
With arguments similar to the previous case, we have that the number of sensors lying on these rows in the positive semiplane is:
$$\sum_{i=0}^{ \left\lfloor \frac{R}{\sqrt{3}l_\texttt{h}}-\frac{1}{2} \right\rfloor }
2\left(1+\left\lfloor
\frac{\sqrt{R^2 - 3 l_\texttt{h}^2 \left( i+ \frac{1}{2}  \right)^2 } - \frac{3 l_\texttt{h}}{2}  }    {3l_\texttt{h}}
\right\rfloor \right).$$

The result follows by summing all the described contributions.
\end{proof}

\begin{theorem}\label{th:uniformity}
Under the assumption that at least a tight number of sensors
are available, and the shrinked grid mode is enabled, algorithm {\HC}\
guarantees a $k$ grid coverage level,
where $k=n(R_\texttt{s})$.

\end{theorem}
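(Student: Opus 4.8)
The plan is to reduce the statement to a static combinatorial count on the triangular lattice and then invoke the preceding Lemma. The observation driving everything is that, once \HC\ has terminated in the shrinked grid mode, the snapped sensors sit exactly on the points of a single triangular lattice of side $l_\texttt{h}$, and a grid point is itself such a lattice point; hence counting the sensors that cover a given grid point amounts to counting the occupied lattice points lying within the sensing radius $R_\texttt{s}$ of it.

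First I would establish that the snapped sensors form a complete, uniformly oriented lattice. The merge activity collapses all tiling portions into a single portion with a common orientation, so there is only one lattice to reason about. Because PP2 fixes $l_\texttt{h}=N_\texttt{tight}^{-1}(N)$, the number of available sensors matches the number of hexagon centers the AoI requires, and Theorem \ref{th:coverage} guarantees complete coverage; since the snap activity occupies a center whenever a free sensor and an uncovered adjacent position coexist, having at least the tight number of sensors ensures this process does not halt until every center of the tiling inside the AoI carries a snapped sensor, any surplus remaining as uniformly scattered slaves. Next, fixing an arbitrary grid point $p$, a sensor covers $p$ precisely when it lies within distance $R_\texttt{s}$ of $p$; since every snapped sensor occupies a lattice point, the number of snapped sensors covering $p$ equals the number of occupied lattice points in the disk of radius $R_\texttt{s}$ centered at $p$. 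By translation invariance of the lattice this count is independent of $p$, and by the preceding Lemma it equals $n(R_\texttt{s})$. Thus every grid point is covered by exactly $n(R_\texttt{s})$ snapped sensors (additional slaves can only increase this), so the minimum over all grid points --- the grid coverage level --- is $n(R_\texttt{s})$.

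I expect the delicate point to be the behaviour near the boundary of the AoI rather than the interior count. For a grid point $p$ close to the border, some of the $n(R_\texttt{s})$ lattice sites within distance $R_\texttt{s}$ of $p$ may fall outside the tiled region and therefore host no sensor, so that $p$ would be covered by fewer than $n(R_\texttt{s})$ sensors and the clean count fails verbatim. The natural remedy is to lean on the border built into the definition of $N_\texttt{tight}(l_\texttt{h})$, which pads the tiling beyond the AoI; the subtlety is that in PP2 one has $l_\texttt{h}<R_\texttt{s}$, so a border of width $2l_\texttt{h}$ need not contain the whole radius-$R_\texttt{s}$ neighbourhood of a boundary grid point. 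I would therefore either restrict the guarantee to grid points sufficiently interior to the AoI, or argue that the deployed lattice is padded by a band of width at least $R_\texttt{s}$, so that the full $R_\texttt{s}$-neighbourhood of every grid point of the AoI is occupied and the count $n(R_\texttt{s})$ holds uniformly.
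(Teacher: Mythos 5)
Your argument is essentially the paper's own proof: it derives complete coverage from the choice of $l_\texttt{h}$ and Theorem \ref{th:coverage}, observes that the snapped sensors form a single uniformly oriented triangular lattice (so a grid point is itself a lattice point), and applies the preceding counting lemma to get $k=n(R_\texttt{s})$ sensors within distance $R_\texttt{s}$ of each grid point. The boundary caveat you flag is genuine but is silently glossed over by the paper, whose proof asserts the count for \emph{every} sensing circle without treating grid points within $R_\texttt{s}$ of the AoI border, where in PP2 (since $l_\texttt{h}<R_\texttt{s}$ and the tiling extends only about $l_\texttt{h}$ beyond the AoI) some of the $n(R_\texttt{s})$ lattice sites may be unoccupied --- so your proposed restriction to sufficiently interior grid points, or an explicit padding argument, would actually make the claim more rigorous than the published proof.
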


\begin{proof}
The definition of $l_\texttt{h}$ and lemma \ref{th:coverage}
imply that, under the given assumptions, algorithm \HC\ provides a complete coverage.
Given the geometric regularity of the obtained deployment, every sensing circle
surrounding a snapped sensor contains at least a fixed number $k$ of snapped sensors
belonging to the triangular lattice determined by the hexagonal grid deployment.
As all sensors have the same sensing radius, the sensing redundancy level at the center of the circle is at least $k$.

\end{proof}

\begin{figure}[t]
\centering
\scalebox{0.18}{
\includegraphics[]{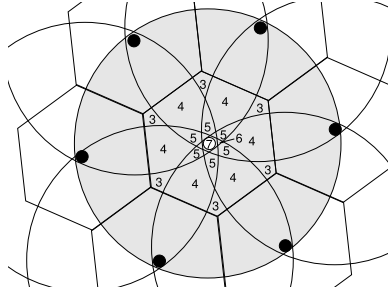}}
\caption{An example of continuous coverage levels} \label{fig:coverage}
\end{figure}

In order to estimate the continuous coverage level of any sensor deployment, in \cite{Huang2005} the authors
introduce the notion of perimeter coverage. They define a sensor $s$ to be \emph{$k$-perimeter covered} if all
points in the perimeter of the sensing circle of $s$ are covered by at least $k$ sensors (not counting $s$).

The same authors also prove (see Theorem 1 in \cite{Huang2005}) that the sensor deployment provides a continuous
coverage level $k$ if and only if each sensor is $k$-perimeter covered.

\begin{theorem}\label{th:uniformity_continuous}
Under the assumption that at least a tight number of sensors
is available, and the shrinked grid mode is enabled, algorithm {\HC}\
guarantees a $k$ continuous coverage level,
where $k\geq \frac{n(R_\texttt{s})-1}{3} +
\frac{n(\sqrt{3}R_\texttt{s})-n(R_\texttt{s})}{6}$.

\end{theorem}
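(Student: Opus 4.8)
The plan is to reduce the continuous coverage level to a perimeter-coverage computation and then exploit the rotational symmetry of the lattice. By the cited result of Huang et al.\ (Theorem~1 in \cite{Huang2005}), the deployment has continuous coverage level $k$ if and only if every snapped sensor is $k$-perimeter covered, so it suffices to lower bound, for an arbitrary snapped sensor $s$, the number of \emph{other} sensors covering each point of the circle $\partial B(s,R_\texttt{s})$. First I would establish the elementary geometric fact that a sensor $s'$ at distance $\rho$ from $s$ covers exactly the arc $\{\theta:\cos(\theta-\phi)\ge \rho/(2R_\texttt{s})\}$ of that circle, where $\phi$ is the direction from $s$ to $s'$; this arc is centered at $\phi$ and has angular width $2\arccos\!\big(\rho/(2R_\texttt{s})\big)$. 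In particular it is at least $120^\circ$ when $\rho\le R_\texttt{s}$ and at least $60^\circ$ when $R_\texttt{s}<\rho\le\sqrt3\,R_\texttt{s}$; sensors farther than $\sqrt3\,R_\texttt{s}$ will simply be discarded, which is exactly what turns the final estimate into an inequality.

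Next I would use the fact that, in the shrinked-grid mode, the snapped sensors sit on a triangular lattice invariant under rotation by $60^\circ$ about $s$. Hence the sensors other than $s$ partition into orbits of exactly six points, each orbit being a regular hexagon of directions spaced $60^\circ$ apart at a common distance $\rho$, and therefore carrying six arcs of equal width centered on those six directions. The counts $n(R_\texttt{s})-1$ and $n(\sqrt3\,R_\texttt{s})-n(R_\texttt{s})$ supplied by the preceding lemma are then precisely the numbers of such orbit-points lying in the disk of radius $R_\texttt{s}$ and in the annulus out to $\sqrt3\,R_\texttt{s}$, each a multiple of six.

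The heart of the argument is a single-orbit estimate: six equal arcs of width at least $120^\circ$ placed $60^\circ$ apart cover every point of the circle at least twice, while six equal arcs of width at least $60^\circ$ placed $60^\circ$ apart cover every point at least once. Both follow by checking that an arbitrary direction $\theta$ lies within $60^\circ$ (resp.\ $30^\circ$) of at least two (resp.\ one) of the six orbit directions. Writing the perimeter-coverage function as the sum over orbits of the per-orbit coverage and using $\min(f+g)\ge\min f+\min g$, the close orbits contribute at least $2$ each, i.e.\ at least $(n(R_\texttt{s})-1)/3$ in total, and the annulus orbits contribute at least $1$ each, i.e.\ at least $(n(\sqrt3\,R_\texttt{s})-n(R_\texttt{s}))/6$. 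Summing yields the claimed bound, which together with the Huang et al.\ characterization proves the theorem.

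The step I expect to require the most care is this single-orbit estimate together with its summation. One must verify that each arc is genuinely centered on its sensor's direction (so that the six arcs of an orbit are exactly $60^\circ$-staggered), handle the boundary directions where adjacent arcs meet without losing the ``$\ge$'', and justify that discarding sensors beyond $\sqrt3\,R_\texttt{s}$ and bounding every shell by its worst-case arc width (attained at $\rho=R_\texttt{s}$ and $\rho=\sqrt3\,R_\texttt{s}$) only weakens the estimate. The subadditivity-of-minima step is what converts the per-orbit bounds into a lower bound on the \emph{minimum} rather than on the average coverage, and this is the crucial point that makes the stated inequality a genuine guarantee.
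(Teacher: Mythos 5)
Your proposal is correct and takes essentially the same route as the paper's own proof: the reduction to perimeter coverage via Theorem~1 of \cite{Huang2005}, the split into sensors inside the disk of radius $R_\texttt{s}$ and sensors in the annulus between $R_\texttt{s}$ and $\sqrt{3}R_\texttt{s}$ (discarding anything farther, which is exactly why the bound is an inequality), and the six-fold rotational symmetry of the triangular lattice about $s$, which makes each orbit of six arcs provide double (resp.\ single) perimeter coverage and makes both counts multiples of six so the floors vanish. Your explicit arc formula $2\arccos\bigl(\rho/(2R_\texttt{s})\bigr)$, the per-orbit covering check, and the $\min(f+g)\ge\min f+\min g$ step simply make rigorous what the paper asserts more informally.
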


\begin{proof}
According to the above cited theorem \cite{Huang2005}, the level of continuous coverage enabled by
the algorithm \HC\ can be calculated as the minimum perimeter coverage over all the snapped sensors.
In order to calculate such coverage level, we distinguish two main contributions, the first one
coming from sensors located inside the sensing circle of $s$, and the second one, coming from
the sensors located outside.

All sensors located inside the sensing circle of $s$
contribute to the
perimeter coverage with a circular sector of amplitude $\alpha$, with $\frac{2}{3}\pi\leq \alpha < \pi $.
Since any of these sensors is symmetric to other five sensors inside the circle, with a rotation of $\pi/3$ centered in the position of sensor $s$,
all the six of them contribute to at least a double coverage of the sensing circle perimeter of $s$.
The sensors forming this first contribution amount to $n(R_\texttt{s})-1$ (not counting the sensor $s$ itself), and
all of them globally guarantee $2\left\lfloor \frac{n(R_\texttt{s})-1}{6}\right\rfloor$-perimeter coverage.

The second contribution is related to the sensors located outside the sensing
circle of $s$. We note that the sensing circle of the sensors located farther than $2R_\texttt{s}$ from $s$ do not intersect the sensing circle of $s$, while the sensing circle of sensors located at a distance $d$ such that $\sqrt{3}R_\texttt{s}< d \leq 2R_\texttt{s}$ intersect the sensing circle of $s$, determining a circular arc of amplitude less than $\pi/3$.
Since we are calculating a lower bound on the minimum perimeter coverage, we do not consider the contribution of this latter sensors as it does not guarantee
a complete perimeter coverage and therefore may not affect its minimum value.

For this reason, as a second contribution to the perimeter coverage, we only consider the
sensors located inside the circular crown determined by the radii $R_\texttt{s}$ and $\sqrt{3} R_\texttt{s}$.
This sensors
contribute to the
perimeter coverage with a circular sector of amplitude $\beta$, with $\pi/3 \leq \beta < \frac{2}{3}\pi $.
Since any of these sensors is symmetric to other five sensors inside the crown, with a rotation of $\pi/3$ centered in the position of sensor $s$,
all the six of them contribute to at least one single coverage of the sensing circle perimeter of $s$.
The sensors forming this second contribution amount to $n(\sqrt{3}R_\texttt{s})-n(R_\texttt{s})$
and all of them globally guarantee a $\left\lfloor \frac{n(\sqrt{3}R_\texttt{s})-n(R_\texttt{s})}{6}\right\rfloor$-perimeter coverage. Notice that the particular 3-axis symmetry, induced by the hexagonal tiling, makes it possible to remove the floor operator from the two terms, as $n(R) - 1$
is always divisible by 6.

By summing the two contribution to the perimeter coverage, we derive the claimed lower bound.

\end{proof}

\subsection{Coverage and connectivity}
\label{sec:lattice}

In this subsection we motivate the choice of the hexagonal tiling and
the assumption that the sensors operate with $R_\texttt{tx} \geq \sqrt{3} R_\texttt{s}$, that is
a less restrictive condition than usually required in the literature.

In \cite{Zhang2005},
the authors demonstrate that coverage implies connectivity if and
only if $R_\texttt{tx}$ is twice the value of $R_\texttt{s}$.
This statement is generally valid regardless of the particular
distribution of the sensors over the AoI, be it a regular geometrical mesh or
a random deployment.
A hexagonal tiling with side length $R_\texttt{s}$
is the one that minimizes node density while ensuring coverage completeness at the same time, as argued in \cite{Brass2007}.
Since our algorithm works exactly with this kind of tiling, which corresponds to a triangular lattice, we can relax the relationship
between $R_\texttt{tx}$ and $R_\texttt{s}$.
If the sensors are regularly deployed on a hexagonal tiling, the distance between any two tiling neighbors
is exactly $\sqrt{3}R_\texttt{s}$, implying the following result.

\begin{theorem}
Under a complete triangular lattice coverage with side length $R_\texttt{s}$,
a necessary and sufficient condition to guarantee connectivity is that
 $R_\texttt{tx}\geq \sqrt{3}R_\texttt{s}$.
\end{theorem}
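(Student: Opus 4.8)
The plan is to prove the two implications separately, both resting on a single geometric fact already recorded above: in a hexagonal tiling of side $l_\texttt{h}=R_\texttt{s}$, the hexagon centers form a triangular lattice whose nearest-neighbor distance is exactly $\sqrt{3}R_\texttt{s}$, and this quantity is also the minimum distance between any two distinct lattice points. I would first state this fact explicitly, computing the center-to-center distance of two edge-adjacent hexagons as twice the apothem, $2\cdot(\sqrt{3}/2)l_\texttt{h}=\sqrt{3}R_\texttt{s}$, and observing that in a triangular lattice every other lattice point lies at distance at least $\sqrt{3}R_\texttt{s}$ from a given one.

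For sufficiency, assume $R_\texttt{tx}\geq\sqrt{3}R_\texttt{s}$. Then each snapped sensor can communicate with all of its (up to six) lattice neighbors, since they sit at distance exactly $\sqrt{3}R_\texttt{s}\leq R_\texttt{tx}$. Because the deployment is a \emph{complete} tiling of the connected AoI, the adjacency graph on the occupied hexagons, in which two hexagons are joined precisely when they share an edge, is connected. The communication graph contains this adjacency graph as a subgraph, and hence is itself connected, establishing sufficiency.

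For necessity, suppose instead that $R_\texttt{tx}<\sqrt{3}R_\texttt{s}$. By the geometric fact above, every pair of distinct sensors is separated by a distance of at least $\sqrt{3}R_\texttt{s}>R_\texttt{tx}$, so no communication link can be established between any two sensors. The communication graph is therefore edgeless and, whenever at least two sensors are deployed, disconnected. Hence $R_\texttt{tx}\geq\sqrt{3}R_\texttt{s}$ is necessary, and the two directions together give the claimed equivalence.

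The only non-trivial point, and thus the step I would treat most carefully, is the geometric claim that $\sqrt{3}R_\texttt{s}$ is simultaneously the neighbor distance and the global minimum inter-point distance of the triangular lattice; once this is in hand both directions are immediate. A minor accompanying remark is that boundary effects along the border of a finite AoI are harmless: the adjacency graph of the hexagons tiling a connected region remains connected, so the subgraph argument in the sufficiency direction goes through unchanged.
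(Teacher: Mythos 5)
Your proof is correct and takes essentially the same route as the paper: the paper states this theorem without a displayed proof, treating it as immediate from the preceding observation that tiling neighbors in the triangular lattice of side $R_\texttt{s}$ lie at distance exactly $\sqrt{3}R_\texttt{s}$, which is precisely your key geometric fact. Your two-direction write-up (the edge-adjacency subgraph argument for sufficiency, and the lattice minimum distance rendering the communication graph edgeless for necessity) is a careful elaboration of that same one-line reasoning.
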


\subsection{Termination of {\HC}}

We conclude this section by proving the termination of our algorithm.

Let $L= \{ \ell_1, \ell_2,
\ldots, \ell_{|L|} \}$ be the set of snapped sensors.

\begin{definition}\label{def:netState}
A {\em network state} is a vector  $\mathbf{s}$  whose $i$-th component
represents the number of sensors deployed inside
the hexagon $Hex(i)$ governed by the snapped
sensor $i$.
Therefore $\mathbf{s}=<s_1, s_2, \ldots,
s_{|L|}>$ where $s_i=|S(i)|+1$, $\forall i =
1,\dots,|L|$.
\end{definition}

\begin{definition}\label{def:stableState}
A state $\mathbf{s}=<s_1,\dots,s_{|L|}>$ is \emph{stable},
if the Moving Condition
is false for each
couple of snapped sensors in $L$ located in radio
proximity to
each other.
\end{definition}

\begin{theorem}
  Algorithm {\HC}\ terminates in a finite time.
\end{theorem}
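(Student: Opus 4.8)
The plan is to establish termination by exhibiting a potential function on network states that strictly decreases with every role-changing move (push or pull), together with an argument that snaps and merges can occur only finitely often. The core difficulty is that the Moving Condition permits sensors to oscillate between hexagons whose densities differ by exactly one, so a naive ``count the moves'' argument fails; the $id(\cdot)$-based tie-breaking is precisely what must be exploited to rule this out.

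First I would argue that the number of snapped sensors $|L|$ and the set of occupied hexagon centers can change only finitely many times. Each snap permanently occupies a lattice position, and by Theorem~\ref{th:coverage} the tiling stops expanding once the AoI is covered (or no boundary tile contains an unsnapped sensor); the merge activity, each instance of which strictly decreases the number of distinct starter time-stamps present in the network, likewise terminates after finitely many steps. Hence after some finite time the lattice is frozen: $L$ is fixed and every sensor is either snapped at a fixed center or a slave, so the network state $\mathbf{s}=\langle s_1,\dots,s_{|L|}\rangle$ of Definition~\ref{def:netState} is well defined and evolves only through push/pull movements of slaves between adjacent hexagons.

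Next I would reduce termination to showing that only finitely many such movements occur after the lattice freezes. The natural potential is a convex functional of the density vector, for instance $\Phi(\mathbf{s})=\sum_{i=1}^{|L|} s_i^2$. A push governed by the strict part of the Moving Condition, $|S(x)|>|S(y)|+1$, moves one slave from a hexagon with $s_x$ sensors to one with $s_y\le s_x-2$, and the change $\Delta\Phi=(s_x-1)^2+(s_y+1)^2-s_x^2-s_y^2=-2(s_x-s_y-1)<0$ is strictly negative; since $\Phi$ is a non-negative integer, these moves cannot recur indefinitely. The delicate case is the boundary move $|S(x)|=|S(y)|+1$, for which $\Delta\Phi=0$: here I would invoke the $id(\cdot)$ condition $id(x)>id(y)$, which forces every such unit-difference transfer to proceed monotonically with respect to the fixed total order on identifiers, so no slave can return along the reverse edge and no cycle of equal-density swaps can close.

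The hard part will be handling the pull activity cleanly, since it temporarily rewrites $id(\cdot)$ values (setting them to $0$, then to values proportional to tiling distance) and thereby alters which moves the Moving Condition permits. I would argue that each trigger episode is finite by construction—every trigger carries a timeout proportional to the maximum distance reached, after which all altered $id$ values are restored—and that each completed pull strictly fills one coverage hole, an event that can happen only finitely often because the number of lattice positions inside the frozen tiling is finite and Theorem~\ref{th:coverage} guarantees holes are eventually eliminated rather than regenerated. Combining the finiteness of snaps and merges, the strict decrease of $\Phi$ on strict pushes, the acyclicity enforced by $id(\cdot)$ on unit-difference pushes, and the per-episode finiteness of pulls yields that the total number of actions is finite, so \HC{} terminates.
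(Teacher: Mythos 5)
Your proposal is correct and takes essentially the same route as the paper: freeze the snapped set $L$ (snapping, merging, and the temporary $id$ rewrites of the pull activity all occur finitely often), then bound the remaining slave movements by a potential argument on the density vector, with $\sum_i s_i^2$ strictly decreasing on pushes satisfying $|S(x)|>|S(y)|+1$ and the fixed order on $id(\cdot)$ ruling out infinite sequences of unit-difference transfers. The only packaging difference is that the paper folds your tie-breaking step into a single lexicographically decreasing potential $f(\mathbf{s})=\left(\sum_i s_i^2,\ \sum_i s_i\cdot id(\ell_i)\right)$, whereas you argue acyclicity of equal-$\Phi$ moves directly via the total order on identifiers; the two arguments are interchangeable (your no-closed-cycle claim is exactly the statement that the second component strictly decreases).
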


\begin{proof}
As long as new sensors are being snapped, the covered area keeps on growing.
This process eventually ends either because the
AoI has been completely covered or because the sensors have reached a
configuration that does not allow any further expansion of the tiling.
Due to  Theorem \ref{th:coverage} the latter can only happen when all sensors are snapped
and thus the state of the network is stable.
In order to prove the theorem, it suffices to prove
that, once the AoI is fully covered, the algorithm reaches a stable
configuration in a finite time.
Therefore we can consider the set of snapped sensors
$L$  as fixed. The value of the order function
related to each snapped sensor, $ id(\ell_i)$, is set during the
unfolding of the algorithm, it can be modified
only temporarily by the pull activity a finite
number of times and remains steady onward.
Let us define $f:\mathbb{N}^{|L|} \rightarrow \mathbb{N}^2$ as follows: \\
$ f(\mathbf{s})= \left(
\sum_{i=1}^{|L|} s_i^2, \sum_{i=1}^{|L|}
s_i \cdot id(\ell_i) \right)$.
We say that $f(\mathbf{s}) \succ
f(\mathbf{s^\prime})$ if $f(\mathbf{s})$ and
$f(\mathbf{s^\prime})$ are in lexicographic order.
Observe that
function $f$ is lower bounded by the
pair \\$(|L|, \sum_{i=1}^{|L|}  id(\ell_i))$, in
fact
$1\leq s_i \leq |V|$.
Therefore,  if we prove that  the value of $f$  decreases at every state change,
we also prove that no infinite sequence of state changes is possible.
To this purpose, let us show that every state
change from $\mathbf{s}$ to $\mathbf{s}^\prime$
causes $f(\mathbf{s})\succ f(\mathbf{s}^\prime)$.
Let us consider a generic state change which
involves the snapped sensors $x$ and $y$, with
$x$ sending a slave sensor to $Hex(y)$.
We have that $s_i=
s'_i \quad \forall i \ne x,y$, and $s'_x=s_x-1$ and $s'_y = s_y +1$.
As the transfer of the slave has been done according to the
  Moving Condition,
two cases are possible:
either $s_x > s_y + 1$, or $(s_x = s_y + 1) \wedge ( id(x) >  id(y))$.
In the first case, the inequality $s_x > s_y + 1$ implies that
$\sum_{i=1}^{|L|} {s}_i^2  >  \sum_{i=1}^{|L|} {s'}_i^2$.
In the second case,  since
$s_x = s_y + 1$ and $ id(x) > id(y)$, lead to $ \sum_{i=1}^{|L|} s_i^2 =
\sum_{i=1}^{|L|} {s'}_i^2 $
and
$\sum_{i=1}^{|L|}  s_i \cdot id(\ell_i)  >
\sum_{i=1}^{|L|}  id(\ell_i) {s'}_i
$. Therefore in both cases   $f(\mathbf{s})\succ f(\mathbf{s}^\prime)$.
The function $f$ is lower bounded and always decreasing by
discrete quantities (integer values) at any state change. Thus, after
a finite number of steps, it is impossible to perform a further
state change, i.e. the network will be in a \emph{stable state} in a
finite time.
\end{proof}

\section{Simulation results}
\label{sec:exp_results}

In order to evaluate the performance of {\HC}\ and to compare it with previous solutions,
we developed a simulator using the
wireless module of the  OPNET modeler software
\cite{opnet}.

\begin{figure*}[h]

\subfigure[]{\scalebox{0.04}{\includegraphics[]{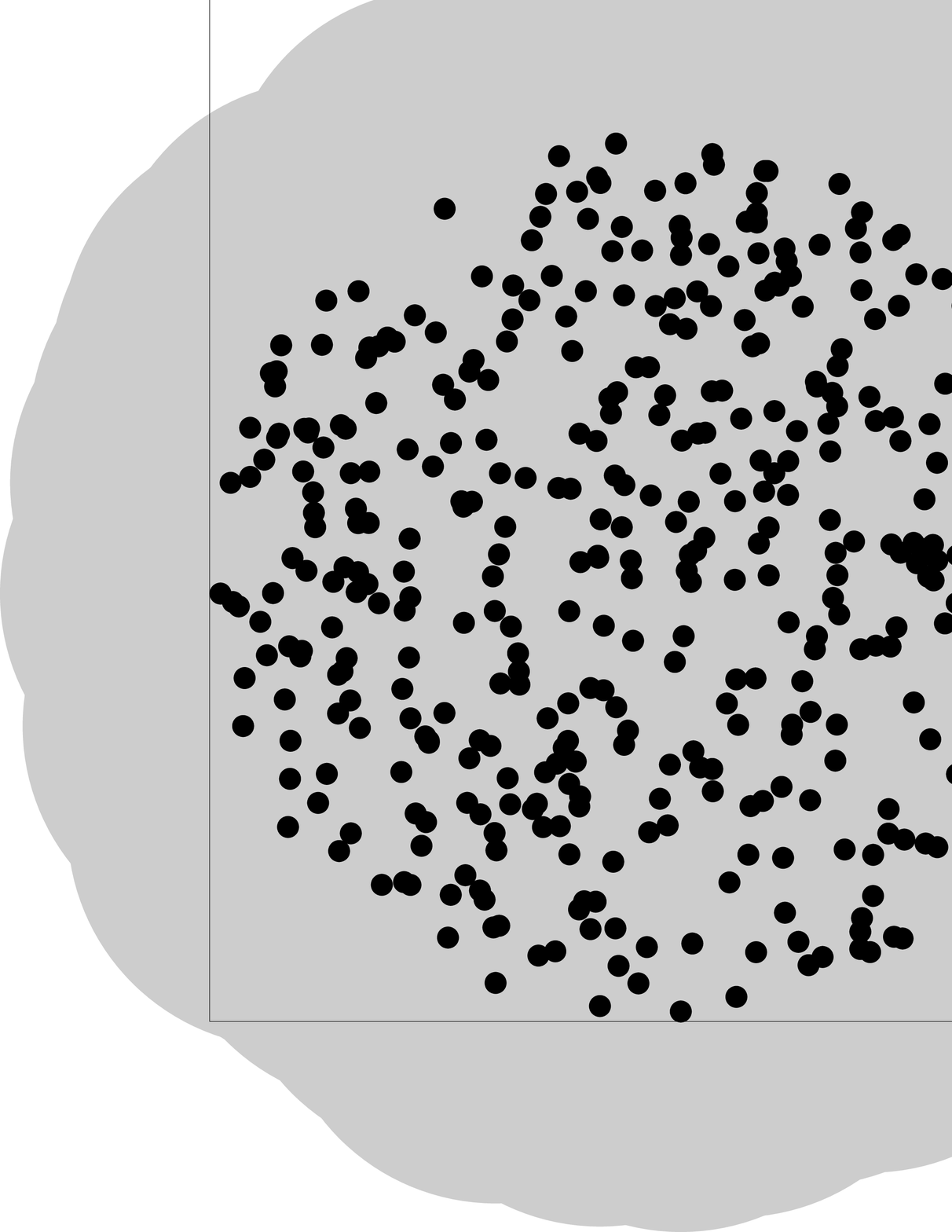}}}
\hspace{5cm}
\subfigure[]{\scalebox{0.04}{\includegraphics[]{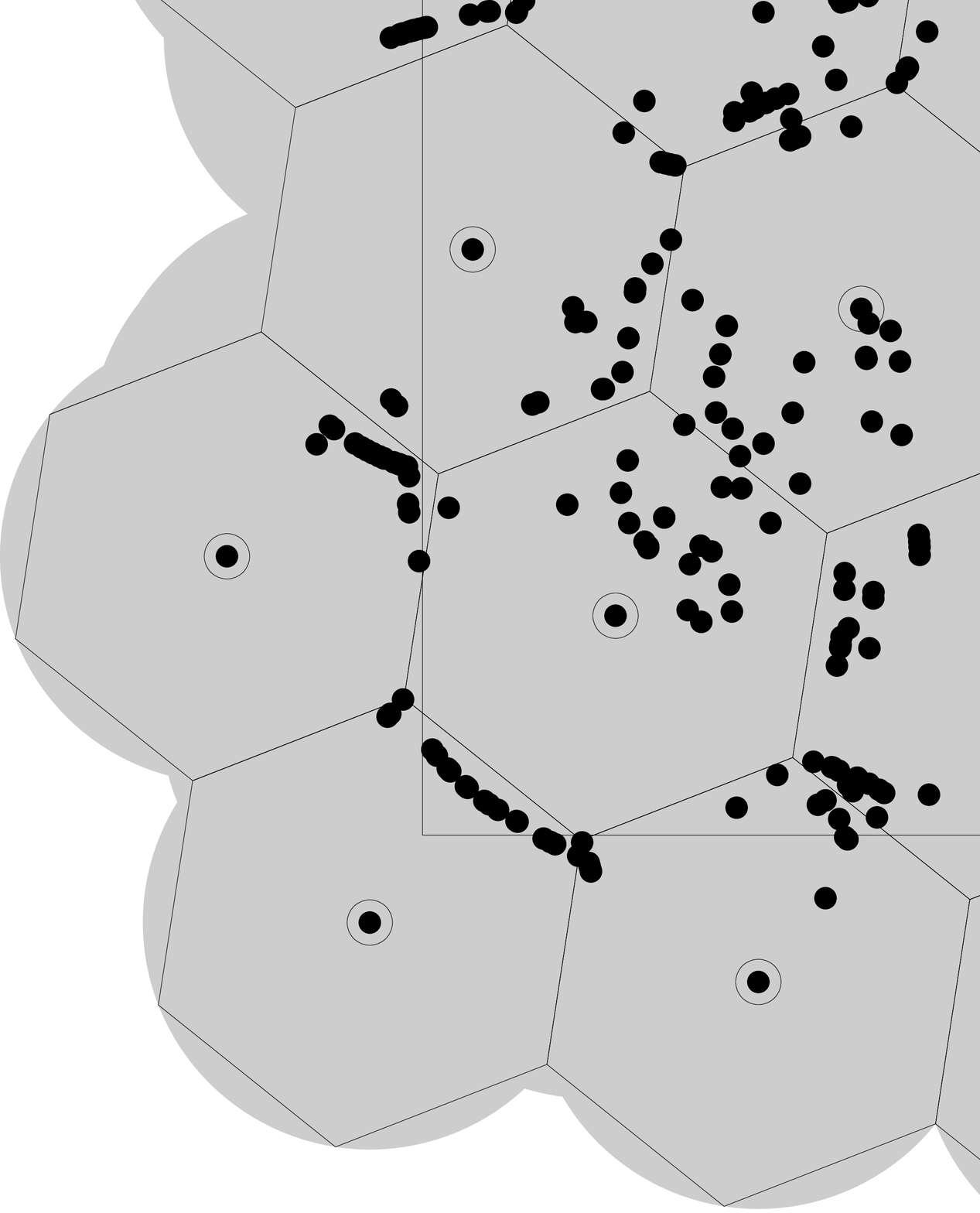}}}
\vspace{1cm}
\\
\subfigure[]{\scalebox{0.04}{\includegraphics[]{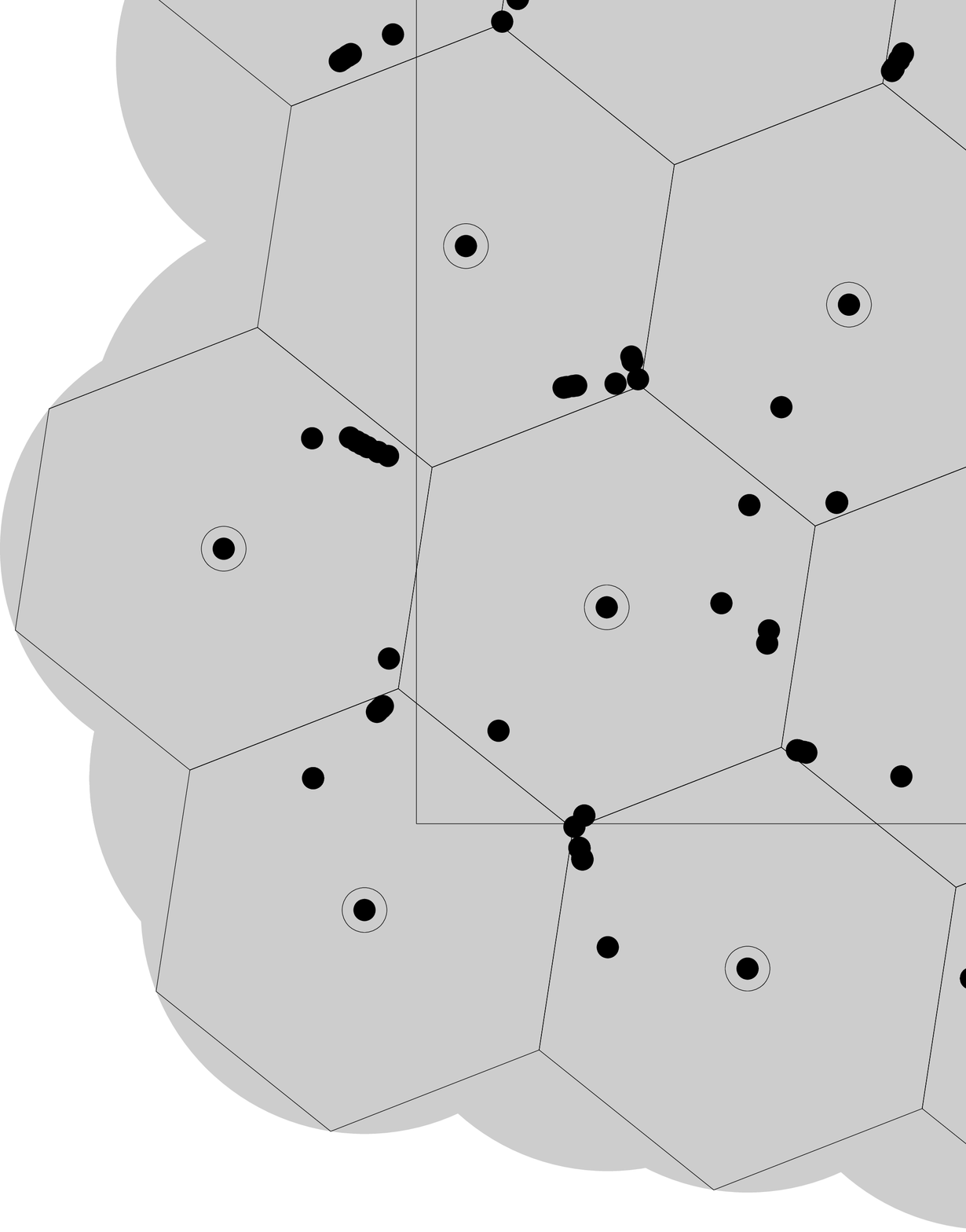}}}
\hspace{5cm}
\subfigure[]{\scalebox{0.04}{\includegraphics[]{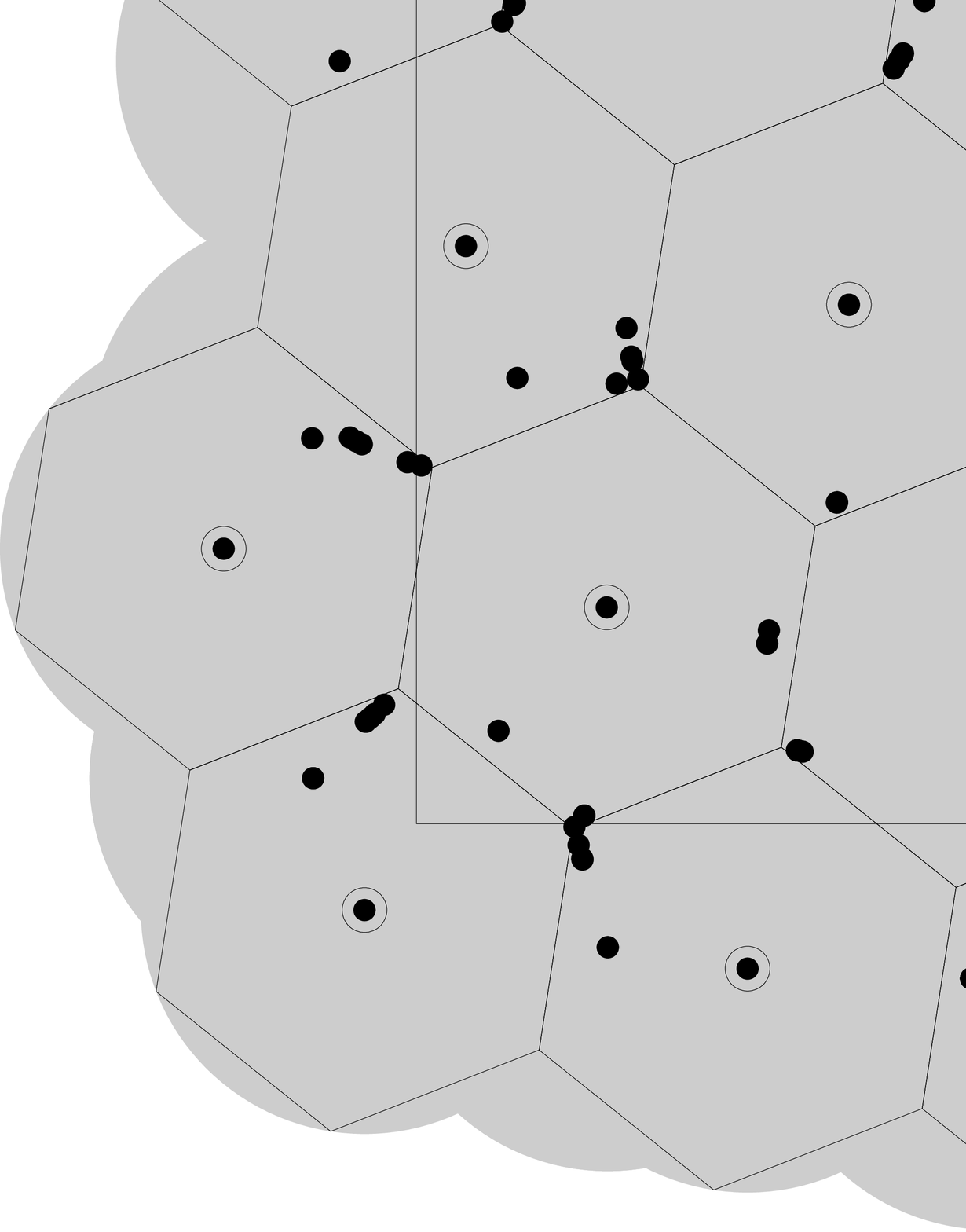}}}
\caption{Coverage of an irregular AoI under
PP1.}\label{fig:grumo_corridoio_pp1}
\end{figure*}

\begin{figure}[h]
\subfigure[]{\scalebox{0.04}{\includegraphics[]{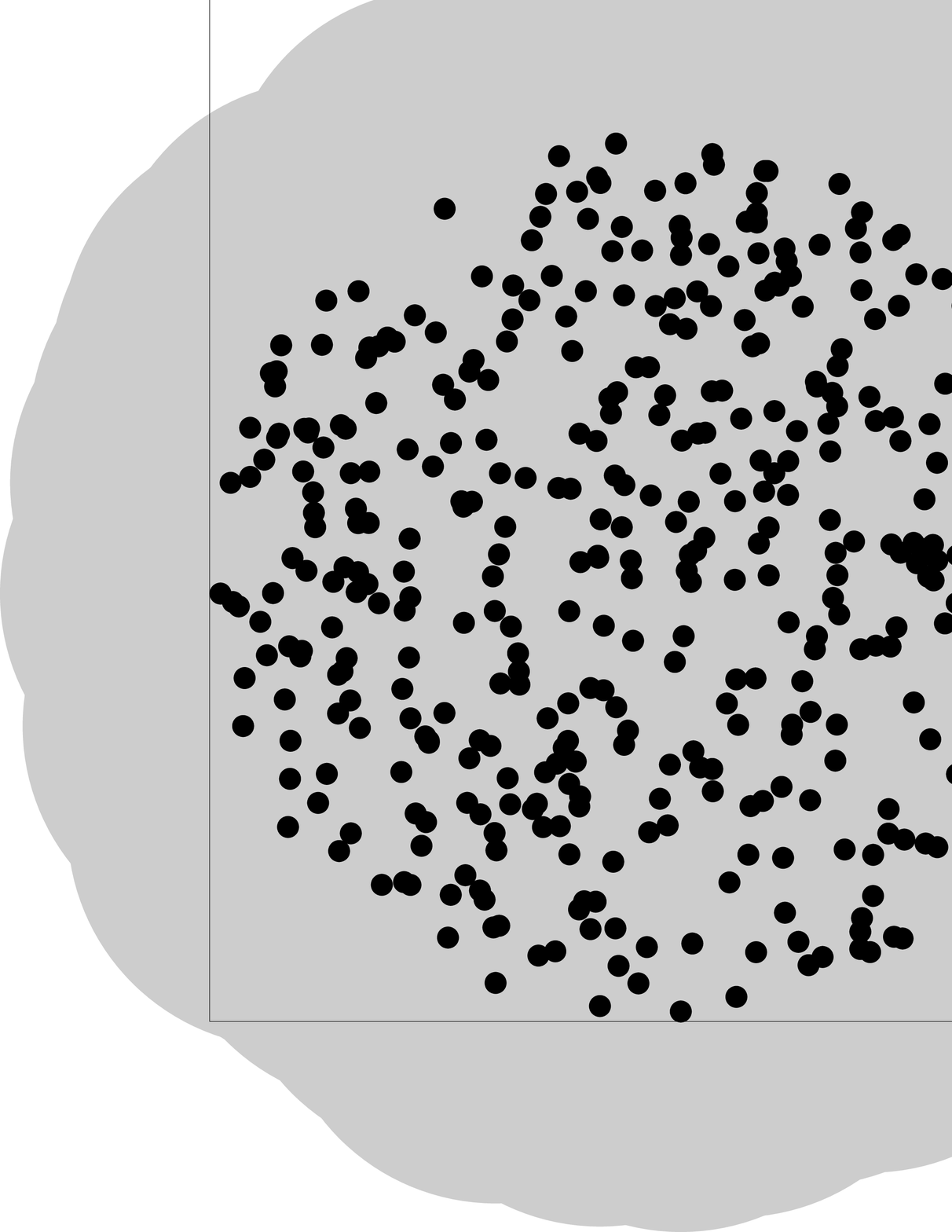}}}
\hspace{5cm}
\subfigure[]{\scalebox{0.04}{\includegraphics[]{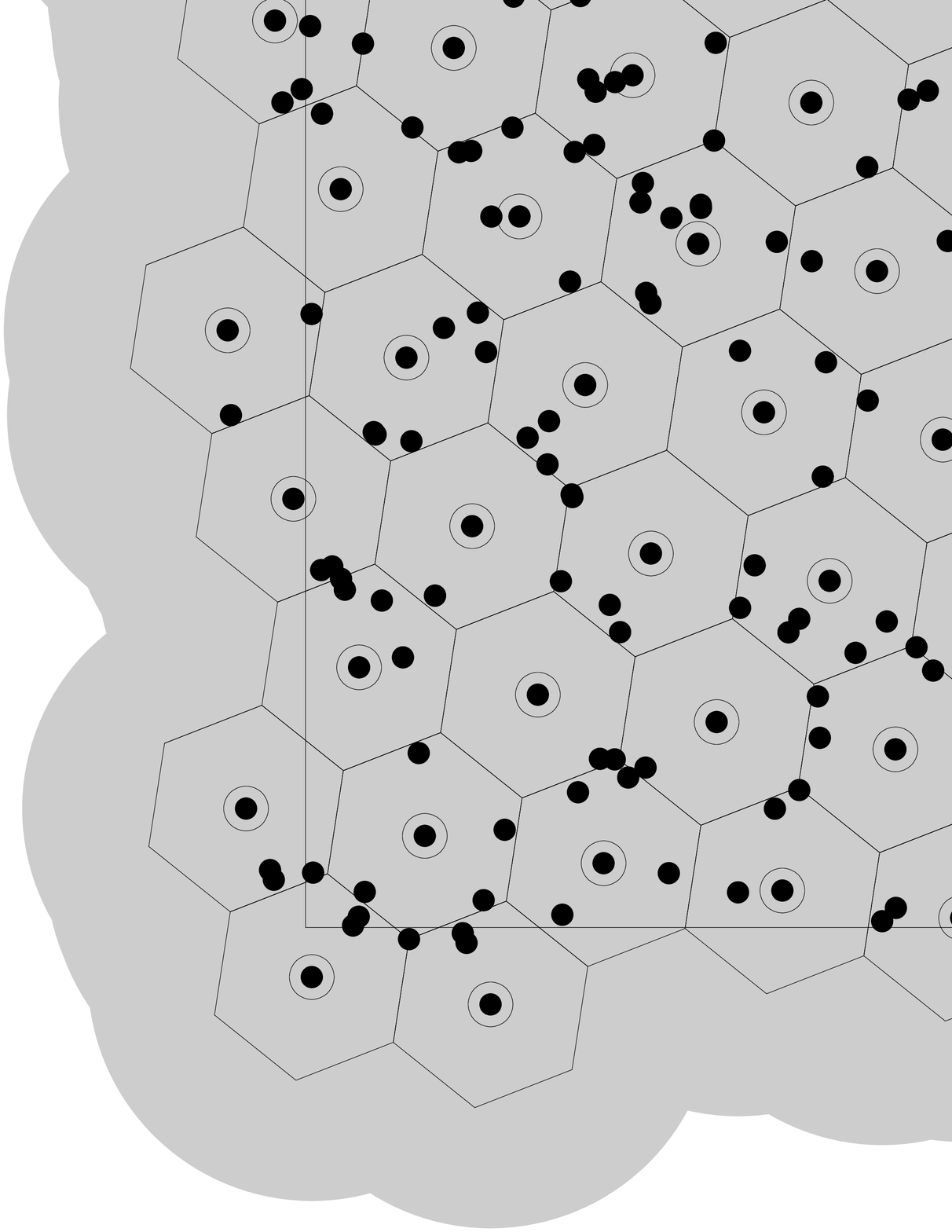}}}
\vspace{1cm}
\\
\subfigure[]{\scalebox{0.04}{\includegraphics[]{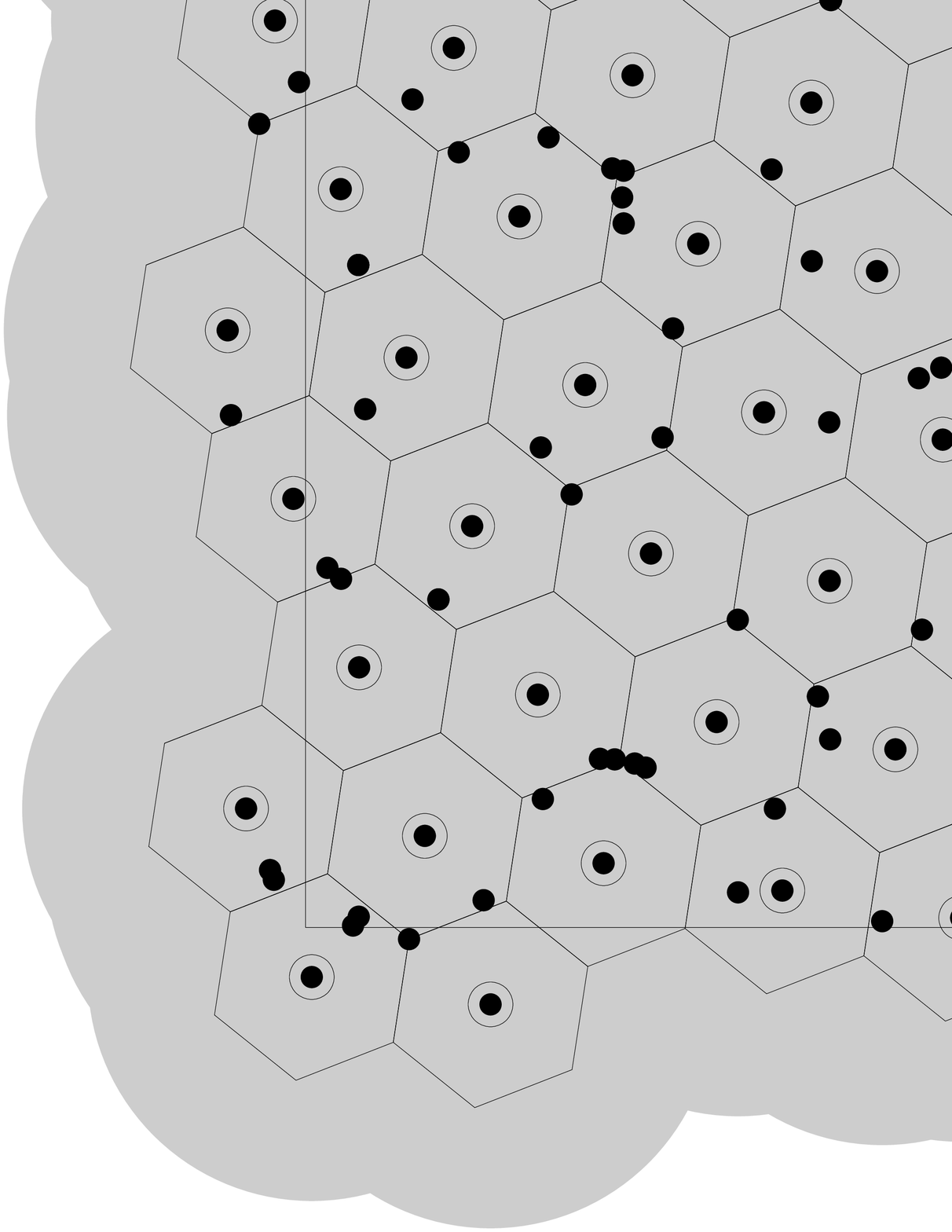}}}
\hspace{5cm}
\subfigure[]{\scalebox{0.04}{\includegraphics[]{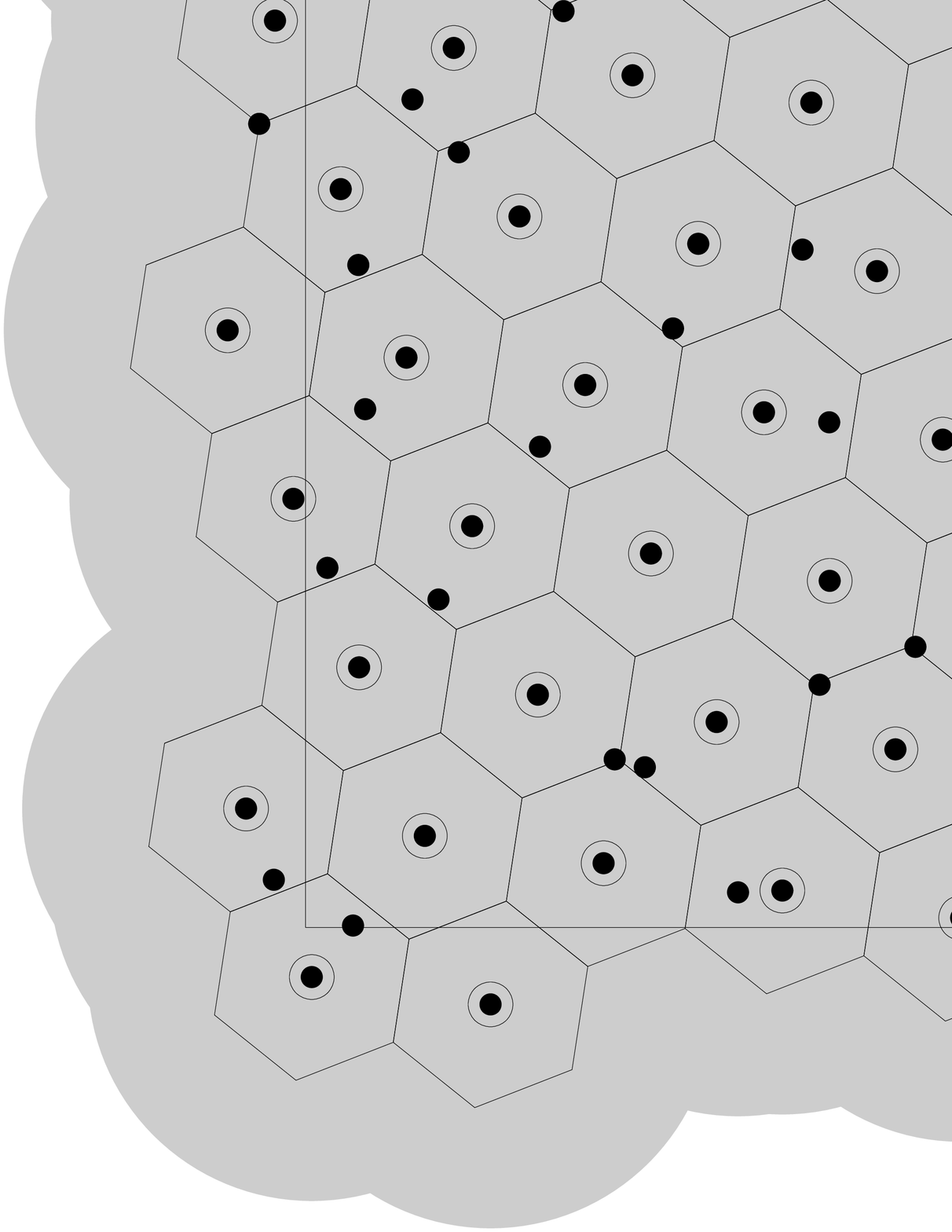}}}
\caption{Coverage of an irregular AoI under
PP2.}\label{fig:grumo_corridoio_pp2}
\end{figure}

We compare our proposal to one of the most acknowledged and
cited algorithms \cite{LaPorta06},
which is based on the construction of the Voronoi diagram determined by the current
sensor deployment. According to this approach, each sensor adjusts
its position on the basis of a local calculation of its Voronoi cell.
This information is used to detect coverage holes and, consequently, calculate
new target locations according to three possible variants.
Among these variants we chose Minimax, that gives better guarantees in terms of coverage extension.
Of this algorithm we adopted all the mechanisms provided to preserve connectivity, to guarantee the algorithm termination, to avoid oscillations and to deal with position clustering
\cite{LaPorta06}. In the rest of this section this algorithm will be named VOR$_\texttt{MM}$.

We set the parameters

$R_\texttt{tx}=11$ m and $R_\texttt{s}=5$ m. Such values satisfy the
VOR$_\texttt{MM}$ requirement $R_\texttt{tx} > 2 R_\texttt{s}$
detailed in \cite{LaPorta06} and do not significantly affect the
qualitative evaluation of \HC. The sensor speed is set to 1 m/sec.

\subsection{Examples of mobile sensor deployment}

We show some examples of deployment evolution under the two \HC\
modes: PP1 and PP2.

Figures \ref{fig:grumo_corridoio_pp1} and \ref{fig:grumo_corridoio_pp2} give a synthetic representation of
how the sensor deployment evolves under PP1 and PP2, respectively,
when
400 sensors are initially located in a high density region.
The AoI has a complex shape in which a narrows connects two
square regions
40 m $\times$ 40 m.
Notice that previous approaches fail
when applied to such irregular AoIs.
For example, VOR$_\texttt{MM}$ does not contemplate the
presence of concavity in the AoI, while the
virtual force based approaches are not able to push sensors through narrows \cite{Howard2002}.

As a second example of sensor deployment, we show experiments conducted with three different starting configurations
over an AoI which is a square 80 m $\times$ 80 m.
\begin{figure}[t]
\begin{center}
\vspace{1cm} \subfigure[]{\scalebox{0.030}{
\includegraphics[]{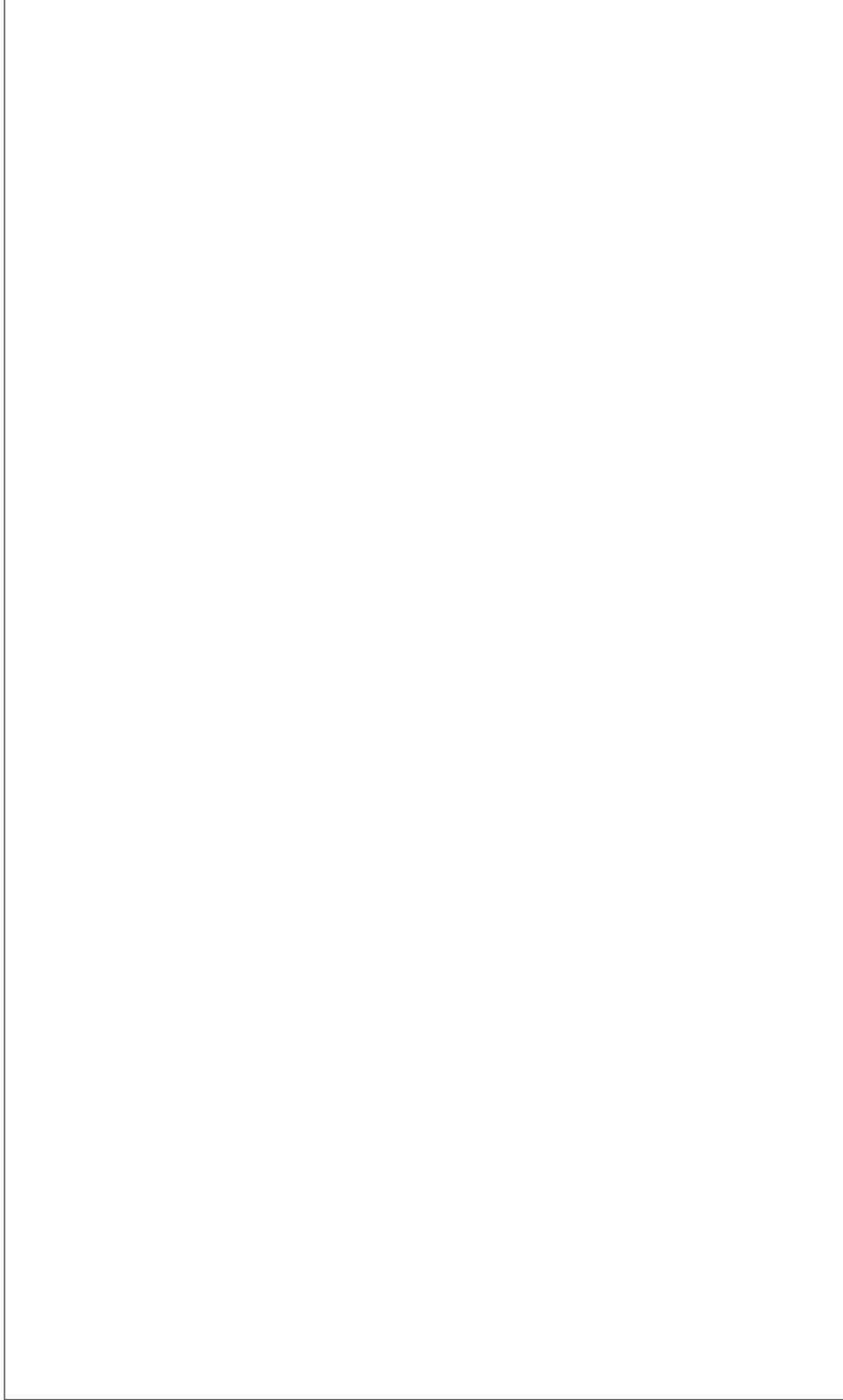}}}
\hspace{3cm} \subfigure[]{\scalebox{0.030}{
\includegraphics[]{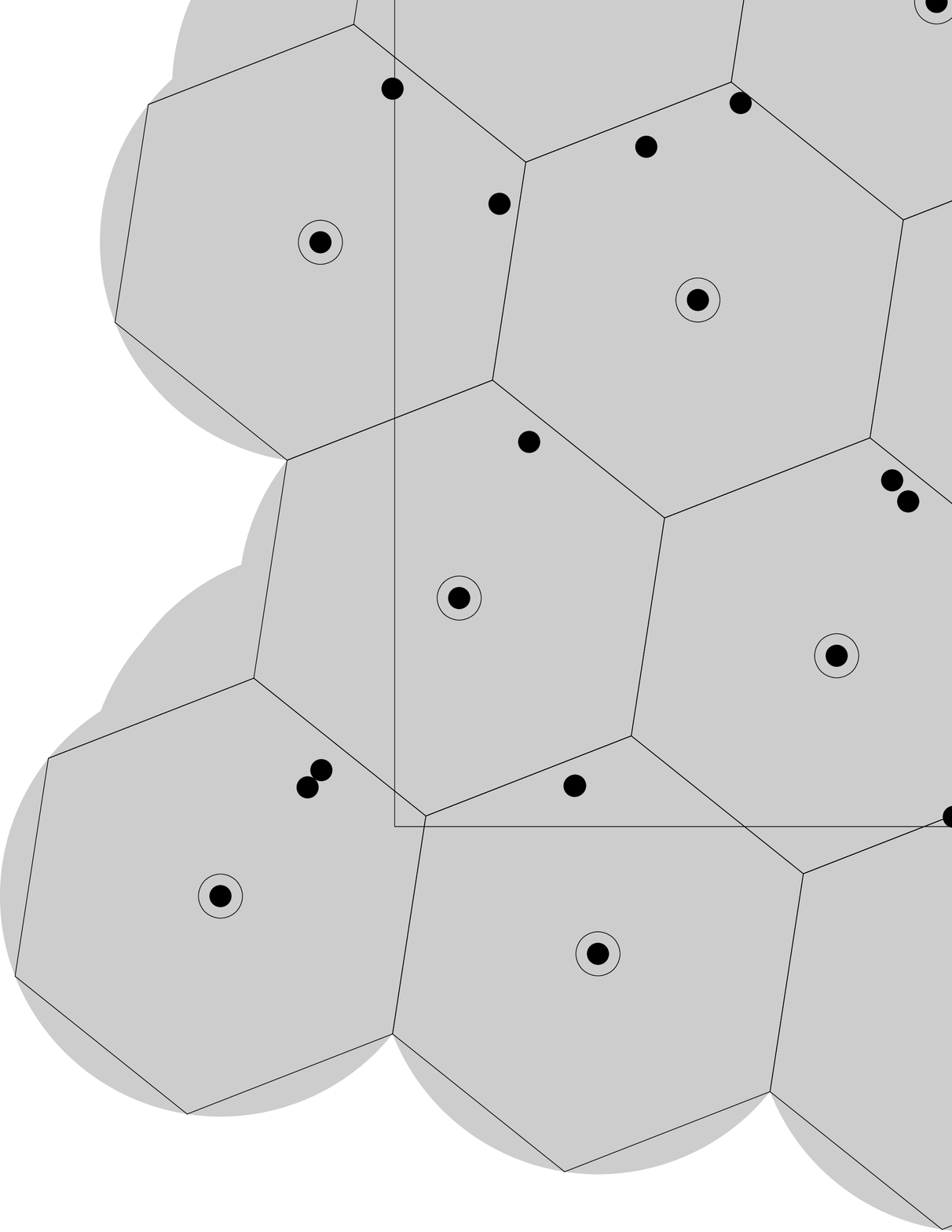}}}
\hspace{3cm} \subfigure[]{\scalebox{0.030}{
\includegraphics[]{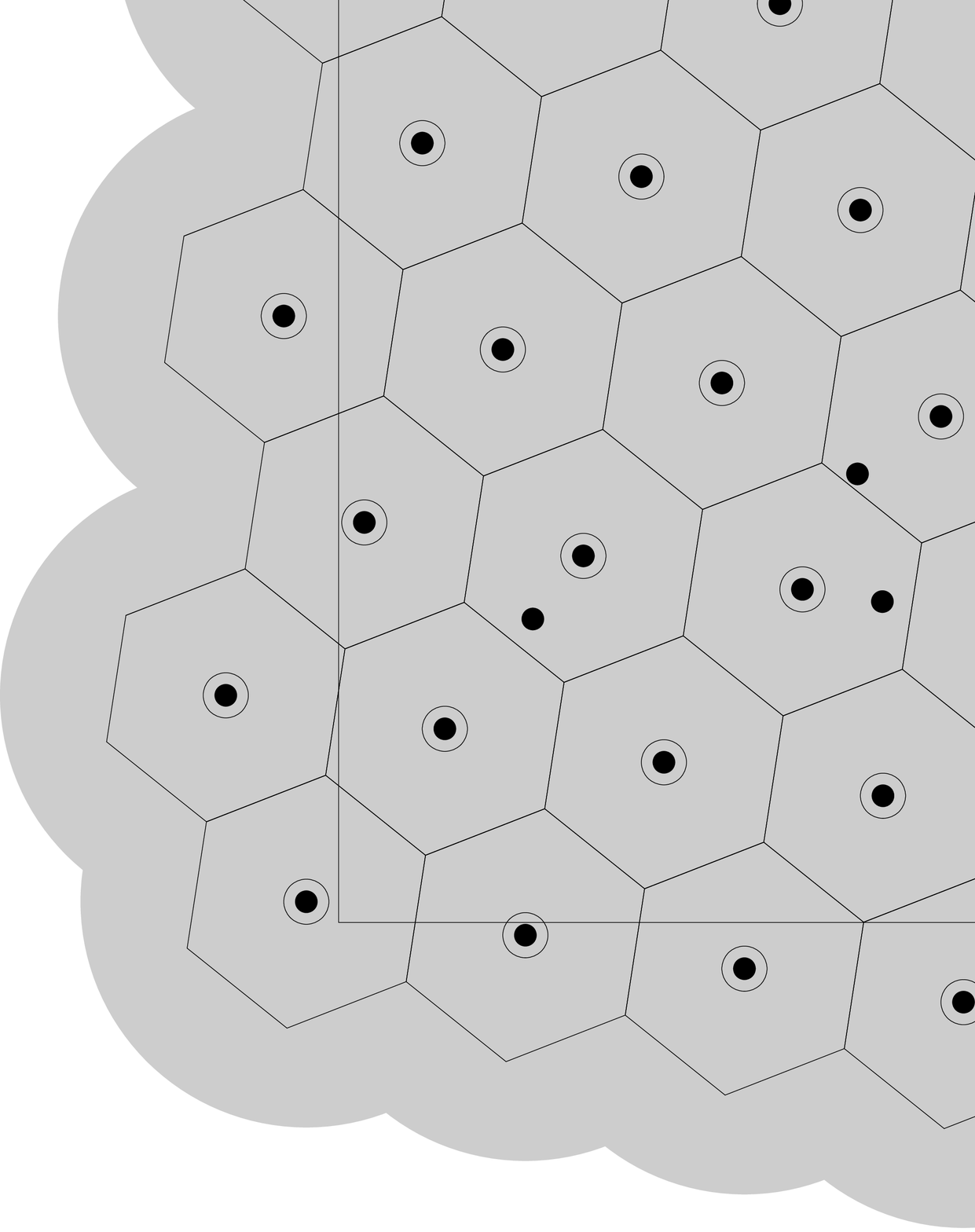}}}
\hspace{3cm} \subfigure[]{\scalebox{0.030}{
\includegraphics[]{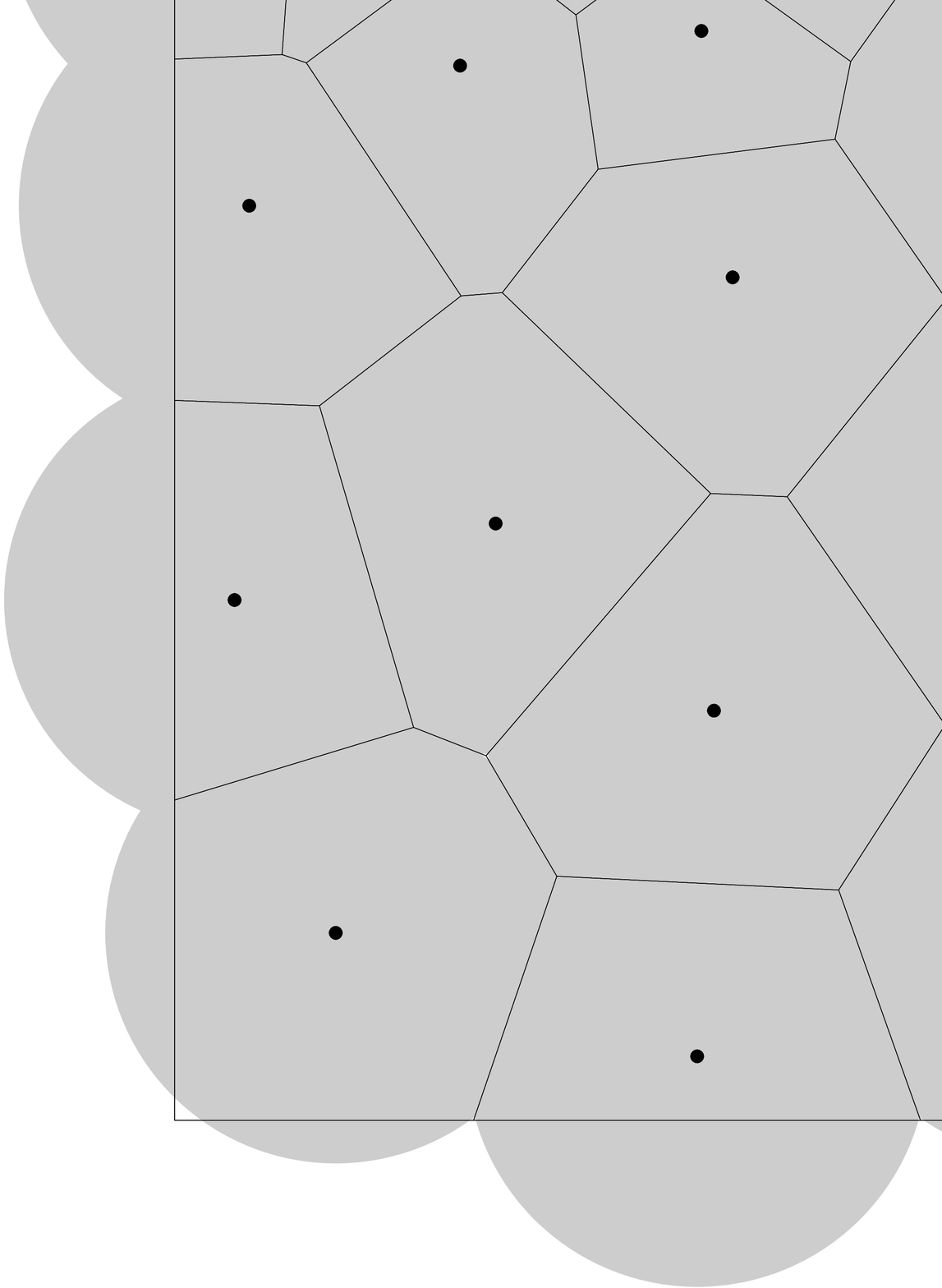}}}
\end{center}
\caption{Trail initial deployment (a) and comparison among PP1 (b), PP2 (c) and VOR$_\texttt{MM}$ (d).} \label{fig:deployment_comparison_SS}
\end{figure}
More precisely, in the first configuration, the initial deployment evidences a trail of sensors which crosses
the AoI, as shown in
Figure \ref{fig:deployment_comparison_SS}(a).
In the second configuration
the sensors are densely  placed in a corner of the AoI,
 as shown in Figure \ref{fig:deployment_comparison_GL}(a).
In the third configuration the initial deployment consists in a high
density region at the center of the AoI. ,

Notice that the first two initial deployments reflect the realistic scenarios
in which sensors are dropped from
an aircraft and sent from a safe location at the boundaries of the AoI.
The third deployment is introduced as is
widely studied in the literature, see for example \cite{LaPorta06} and \cite{Yang2007}.

\begin{figure}[t]
\begin{center}
\vspace{2cm}
\subfigure[]{\scalebox{0.03}{
\includegraphics[]{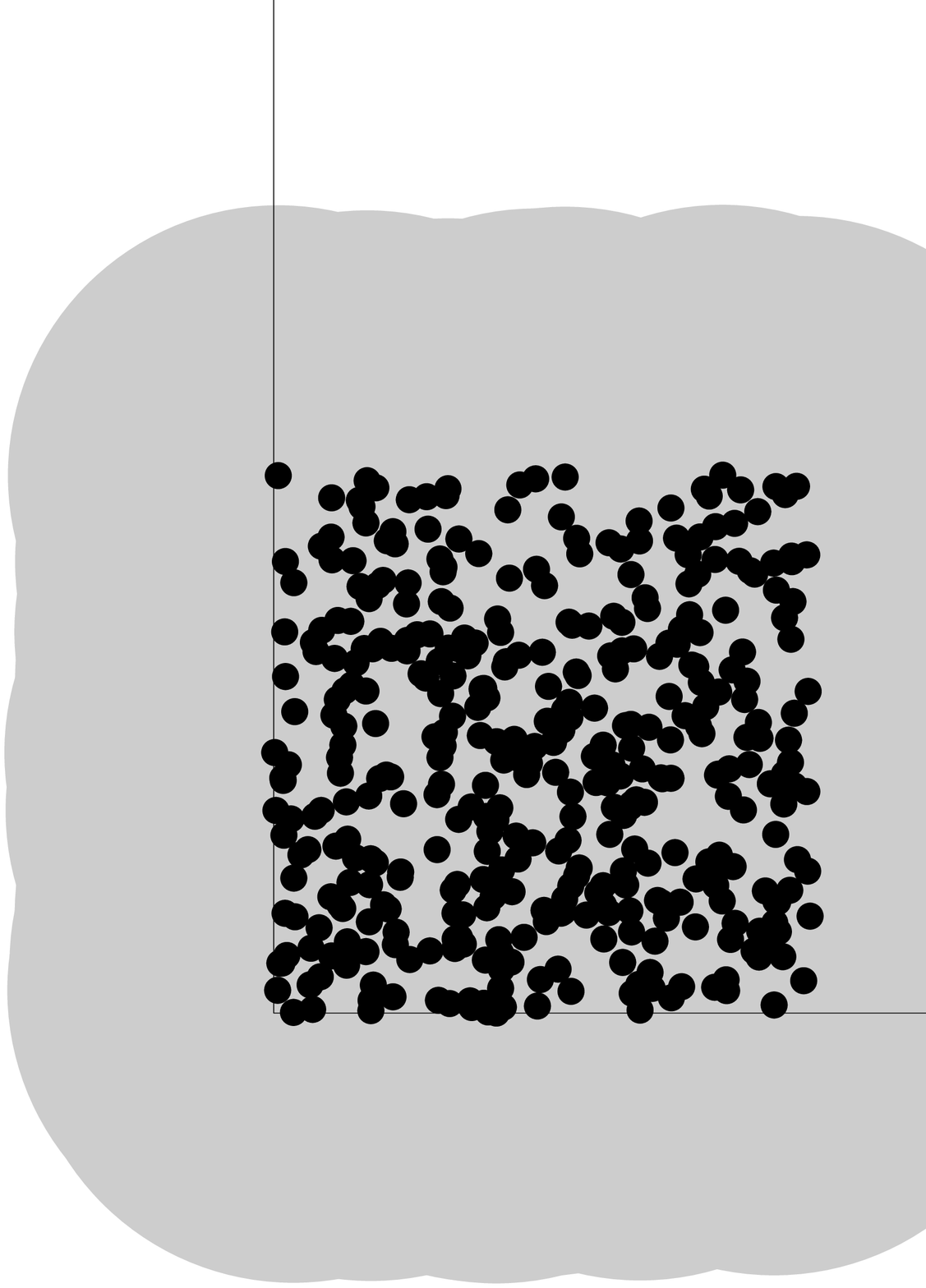}}}
\hspace{3cm} \subfigure[]{\scalebox{0.03}{
\includegraphics[]{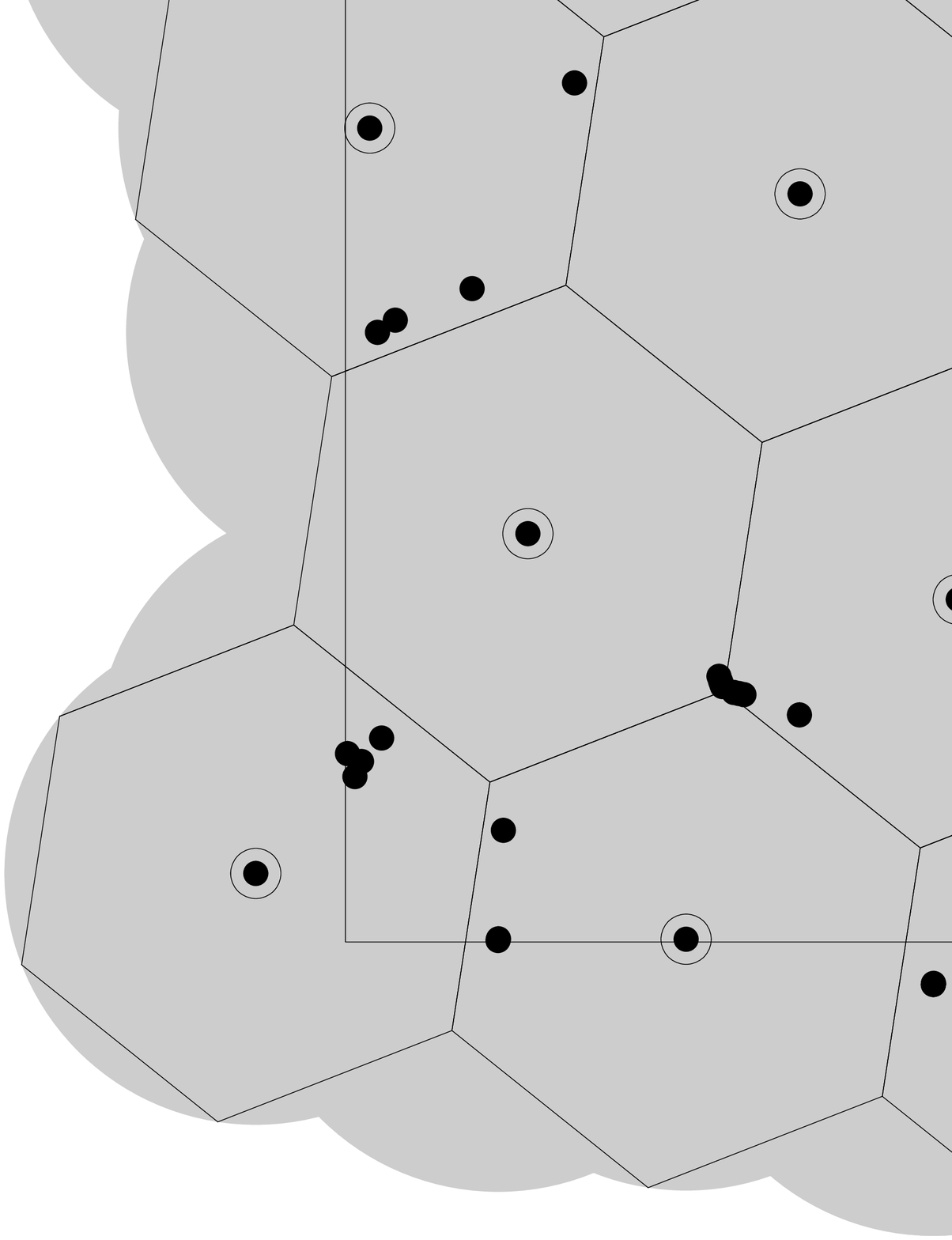}}}
\hspace{3cm} \subfigure[]{\scalebox{0.03}{
\includegraphics[]{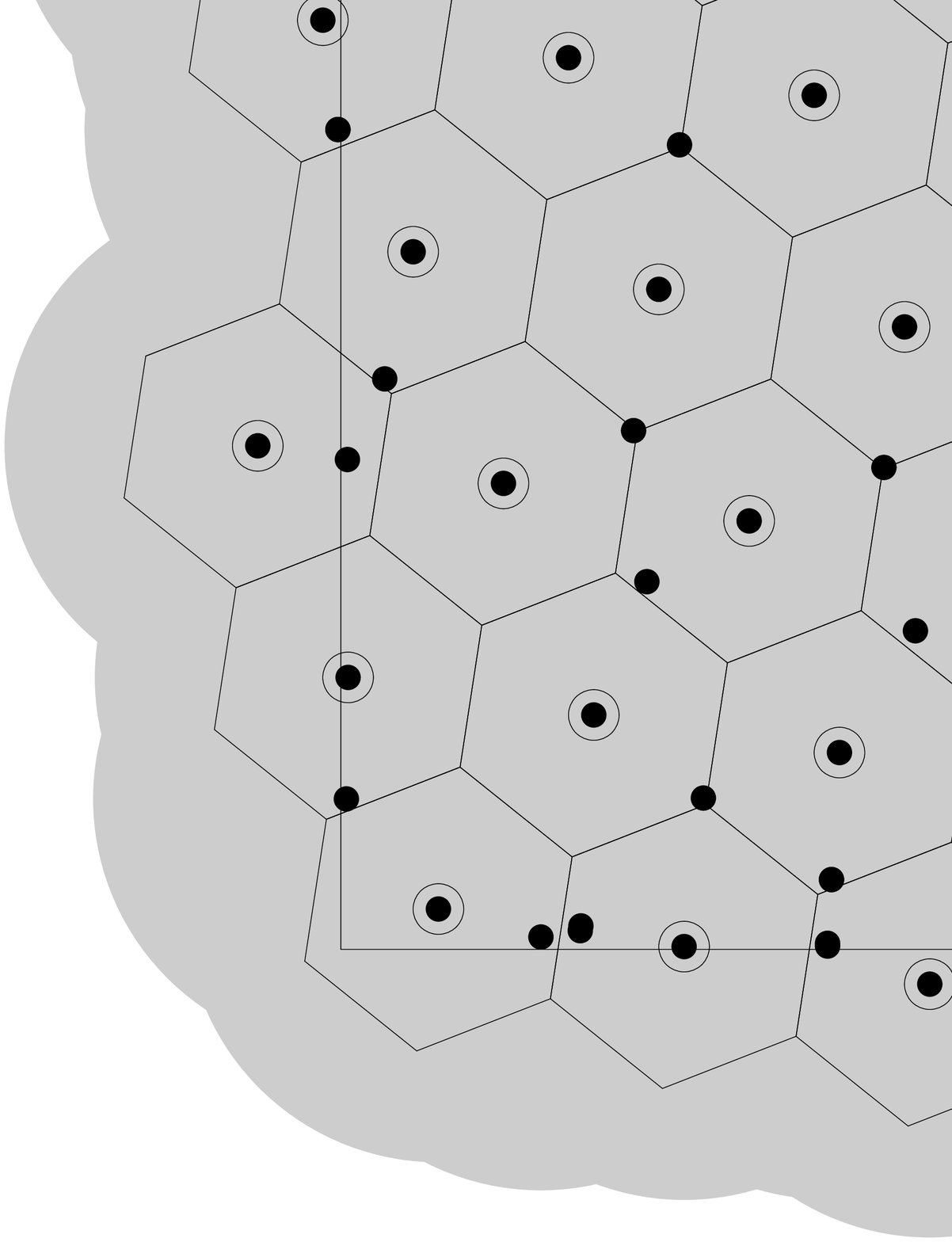}}}
\hspace{3cm}  \subfigure[]{\scalebox{0.03}{
\includegraphics[]{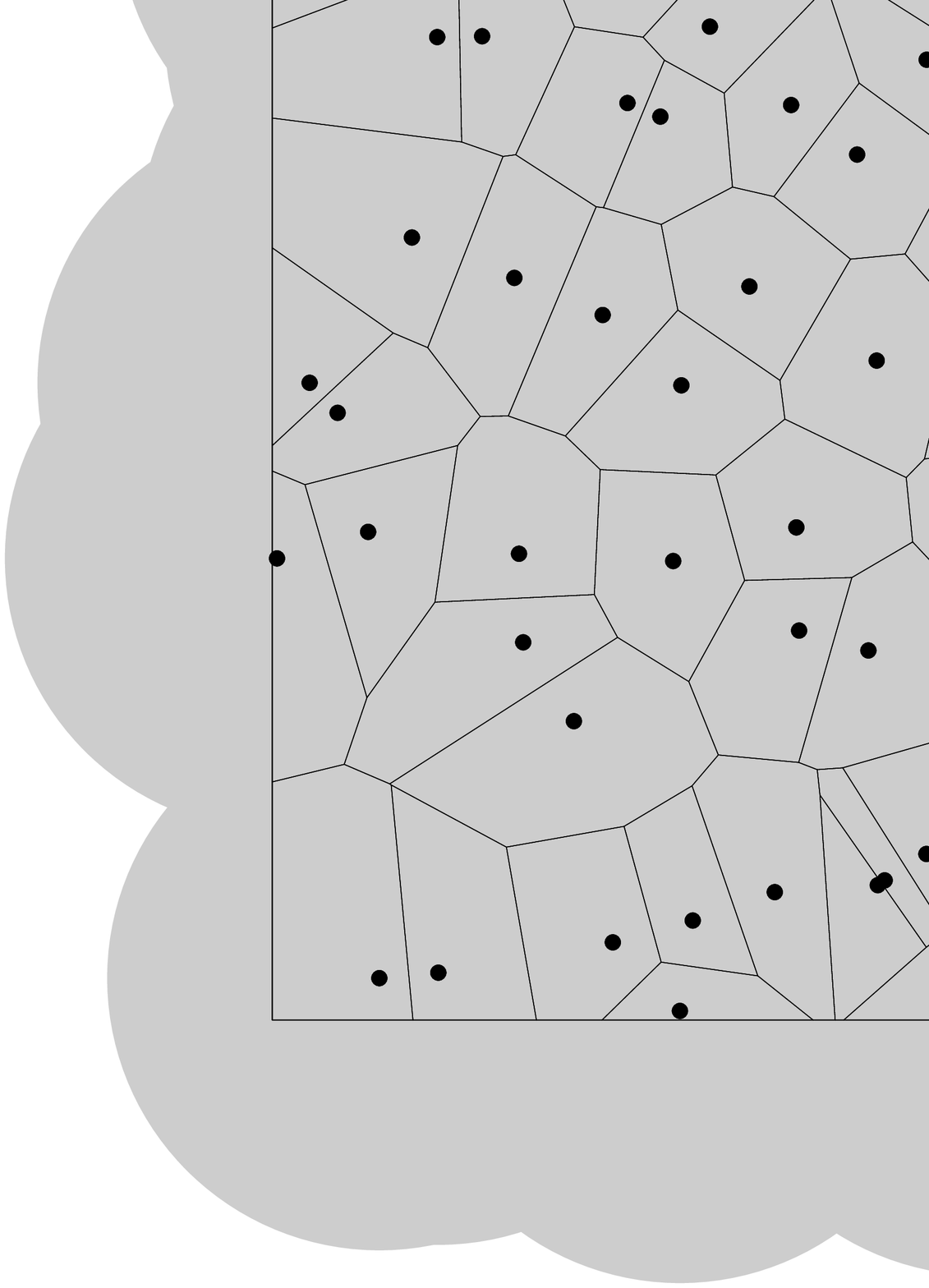}}}
\end{center}
\caption{Safe location initial deployment (a) and comparison among PP1 (b), PP2 (c) and VOR$_\texttt{MM}$ (d).} \label{fig:deployment_comparison_GL}

\end{figure}

In Figures \ref{fig:deployment_comparison_SS} and
\ref{fig:deployment_comparison_GL}, the subfigures indicated with
(b), (c) and (d) show the final deployments achieved by PP1, PP2 and
VOR$_\texttt{MM}$ respectively.

\subsection{Performance comparisons}

In the following we compare the performance of PP1, PP2 and VOR$_\texttt{MM}$ when executed over
a squared AoI, 80 m $\times$ 80 m.

In order to make reliable performance comparisons, we show the average results of 30
simulation runs (conducted by varying the seed for the generation of the initial deployment).

We compare the behavior of the three algorithms
with respect to several performance objectives: energy consumption,
coverage uniformity, termination and coverage completion time.

All the figures from \ref{fig:density} to \ref{fig:time} contain three plots each.
Plot (a) describes the performance obtained when starting from the trail initial deployment,
plot (b) refers to the case
 in which sensors are initially deployed in a safe corner
while plot (c) is related to the case of a dense initial deployment in the center of the AoI.
For a
better readability, we adopt
different scales of the vertical axis for the three
scenarios.

\subsubsection{Coverage uniformity}\label{sec:uniformity_exp}

The three algorithms give
different importance to the uniformity of the coverage.
Indeed,
\HC\ aims at making the coverage as uniform as
possible.

In particular, PP1
builds a coarse grained grid, then it tries to uniform the coverage
only on the basis of a local satisfaction of the Moving Condition.

Instead PP2 constructs a fine grained grid by setting
the hexagon side at the minimum length which guarantees the
full coverage of the AoI, thus making sensors traverse longer distances
than other solutions.

On the contrary, VOR$_\texttt{MM}$ aims
at covering the AoI regardless of the uniformity of the final coverage, and
sensors stop moving when the AoI is fully covered.
\begin{figure*}
\centering
\begin{center}
\subfigure[]{\scalebox{0.32}{
\centering
\includegraphics[height = \textwidth, angle=-90]{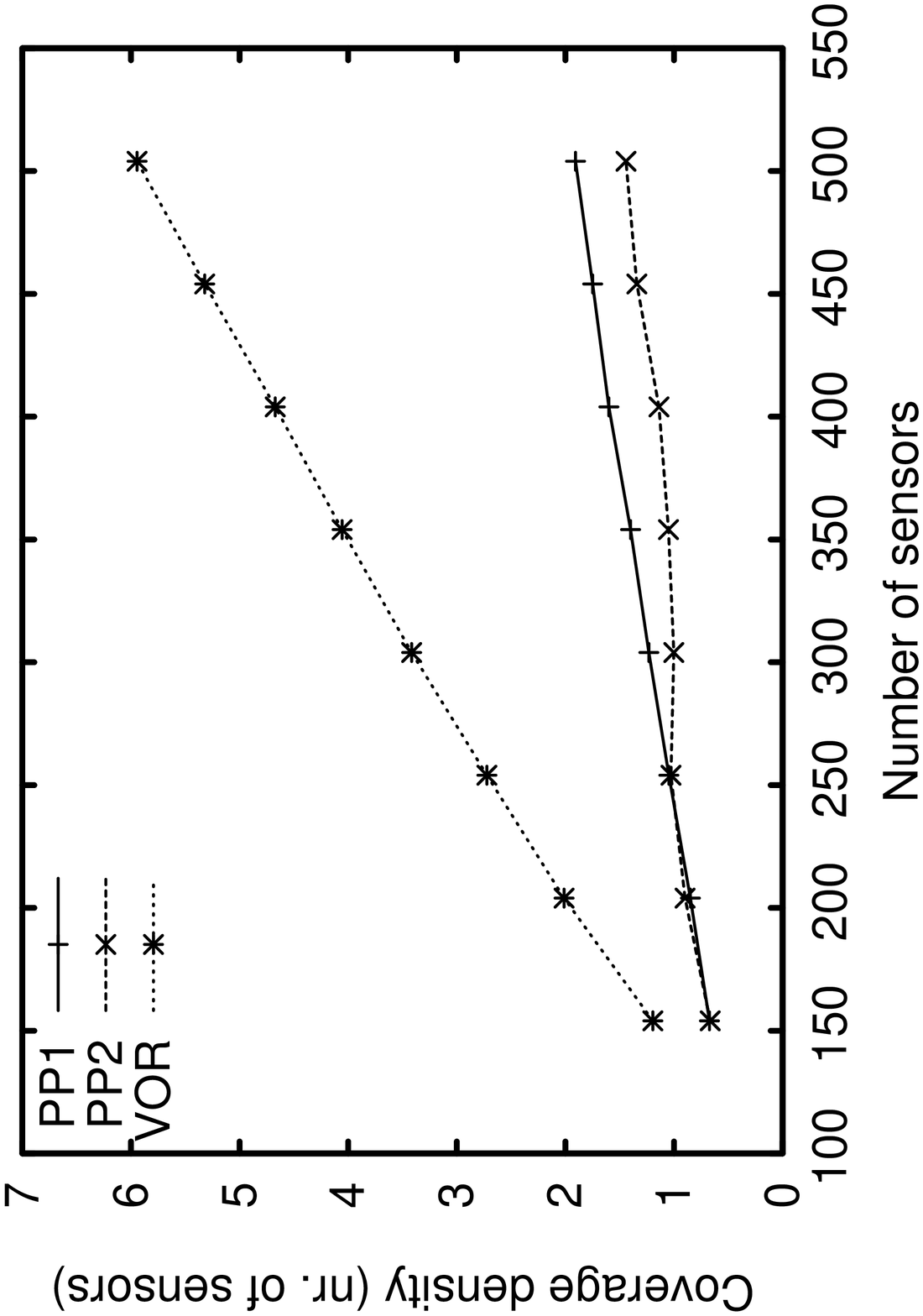}}}
\subfigure[]{\scalebox{0.32}{
\centering
\includegraphics[height = \textwidth, angle=-90]{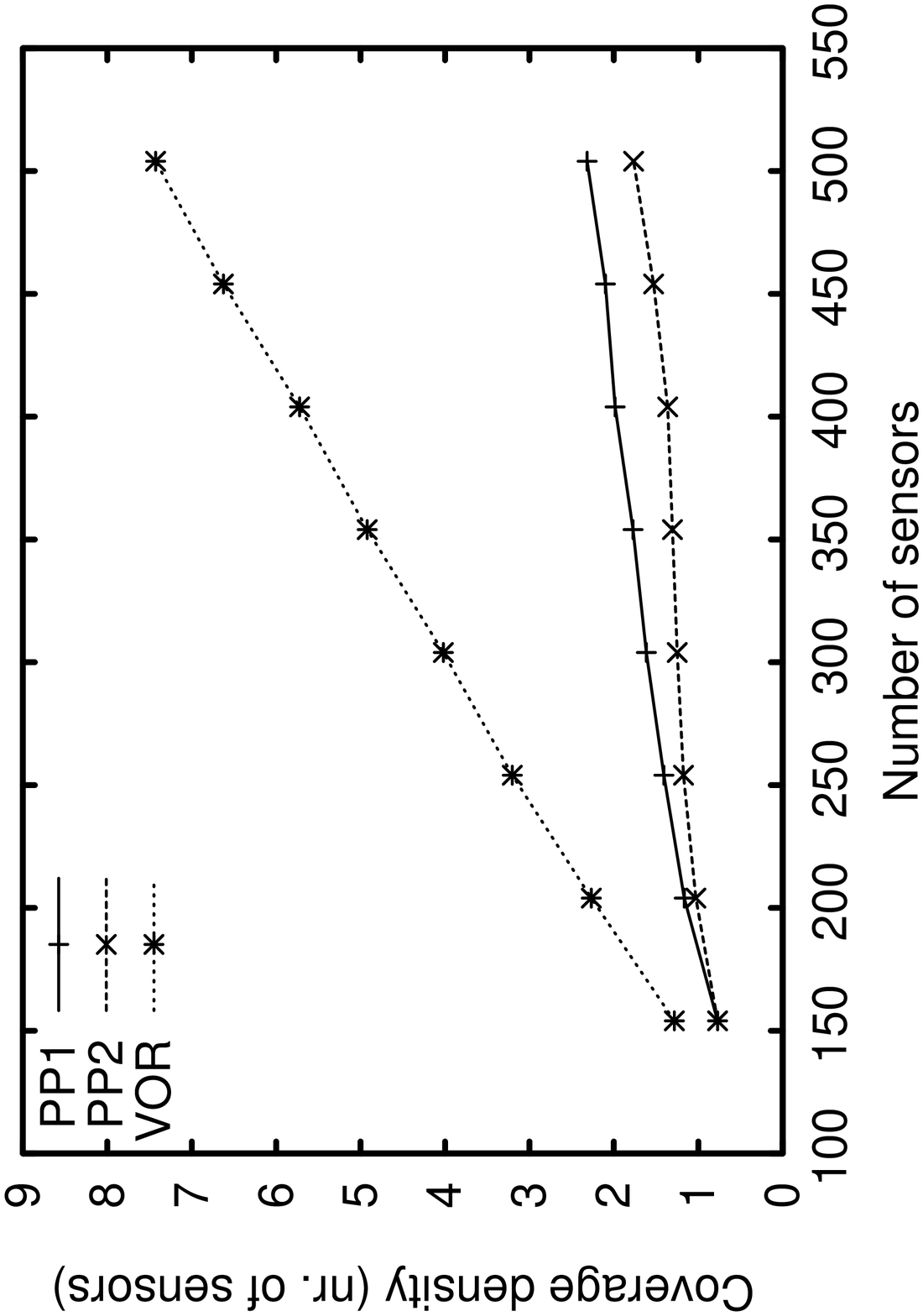}}}
\subfigure[]{\scalebox{0.32}{
\centering
\includegraphics[height = \textwidth, angle=-90]{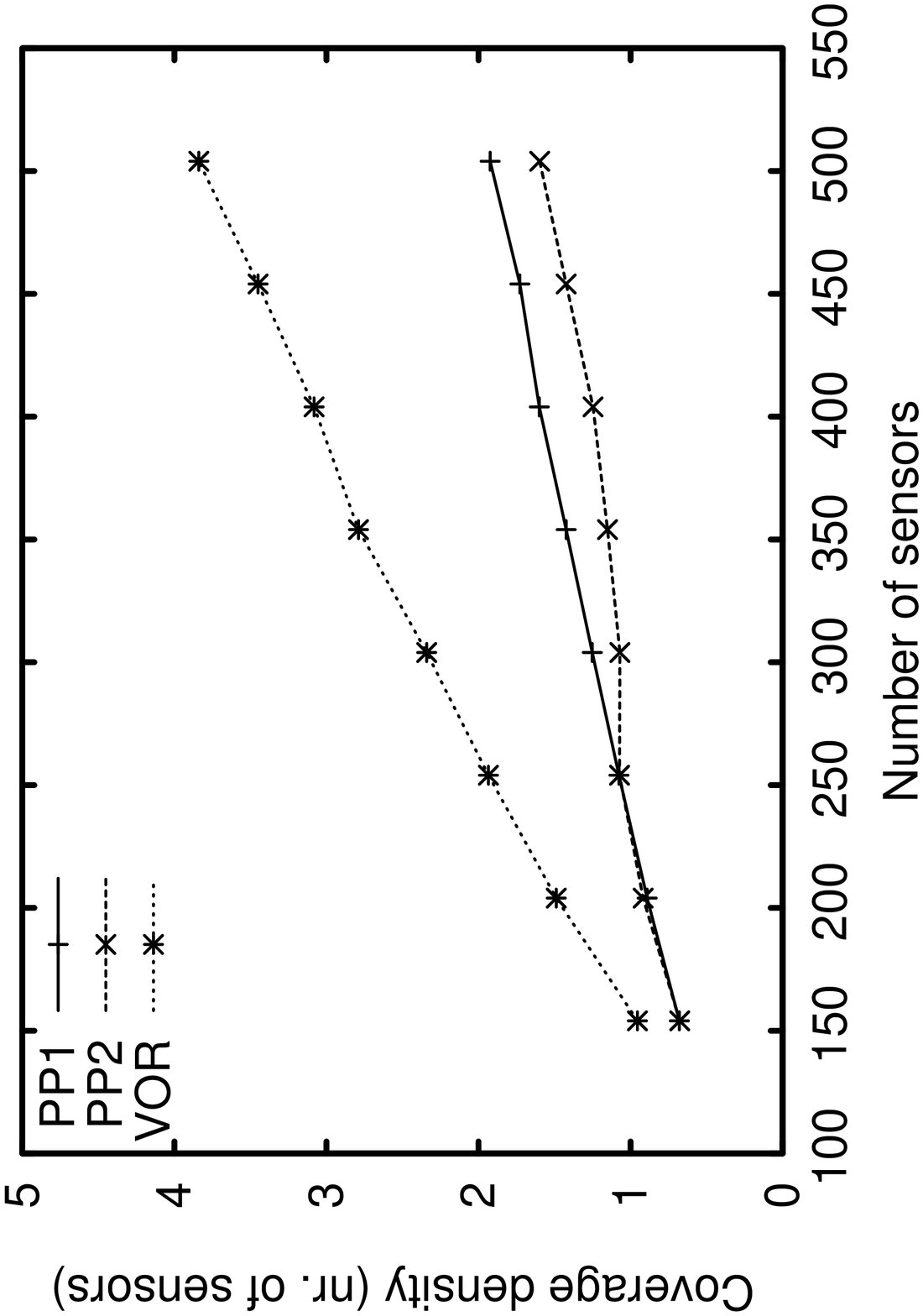}}}
\caption{Coverage density with trail (a), safe location (b) and
central (c) initial deployment.} \label{fig:density}
\end{center}
\end{figure*}

In order to evaluate the coverage uniformity, we compute the coverage density as the number of sensors covering the points of a squared mesh with side 1 m.

Figure \ref{fig:density} shows the standard deviation of the coverage density.
Notice that we do not show the average coverage density because it is not significant,
since it only depends on the number of available sensors.


The standard deviation of the coverage density achieved by PP1 and PP2 is smaller than the one obtained by VOR$_\texttt{MM}$.
In particular, VOR$_\texttt{MM}$ terminates as soon as the AoI is completely covered, without uniforming the density of the
sensor deployment, while PP1 and PP2 keep on moving until they uniform the  coverage.

This result is particularly important as  a uniformly redundant  sensor placement
provides  self-healing and fault tolerance  capabilities.
In the case of PP1, the presence of quite uniformly distributed slaves ensures the self-healing capability of the deployment, while
for what concerns PP2, the guaranteed continuous $k$-coverage gives  tolerance up to $(k-1)$ faults.

\subsubsection{Energy consumption}
We show an analysis of the energy consumption of the three algorithms
in terms of average traversed distance per sensor and average number of starting/braking actions.
Finally we give an overall evaluation which also comprises the
communication costs.

\begin{figure*}
\centering
\begin{center}
\subfigure[]{\scalebox{0.32}{
\centering
\includegraphics[height = \textwidth, angle=-90]{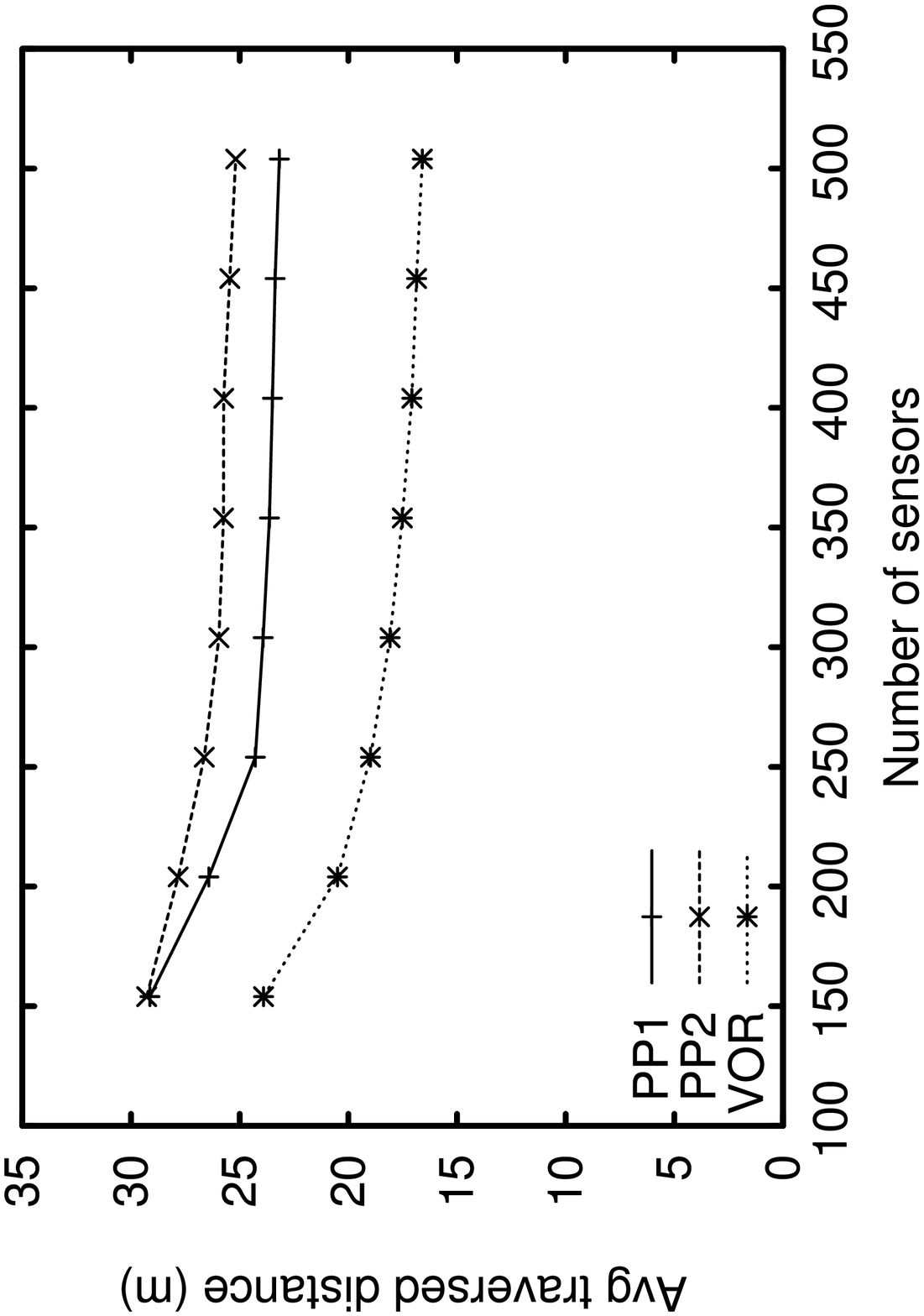}}}
\subfigure[]{\scalebox{0.32}{
\centering
\includegraphics[height = \textwidth, angle=-90]{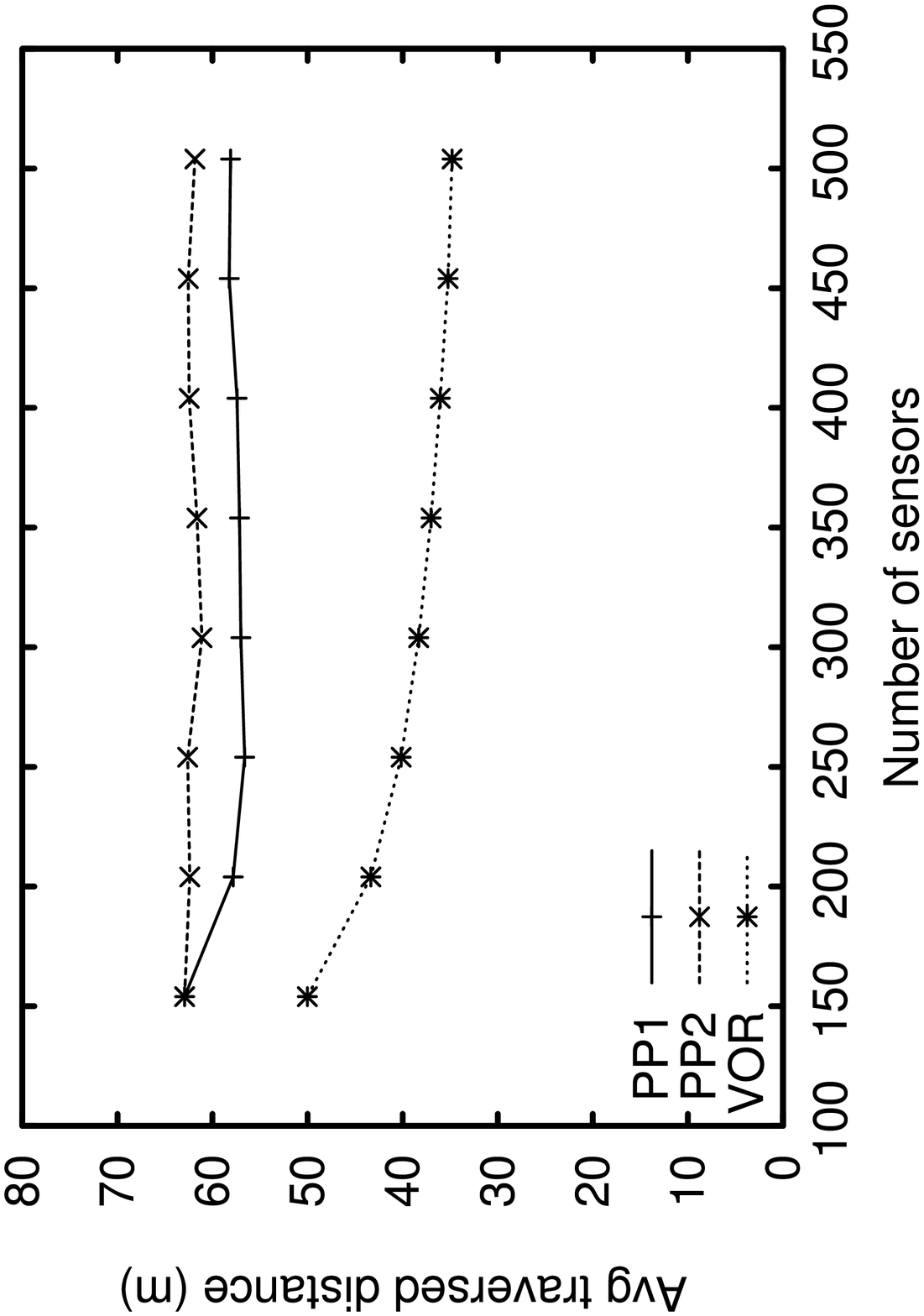}}}
\subfigure[]{\scalebox{0.32}{
\centering
\includegraphics[height = \textwidth, angle=-90]{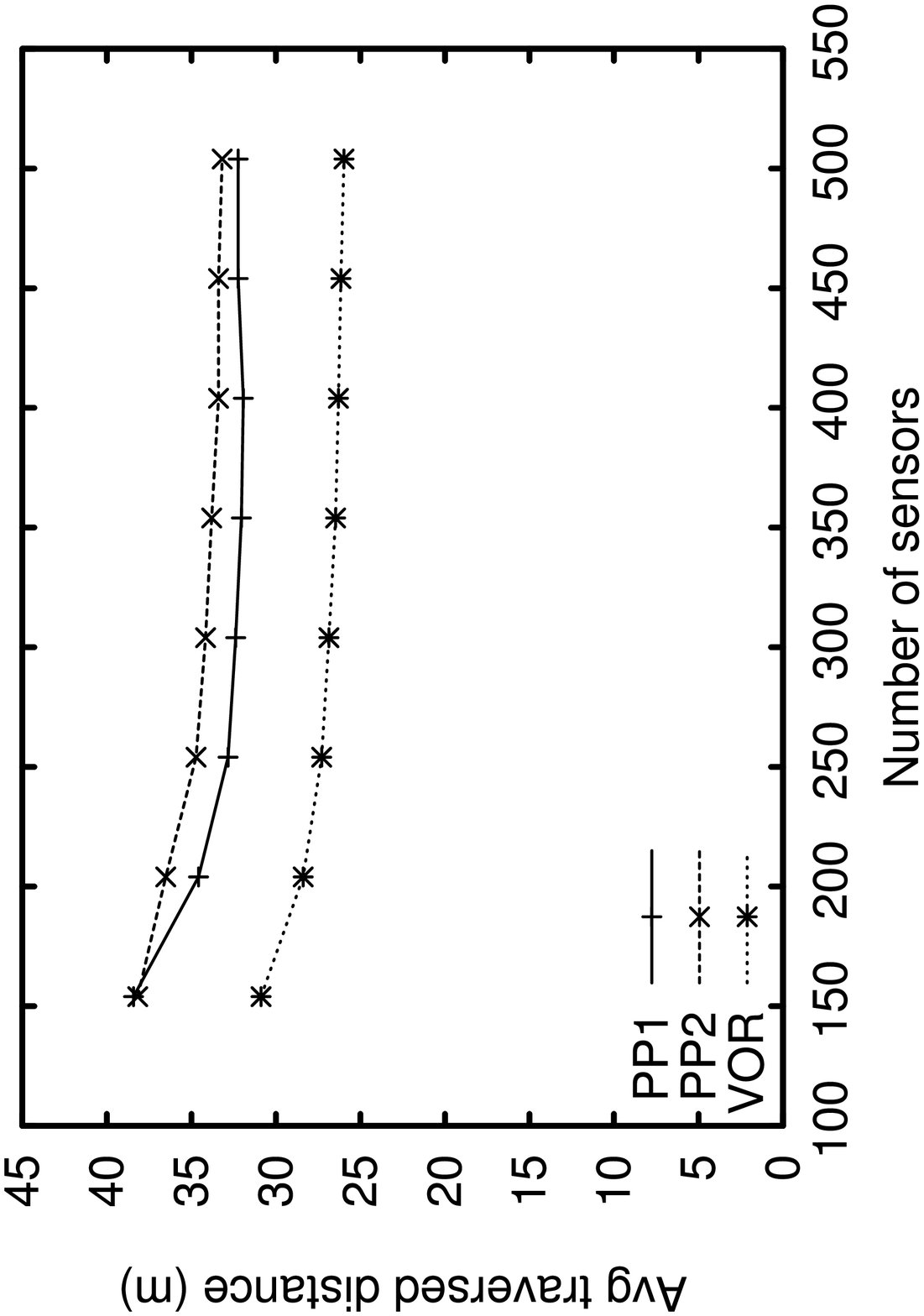}}}
\caption{Average traversed distance with  trail (a), safe location
(b) and central (c) initial deployment.} \label{fig:distance}
\end{center}
\end{figure*}

\paragraph{Average traversed distance per sensor.}
The different weight that the three algorithms give to the uniformity objective is reflected in the
different trends of the average traversed distance shown in Figure \ref{fig:distance}.

The average traversed distance of VOR$_\texttt{MM}$ decreases with the number of sensors.
This is due to the fact that
more and more sensors maintain their initial positions when no
coverage holes are detected. On the contrary, in both modes of \HC,
all sensors contribute to realize a quite uniform coverage, hence
the average traversed distance becomes approximately constant
for large numbers of sensors. This implies that VOR$_\texttt{MM}$ spends less
energy in movements than \HC\ at
the expense of the uniformity of the final coverage, in all the considered settings of the initial deployment.

\paragraph{Average number of starting/braking actions per sensor.}
We now consider the
number of starting/braking actions as
they require a high energy consumption \cite{LaPorta06}.
Figure \ref{fig:startstop} highlights that, when the number of sensors is relatively small, VOR$_\texttt{MM}$ performs a number of starting/braking actions higher than PP1 and PP2. On the contrary, when the number of sensors increases, VOR$_\texttt{MM}$ apparently performs better, showing a rapid decrease of the number of starting/braking actions. This is due to
the presence of a growing fraction of sensors which does not move at all, generating a final non uniform coverage as well as a high energy imbalance among sensors.
The most critical scenario for the VOR$_\texttt{MM}$ algorithm is the safe location initial deployment (notice the different vertical scales in Figure \ref{fig:startstop}).
\begin{figure*}
\centering
\begin{center}
\subfigure[]{\scalebox{0.32}{
\centering
\includegraphics[height = \textwidth, angle=-90]{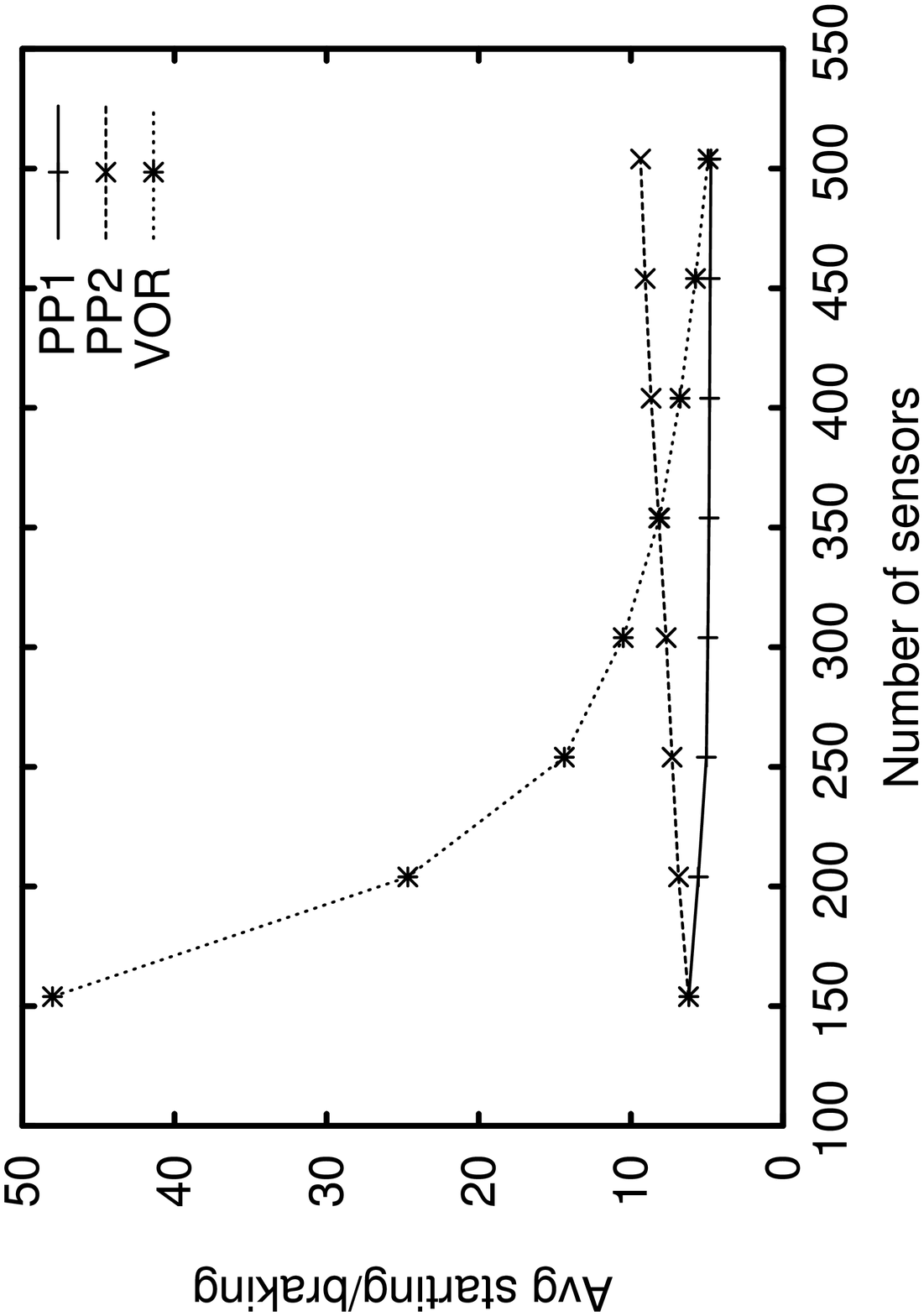}}}
\subfigure[]{\scalebox{0.32}{
\centering
\includegraphics[height = \textwidth, angle=-90]{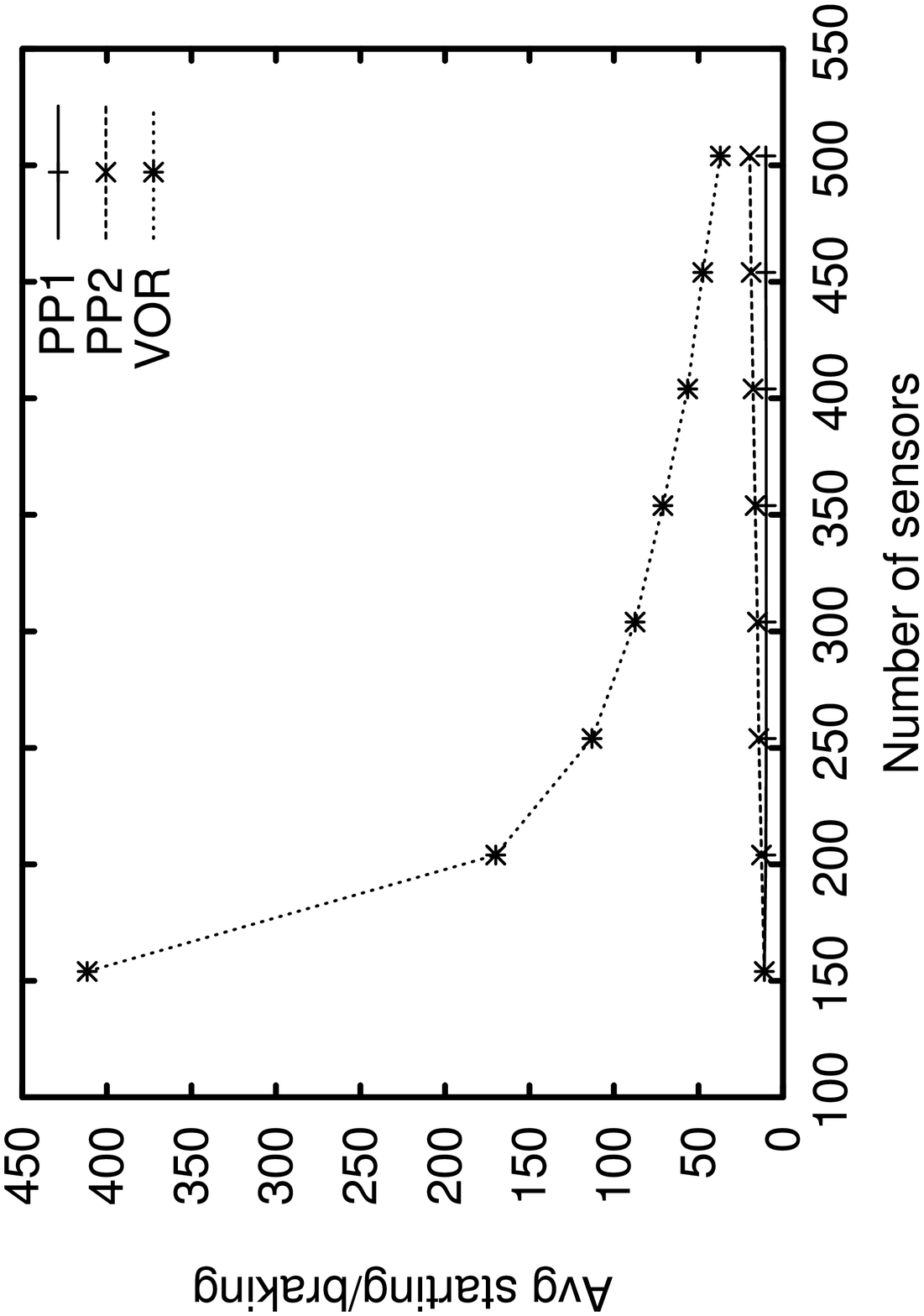}}}
\subfigure[]{\scalebox{0.32}{
\centering
\includegraphics[height = \textwidth, angle=-90]{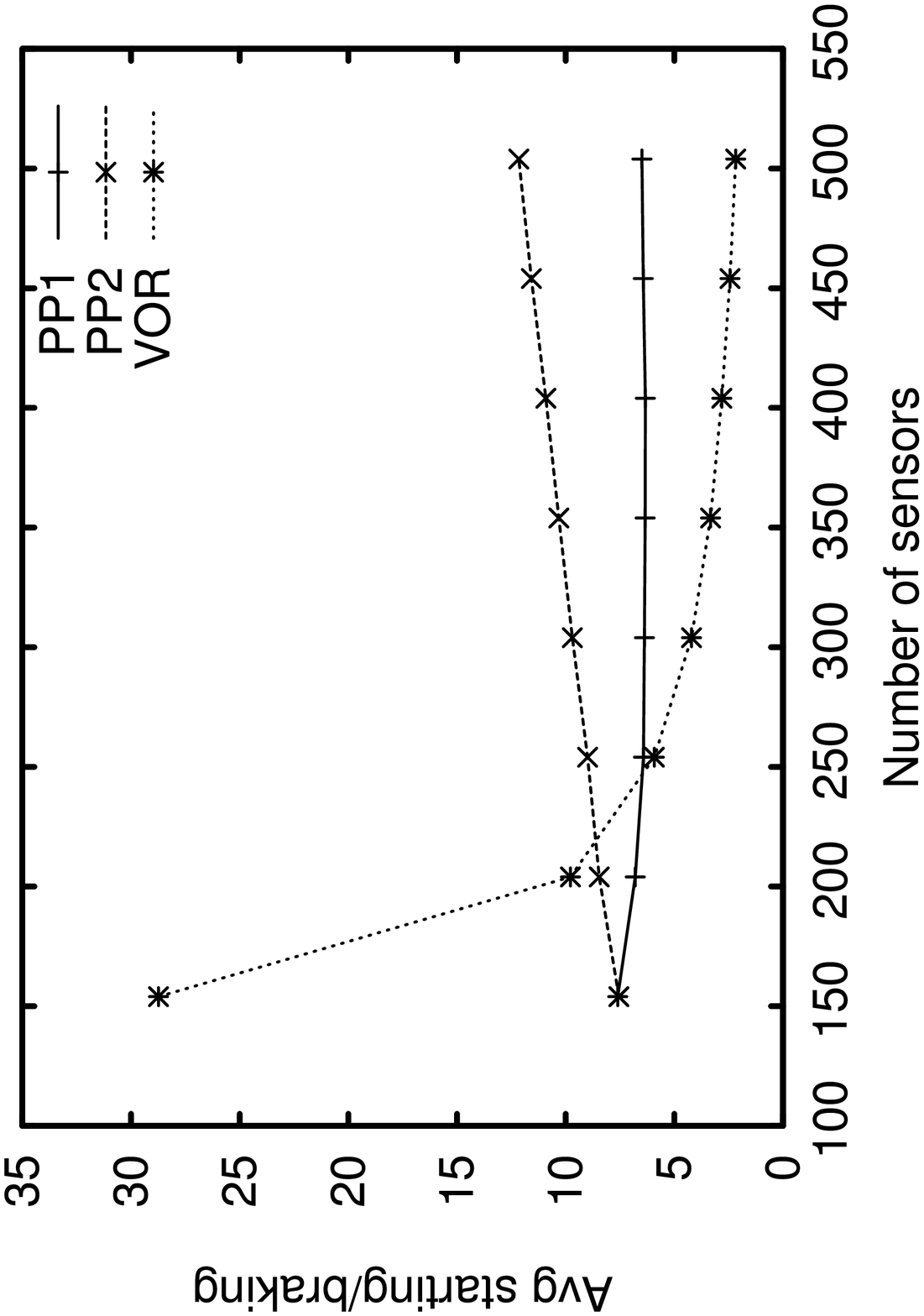}}}
\caption{Average number of starting/braking  with trail (a), safe location (b) and central (c) initial deployment.} \label{fig:startstop}
\end{center}
\end{figure*}

\paragraph{Average energy consumption.}
We now analyze the overall energy consumption of the
three algorithms.
We
utilize a unified energy consumption metric obtained as the sum of
the contributions given by movements, starting/braking actions and
communications.
The energy spent by sensors for communications and movements is expressed in energy units.
The reception of one message corresponds to one energy unit,
a single transmission costs the same as 1.125 receptions \cite{micamote2},
a 1 meter movement costs the same as 300 transmissions \cite{LaPorta06} and a starting/braking action  costs
the same as 1 meter movement \cite{LaPorta06}.

Figure
\ref{fig:energy} shows the energy consumption of PP1, PP2 and VOR$_\texttt{MM}$ in
the three considered scenarios.

\begin{figure*}[h]
\centering
\begin{center}
\subfigure[]{\scalebox{0.32}{
\centering
\includegraphics[height = \textwidth, angle=-90]{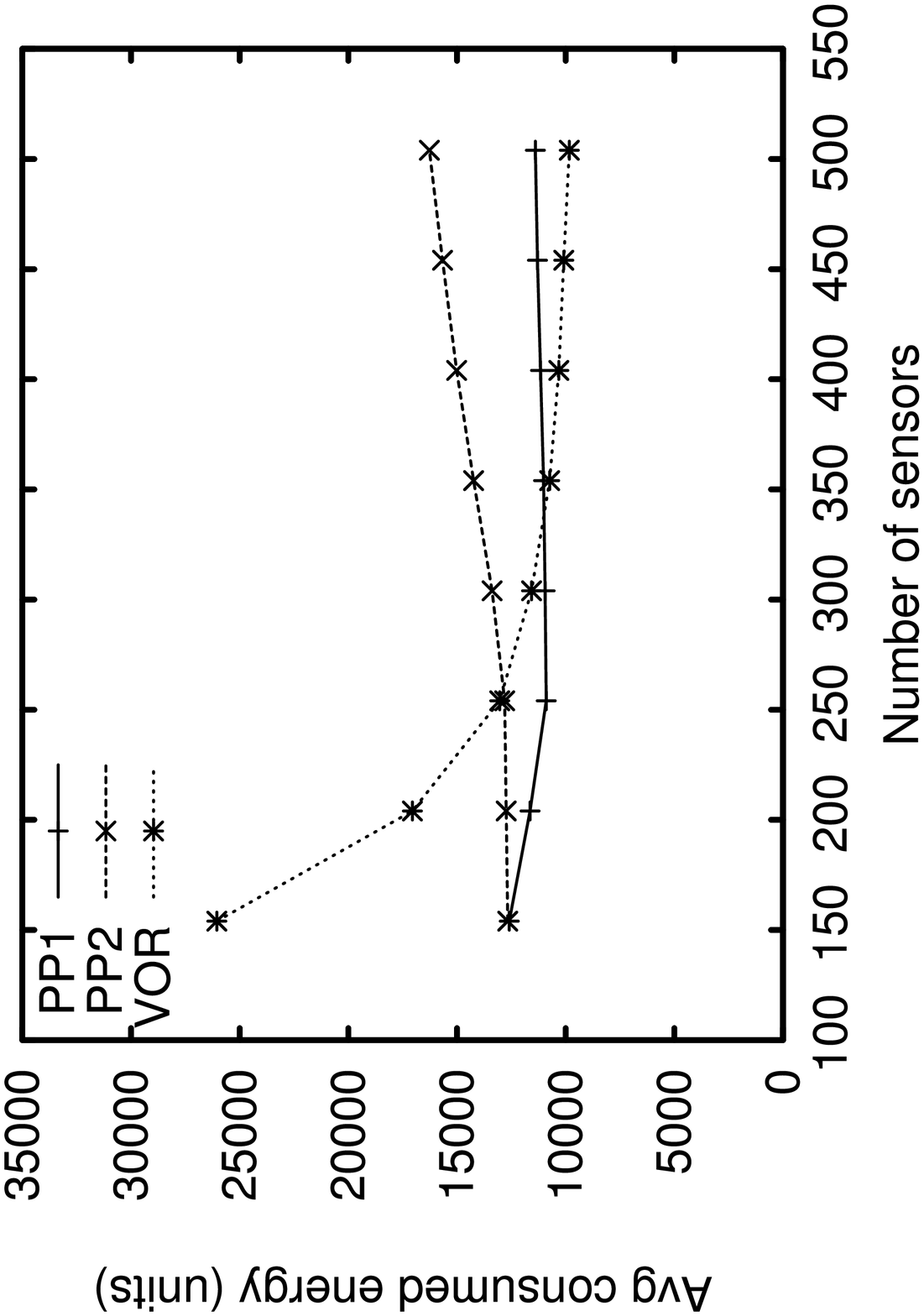}}}
\subfigure[]{\scalebox{0.32}{
\centering
\includegraphics[height = \textwidth, angle=-90]{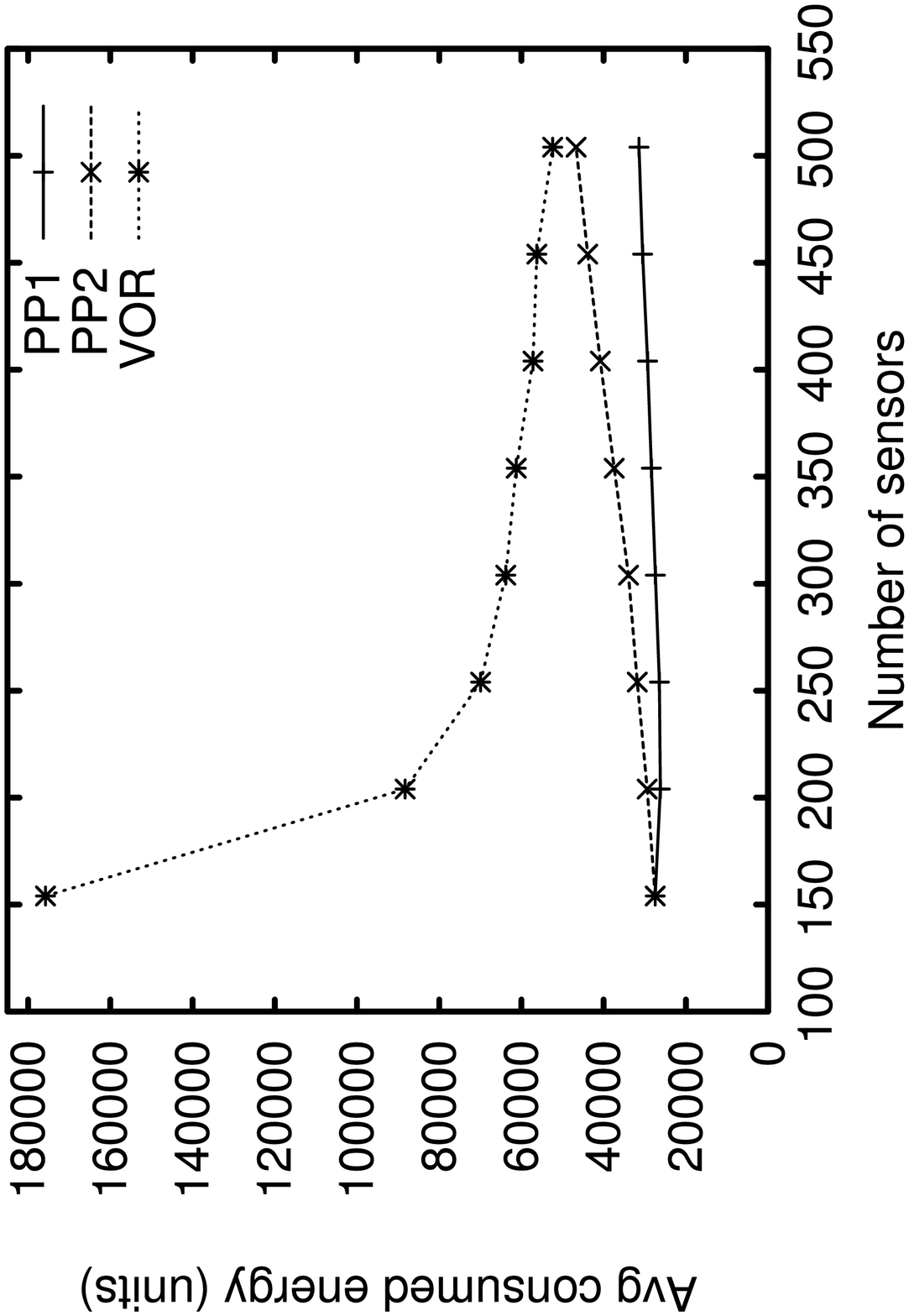}}}
\subfigure[]{\scalebox{0.32}{
\centering
\includegraphics[height = \textwidth, angle=-90]{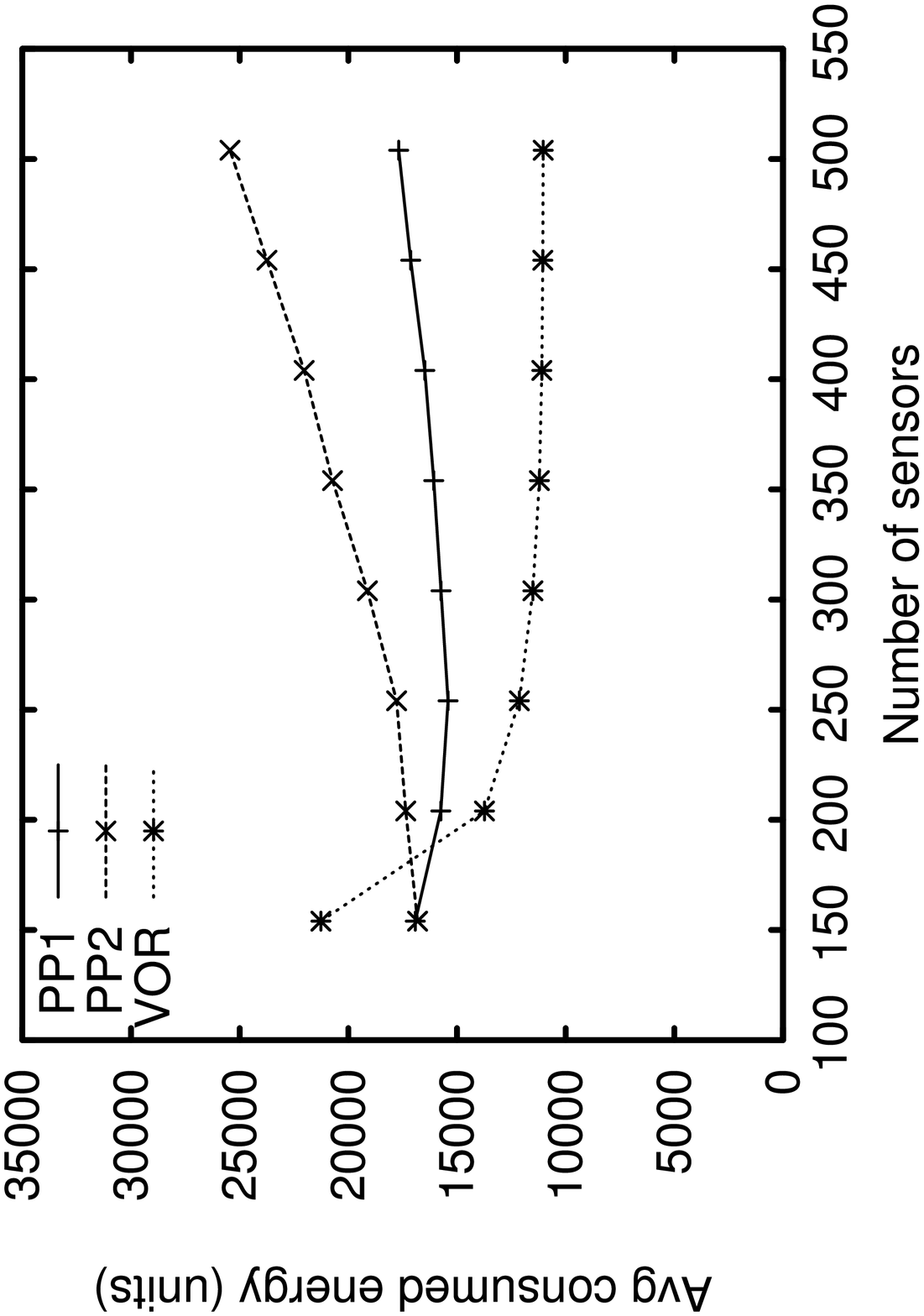}}}
\caption{Average energy consumption with trail (a), safe location (b) and central (c) initial deployment.} \label{fig:energy}
\end{center}
\end{figure*}

\begin{figure*}[h]
\centering
\begin{center}
\subfigure[]{\scalebox{0.32}{
\centering
\includegraphics[height = \textwidth, angle=-90]{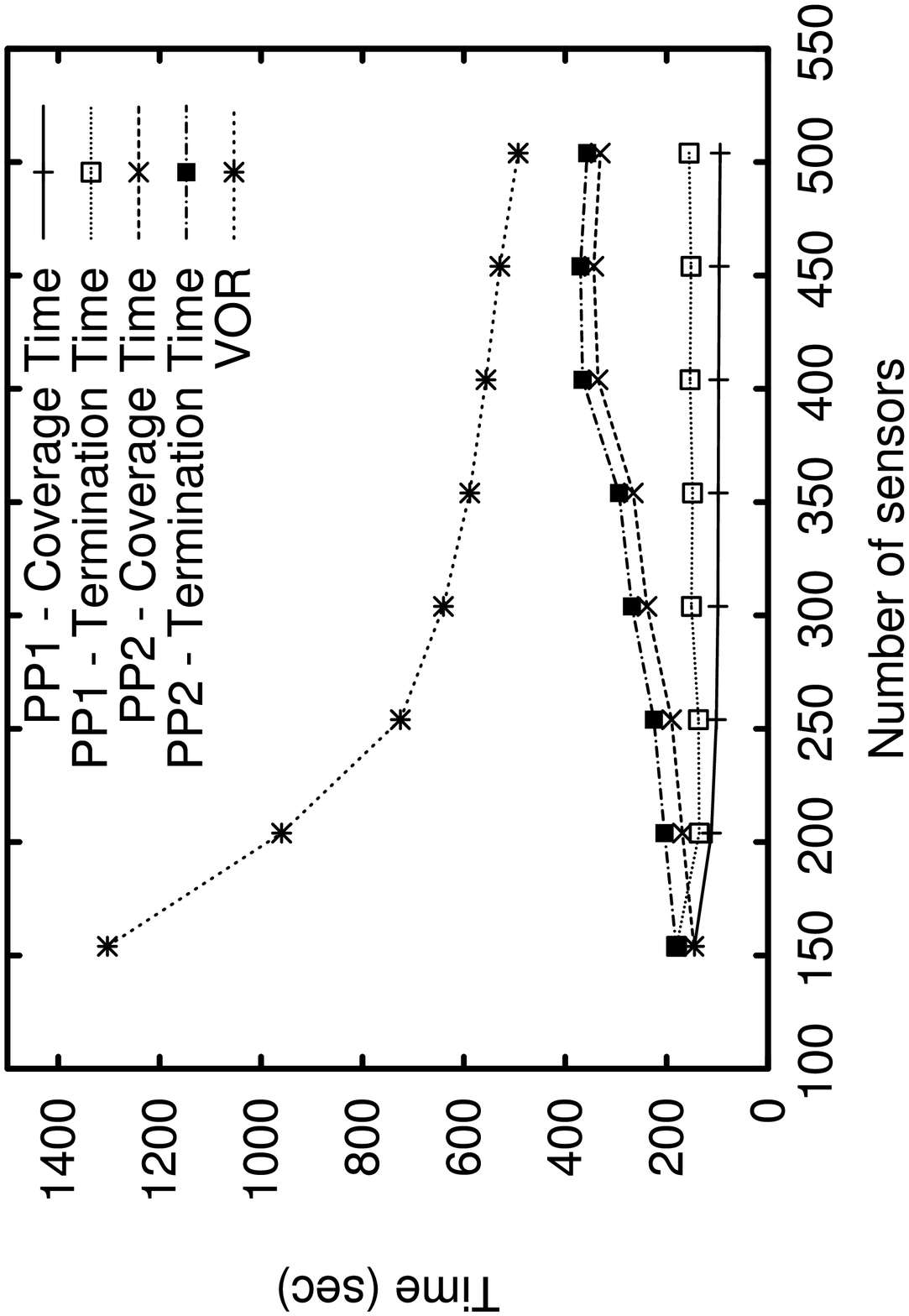}}}
\subfigure[]{\scalebox{0.32}{
\centering
\includegraphics[height = \textwidth, angle=-90]{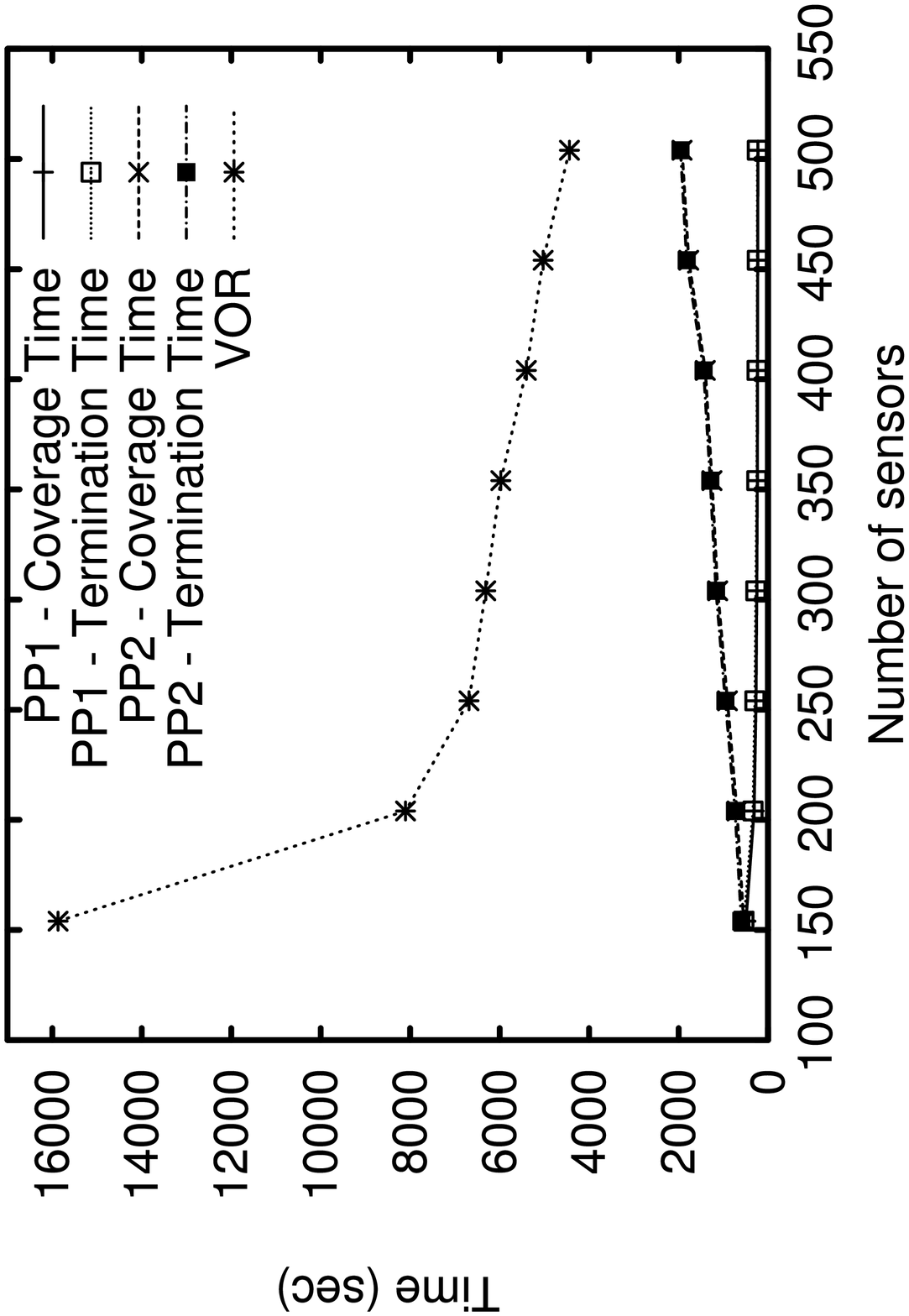}}}
\subfigure[]{\scalebox{0.32}{
\centering
\includegraphics[height = \textwidth, angle=-90]{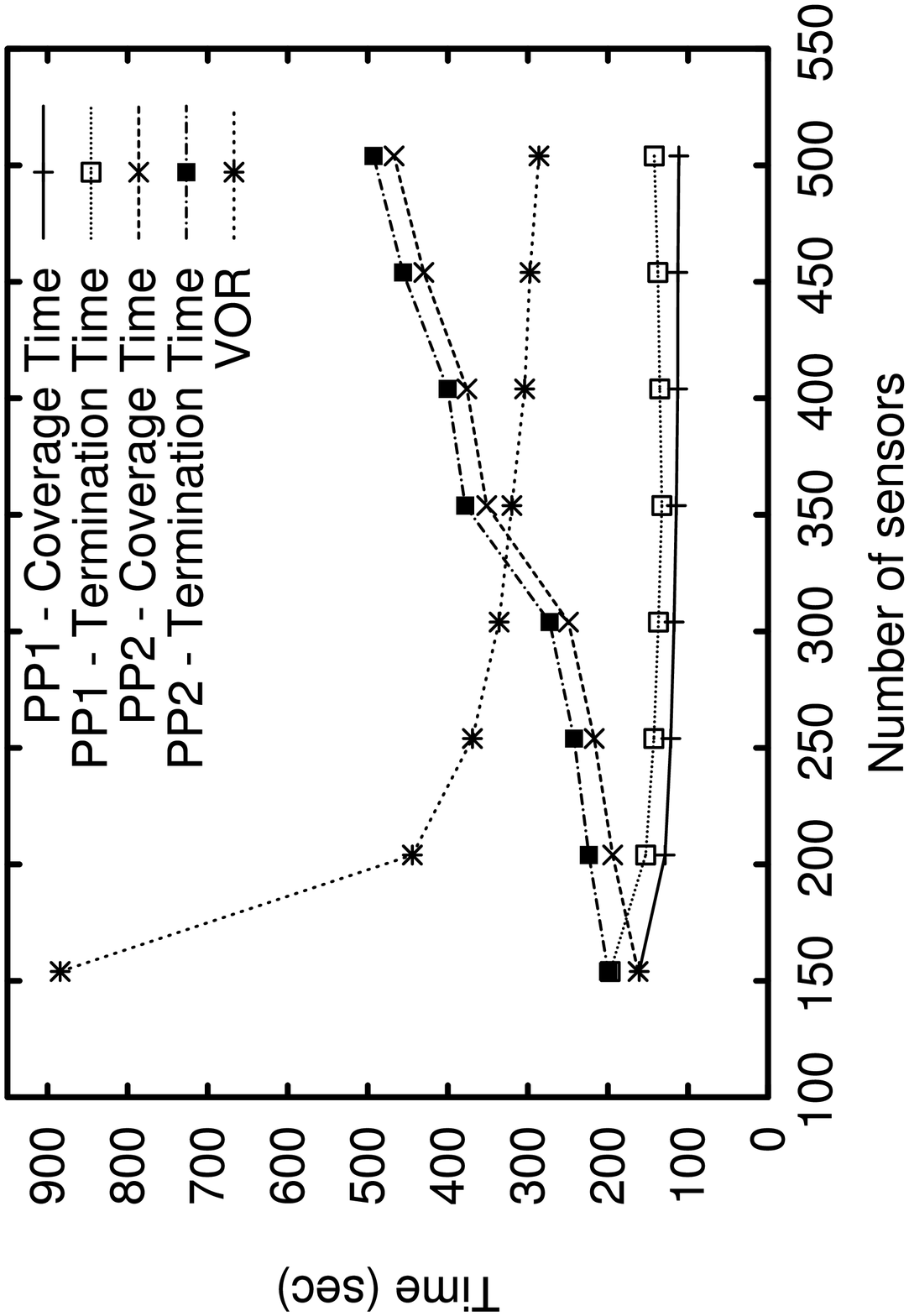}}}
\caption{Termination and coverage time with trail (a), safe location
(b) and central (c) initial deployment.} \label{fig:time}
\end{center}
\end{figure*}

PP1 presents a stable energy
consumption even when the number of sensors varies significantly. Indeed,
although only a fixed number of them are snapped, all sensors are
involved in the push and pull activities, thus improving the
coverage density and uniforming the energy consumption.

PP2 instead, shows that the consumed energy increases as the number of sensors grows.
Indeed, the more numerous are the sensors, the finer is the grid adopted by PP2. Therefore, in order to reach
their destination, the slaves traverse more hexagons, and  are
involved in a higher number of push activities than in the case of PP1. This
increases the number of starting/braking actions as shown in Figure
\ref{fig:startstop} and also increases the consumed energy consequently .

VOR$_\texttt{MM}$ consumes more energy than PP1 and PP2, when the
number of sensors is close to the tight value. Although sensors do
not traverse long distances (as shown in Figure \ref{fig:distance}),
the limit on the maximum moving distance per round required by VOR$_\texttt{MM}$
increases the number of starting/braking actions (see Figure
\ref{fig:startstop}), thus resulting in a high energy consumption.
This effect is particularly evident in the case of the safe location scenario
shown in Figure \ref{fig:energy}(b).

The average energy
consumption of  VOR$_\texttt{MM}$  decreases when increasing the number of sensors.
Notice that this is not
due to a better behavior of the algorithm but to the fact that a greater and greater fraction of sensors
do not move at all. This implies that a considerable number of sensors
consume a large amount of energy to move from overcrowded
regions toward uncovered areas. As soon as all the coverage holes
are eliminated, VOR$_\texttt{MM}$ stops, leaving some zones with very low density
coverage. These zones are prone to the occurrence of coverage holes in case of failures,
as the sensor density is very scarce and
the only sensors located in proximity have already consumed much energy during the
network deployment.

Although
PP2 consumes more energy when the number of the available sensors grows,
it guarantees a more uniform coverage with respect to VOR$_\texttt{MM}$ and PP1.
Moreover, the regularity of the final deployment enables the
use of topology control algorithms \cite{Poduri2007} that permit
a selective sensor activation, saving energy during the operative
phase which follows the deployment.
\subsubsection{Coverage completion and termination time}

Figure \ref{fig:time} shows the coverage and termination time for the three algorithms.
Notice that for VOR$_\texttt{MM}$ the termination and coverage completion times coincide, while
for \HC\ some more movements are still executed even after the coverage completion.

In the three considered scenarios, if the number of sensors available is close to the minimum needed to cover the AoI, VOR$_\texttt{MM}$ requires a very long time to complete the coverage, while \HC\ terminates  much earlier.
When the number of available sensors grows, VOR$_\texttt{MM}$ has a shorter termination time,
which instead remains stable under PP1.
On the contrary, the termination time of PP2 grows when the  number of available sensors increases.
In particular, VOR$_\texttt{MM}$ generally requires more time than PP1 to achieve its final coverage.
Only in the case of the central initial deployment, and for a high number of available sensors (N greater than 320) VOR$_\texttt{MM}$ terminates in a shorter time if compared with PP2 (see Figure \ref{fig:time}(c)).
This is due
to the fact that the termination time of PP2 is  delayed by the numerous hole triggers generated by the  pull activity.

It is worth noting that as already discussed, the safe location deployment,
constitutes a critical scenario for VOR$_\texttt{MM}$ as this algorithm works at
its best for more uniform initial sensor distributions. Indeed,
Figure \ref{fig:time}(b) shows that VOR$_\texttt{MM}$ requires much more time than
in the other sets of experiments, (a) and (c), to achieve its final deployment
(16000 sec in the case of safe location vs. 1400 sec in the case of trail, and
900 sec in the case of central initial deployment).

\section{Related Work}\label{sec:related_work}
There is an impressively growing interest in
self-managing systems, starting from
several industrial initiatives from IBM \cite{IBM-manifesto}, Hewlett Packard
\cite{HP-design-principles} and Microsoft  \cite{Microsoft_DSI}.
Various approaches have been proposed to self-deploy mobile sensors
although few of them can be actually considered autonomic.
The majority of these works are either based on the virtual force approach (VFA) or
on computational geometry techniques.

The virtual force approach (VFA)
\cite{Zou2003,Heo2005,Chen03} models the
interactions among sensors as a combination
of attractive and repulsive forces. This approach
requires the definition of thresholds to
determine the magnitude of the force one sensor exerts on another.
As shown in \cite{Chen03}, the VFA presents oscillatory sensor behavior.
This problem is addressed by defining further arbitrary
thresholds as stopping conditions. The tuning of
such thresholds is laborious and relies on an
off-line configuration.
In addition, it influences the resulting deployment, the
overall energy consumption and the convergence
rate.
Moreover, this approach does not guarantee the coverage in presence of
narrows.
A variation of the VFA is presented in \cite{Suckme2004}
where the introduction of two virtual forces guarantees better uniformity
by providing at least $K$ neighbors to each sensor.
Other approaches are inspired by physics as
well, such as \cite{Pac2006} and \cite{Kerr2004}.
In \cite{Pac2006}  the sensors are modelled as
particles of a compressible fluid and regulates
their movement mimicking a diffusive behavior. In
\cite{Kerr2004}  two approaches that make use of
gas theory to model sensor movements in presence of obstacles are proposed.
However the last three approaches still suffer
from oscillatory sensor behavior.
The work \cite{Chiasserini07} introduces a unified solution for
sensor deployment and relocation which also makes use of the virtual force approach.
This proposal
deals with a rather different problem with respect to ours. Indeed this work is designed for
an open environmental setting, namely where the target area is not determined prior to the deployment.

By contrast, the  techniques based on computational geometry, model
the deployment problem in terms of Voronoi diagrams or Delaunay triangulations.

The Voronoi approach (VOR$_\texttt{MM}$) is detailed in \cite{LaPorta06}.
According to this proposal, each sensor iteratively calculates its own
Voronoi polygon, determines the existence of coverage holes
and moves to a better position if necessary.
In this approach the relationship between the
transmission and the sensing range influences the
obtained performances by
either moving sensors toward already covered
positions or reducing the resulting covered area.
Furthermore, this approach is not designed to
improve the uniformity of an already complete
coverage.
According to \cite{Yang2007} each sensor makes a rough evaluation of the local density and calculates the movements needed to reach a final position that is as close
as possible to the points of a hexagonal tiling.
This is done by locally constructing the Delaunay
triangulation determined by the current sensor placement.
This approach suffers from similar limitations
to the VFA and does not guarantee
oscillation avoidance if proper threshold parameters
are not set.

In \cite{LaPorta04} the authors analyze the problem of sensor deployment
in a hybrid scenario, with both mobile and fixed sensors in the same environment.
They introduce the
general concept of logical movements.
Instead of moving iteratively, sensors
calculate their target locations based on a distributed iterative
algorithm, move logically, and exchange new logical
locations with their new logical neighbors. Actual movement
only occurs when sensors determine their final
locations, thus sparing energy by avoiding
zig-zag motions
at the expense of some more messaging activity.

A different approach is proposed in  \cite{Tan08}, which introduces a technique for sensor deployment for operative settings where the sensing radius is relatively large, hence
coverage does not necessarily imply connectivity.
These operative settings are not addressed by our paper which instead deals with the most common
types of devices for which the relation between the sensing and the transmission radius is such that
the achievement of a complete coverage  also guarantees network connectivity.

\section{Conclusions and future work} \label{sec:conclusions}

We proposed an original algorithm for mobile sensor self deployment named \HC.
According to our proposal, sensors autonomously coordinate their movements in order to achieve a complete and uniform coverage with moderate energy consumption.
The execution of \HC\
does not require any prior knowledge of the operating conditions
nor any manual tuning of key parameters, as sensors adjust their positions on the basis of
locally available information.
The proposed algorithm leads to a guaranteed final static and uniform coverage,
provided that there is a sufficient number of
sensors.
As experiments show, \HC\ outperforms previously proposed
approaches thanks to its ability to cover target
areas of even irregular shape.
Mechanisms for obstacle detection and avoidance are being investigated and considered as future extensions of this work.

\bibliographystyle{IEEEtran}

\bibliography{IEEEabrv,bibliografia}

\begin{thebibliography}{10}
\providecommand{\url}[1]{#1}
\csname url@samestyle\endcsname
\providecommand{\newblock}{\relax}
\providecommand{\bibinfo}[2]{#2}
\providecommand{\BIBentrySTDinterwordspacing}{\spaceskip=0pt\relax}
\providecommand{\BIBentryALTinterwordstretchfactor}{4}
\providecommand{\BIBentryALTinterwordspacing}{\spaceskip=\fontdimen2\font plus
\BIBentryALTinterwordstretchfactor\fontdimen3\font minus
  \fontdimen4\font\relax}
\providecommand{\BIBforeignlanguage}[2]{{%
\expandafter\ifx\csname l@#1\endcsname\relax
\typeout{** WARNING: IEEEtran.bst: No hyphenation pattern has been}%
\typeout{** loaded for the language `#1'. Using the pattern for}%
\typeout{** the default language instead.}%
\else
\language=\csname l@#1\endcsname
\fi
#2}}
\providecommand{\BIBdecl}{\relax}
\BIBdecl

\bibitem{Howard2002}
A.~Howard, M.~J. Mataric, and G.~S. Sukhatme, ``Mobile sensor network
  deployment using potential fields: A distributed, scalable solution to the
  area coverage problem,'' \emph{Proceedings of the International Symposium on
  Distributed Autonomous Robotics Systems, DARS}, 2002.

\bibitem{Zou2003}
Y.~Zou and K.~Chakrabarty, ``Sensor deployment and target localization based on
  virtual forces,'' \emph{Proc. IEEE INFOCOM}, 2003.

\bibitem{Heo2005}
N.~Heo and P.~Varshney, ``Energy-efficient deployment of intelligent mobile
  sensor networks,'' \emph{IEEE Transactions on Systems, Man and Cybernetics},
  vol.~35, 2005.

\bibitem{Chen03}
J.~Chen, S.~Li, and Y.~Sun, ``Novel deployment schemes for mobile sensor
  networks,'' \emph{Sensors}, vol.~7, 2007.

\bibitem{Suckme2004}
S.~Poduri and G.~S. Sukhatme, ``Constrained coverage for mobile sensor
  networks,'' \emph{Proc. of IEEE Int'l Conf. on Robotics and Automation
  (ICRA)}, 2004.

\bibitem{Pac2006}
M.~R. Pac, A.~M. Erkmen, and I.~Erkmen, ``Scalable self-deployment of mobile
  sensor networks; a fluid dynamics approach,'' \emph{Proc. of IEEE/RSJ Int'l
  Conf. on Intelligent Robots and Systems (IROS)}, 2006.

\bibitem{Kerr2004}
W.~Kerr, D.~Spears, W.~Spears, and D.~Thayer, ``Two formal fluid models for
  multi-agent sweeping and obstacle avoidance,'' \emph{Proc. of AAMAS}, 2004.

\bibitem{Chiasserini07}
M.~Garetto, M.~Gribaudo, C.-F. Chiasserini, and E.~Leonardi, ``A distributed
  sensor relocation scheme for environmental control,'' \emph{The ACM/IEEE
  Proc. of MASS}, 2007.

\bibitem{LaPorta06}
G.~Wang, G.~Cao, and T.~L. Porta, ``Movement-assisted sensor deployment,''
  \emph{IEEE Transaction on Mobile Computing}, vol.~6, 2006.

\bibitem{Yang2007}
M.~Ma and Y.~Yang, ``Adaptive triangular deployment algorithm for unattended
  mobile sensor networks,'' \emph{IEEE Transactions on Computers}, vol.~56,
  2007.

\bibitem{LaPorta04}
G.~Wang, G.~Cao, and T.~L. Porta, ``Proxy-based sensor deployment for mobile
  sensor networks,'' \emph{IEEE International Conference on Mobile Ad-hoc and
  Sensor Systems (MASS)}, 2004.

\bibitem{Tan08}
G.~Tan, S.~A. Jarvis, and A.-M. Kermarrec, ``Connectivity-guaranteed and
  obstacle-adaptive deployment schemes for mobile sensor networks,'' \emph{The
  IEEE Proc. of ICDCS}, 2008.

\bibitem{LaPorta_Relocation}
G.~Wang, G.~Cao, T.~L. Porta, and W.~Zhang, ``Sensor relocation in mobile
  sensor networks,'' \emph{Proc. of IEEE INFOCOM}, 2005.

\bibitem{Babaoglu2005}
O.~Babaoglu, M.~Jelasity, and A.~Montresor, ``Grassroots approach to
  self-management in large-scale distributed systems,'' \emph{Unconventional
  Programming Paradigms. Lecture Notes in Computer Science, Springer Verlag},
  vol. 3566, 2005.

\bibitem{Brass2007}
P.~Brass, ``Bounds on coverage and target detection capabilities for models of
  networks of mobile sensors,'' \emph{ACM Transactions on Sensor Networks},
  vol.~3, 2007.

\bibitem{ICNP2008}
N.~Bartolini, A.~Massini, and S.~Silvestri, ``P\&p protocol: local coordination
  of mobile sensors for self-deployment,''
  \emph{\\http://arxiv.org/abs/0805.1981}, 2008.

\bibitem{Poduri2007}
S.~Pattem, S.~Poduri, and B.~Krishnamachari, ``Energy-quality tradeoffs for
  target tracking in wireless sensor networks,'' \emph{Proc. of ACM
  International Conference on Information Processing in Sensor Networks (IPSN),
  Springer Lecture Notes in Computer Science}, vol. 2634, 2003.

\bibitem{Huang2005}
C.-F. Huang and Y.-C. Tseng, ``The coverage problem in a wireless sensor
  network,'' \emph{Elsevier Mobile Networks and Applications}, vol.~10, 2005.

\bibitem{Zhang2005}
H.~Zhang and J.~Hou, ``Maintaining sensing coverage and connectivity in large
  sensor networks,'' \emph{Ad Hoc \& Sensor Wireless Networks}, vol.~1, no.
  1-2, 2005.

\bibitem{opnet}
``Opnet technologies inc.'' \emph{http://www.opnet.com}.

\bibitem{micamote2}
``Performance measurements of mote sensor networks,'' \emph{ACM Symposium on
  Modeling Analysis and Simulation of Wireless and Mobile Systems (MSWiM)},
  2004.

\bibitem{IBM-manifesto}
``Ibm: the vision of autonomic computing,''
  \emph{\\http://www.research.ibm.com/autonomic/manifesto}.

\bibitem{HP-design-principles}
``Hewlett packard: Adaptive enterprise design principles,''
  \emph{\\http://h71028.www7.hp.com/enterprise/cache/80425-0-0-0-121.html}.

\bibitem{Microsoft_DSI}
``Microsoft: The drive to self-managing dynamic systems,''
  \emph{\\http://www.microsoft.com/windowsserversystem/dsi/default.mspx}.

\end{thebibliography}

\end{document}